\newcommand{\bbsm}{\left [\begin{smallmatrix}}      \newcommand{\besm}{\end{smallmatrix}\right ]}
\newcommand{\beq}{\begin{equation}}      \newcommand{\eeq}{\end{equation}}
\newcommand{\beqn}{\begin{eqnarray}}      \newcommand{\eeqn}{\end{eqnarray}}
\newcommand{\beqs}{\begin{eqnarray*}}      \newcommand{\eeqs}{\end{eqnarray*}}
\newcommand\noi{\noindent} 
\newcommand\omg{\Omega}   	\newcommand\bA{\mathbb A} \newcommand\bF{\mathbb F}		 \newcommand\bP{\mathbb P} \newcommand\bZ{\mathbb Z}
\newcommand\bK{\mathbb K}  	\newcommand\mC{\mathcal C}  	 \newcommand\mD{\mathcal D}
\newcommand\mP{\mathcal P}       \newcommand\mH{\mathcal H}	 \newcommand\mG{\mathcal G}
\newcommand\wlk{\wedge^{\ell} \mathbb K^m} 	
\newcommand\wl{\wedge^{\ell}  F^m} 		      \newcommand\wld{\left(\wedge^{\ell} F^m\right)^*}
\newcommand\wlv{\wedge^{\ell} V} 			      
\newcommand\wmlv{\wedge^{m-\ell} V}		      \newcommand\wldv{\wedge^{m-\ell} V^*}
\newcommand\Glm{G(\ell,m)} 		 \newcommand\glm{G(\ell,m)}		       
		 		 \newcommand\gkv{G_{\ell-1}(V)}
\newcommand\glv{G_{\ell}(V)} 		 \newcommand\gmv{G_{\ell+1}(V)}
\newcommand\Fq{{\mathbb F}_q}		 
\newcommand\PAut{\mathop{\rm PAut}}		 \newcommand\MAut{\mathop{\rm MAut}}
\newcommand\GAut{\mathop{\Gamma{\rm Aut}}}		 \newcommand\Aut{\mathop{\rm Aut}}
\newcommand\GL{\mathop{\rm GL}} 	       	\newcommand\gL{\mathop{\Gamma{\rm L}}}
\newcommand\PGL{\mathop{\rm PGL}} 		\newcommand\PgL{{\mathop{{\rm P}\Gamma{\rm L}}}}
 			\newcommand{\C}{C^\mathbb{A}(\ell,m)}
\newcommand\Z{\mathbb Z}  \newcommand\im{\mathop{\rm im}} 
\newtheorem{theorem}{Theorem}[section]
\newtheorem{lemma}[theorem]{Lemma}
\newtheorem{proposition}[theorem]{Proposition}
\newtheorem{corollary}[theorem]{Corollary}
\newtheorem{definition}[theorem]{Definition}
\def\@tocline#1#2#3#4#5#6#7{\relax
  \ifnum #1>\c@tocdepth % then omit
  \else
    \par \addpenalty\@secpenalty\addvspace{#2}%
    \begingroup \hyphenpenalty\@M
    \@ifempty{#4}{%
      \@tempdima\csname r@tocindent\number#1\endcsname\relax
    }{%
      \@tempdima#4\relax
    }%
    \parindent\z@ \leftskip#3\relax \advance\leftskip\@tempdima\relax
    \rightskip\@pnumwidth plus4em \parfillskip-\@pnumwidth
    #5\leavevmode\hskip-\@tempdima
      \ifcase #1
       \or\or \hskip 1em \or \hskip 2em \else \hskip 3em \fi%
      #6\nobreak\relax
    \dotfill\hbox to\@pnumwidth{\@tocpagenum{#7}}\par
    \nobreak
    \endgroup
  \fi}
\def\imod#1{\allowbreak\mkern10mu({\operator@font mod}\,\,#1)}
\def\imod#1{\allowbreak\mkern10mu({\operator@font mod}\,\,#1)}
\begin{document}
\title{Automorphism groups of Grassmann codes}
\author{Sudhir R. Ghorpade}    \address{Department of Mathematics,
Indian Institute of Technology Bombay, \newline \indent  Powai, Mumbai 400076, India}
\email{srg@math.iitb.ac.in}
 \author{Krishna V. Kaipa}  \address{Department of Mathematics,
Indian Institute of Science  Education and Research- \newline \indent  Bhopal, Bhauri, Bhopal 462030, India}
\email{kaipa@iiserb.ac.in}

\thanks{Both the authors are partially supported by the Indo-Russian project INT/RFBR/P-114 from the Department of Science \& Technology, Govt. of India. The first named author is also partially supported by an IRCC Award grant 12IRAWD009 from IIT Bombay.} 

\keywords{Grassmann variety, Schubert divisor, linear code, automorphism group, Grassmann code, affine Grassmann code}

\subjclass[2010]{14M15, 20B25, 94B05, 94B27}

\begin{abstract}
We use a theorem of Chow (1949) on line-preserving bijections of Grassmannians to determine the automorphism group of Grassmann codes. Further, we analyze the automorphisms of the big cell of a Grassmannian and then use it to settle an open question of Beelen et al. (2010) concerning the permutation automorphism groups of affine Grassmann codes. Finally, we prove an analogue of Chow's theorem for the case of Schubert divisors in Grassmannians and then use it to determine the automorphism group of linear codes associated to such Schubert divisors. In the course of this work, we also give an alternative short proof of MacWilliams theorem concerning the equivalence of linear codes and a characterization of maximal linear subspaces of Schubert divisors in Grassmannians. 
\end{abstract}

\maketitle

\section{Introduction} \label{sec1}

Let $\mC$ be an $[n,k]_q$-linear code, i.e., $\mC$ be a $k$-dimensional subspace of the $n$-dimensional vector space $\Fq^n$ over the finite field $\Fq$ with $q$ elements, where $q$ is a prime power. Automorphisms of $\mC$ are basically transformations of the ambient space $\Fq^n$ that preserve the code $\mC$ and the coding-theoretic properties of $\mC$. These come in different flavors: permutation automorphisms, monomial automorphisms, and semilinear automorphisms, thus giving rise to groups $\PAut(\mC)$,  $\MAut(\mC)$, and $\GAut(\mC)$ respectively. The last of these, being the most general, will simply be referred to as automorphisms and we will just write $\Aut(\mC)$ instead of $\GAut(\mC)$. Let us recall that $\PAut(\mC)$ consists of permutations $\sigma\in S_n$  such that $(c_{\sigma(1)},\dots,c_{\sigma(n)}) \in C$ for all $c=(c_1,\dots,c_n) \in C$.  Equivalently, $\PAut(\mC)$ consists of permutation matrices $P\in \GL(n, \Fq)$ such that $c P \in \mC$ for all $c\in \mC$. Likewise, 
$\MAut(\mC)$ consists of monomial matrices $M\in \GL(n, \Fq)$ (i.e., matrices of the form $PD$, where $P$ is a permutation matrix and $D$ a diagonal matrix in $\GL(n, \Fq)$) such that $cM\in \mC$ for all $c\in \mC$. Finally, $\Aut(\mC)$ consists of compositions $M\mu$ of (linear maps corresponding to) monomial matrices $M\in \GL(n, \Fq)$ and field automorphisms $\mu$ of $\Fq$ (giving maps of $\Fq^n$ into itself by acting on each of the coordinates) such that $cM\mu\in \mC$ for all $c\in \mC$.  The group $\Aut(\mC)$ of automorphisms of $\mC$ can be viewed as a subgroup of the group $\gL(n,\Fq)$ of semilinear transformations of $\Fq^n$. A classical theorem of MacWilliams \cite{MacWilliams} implies that $\Aut(\mC)$ (resp: $\MAut(\mC)$)  is the same as the group of semilinear (resp: linear) isometries of $\mC$, where by an isometry of $\mC$ we mean a bijection of $\mC$ onto itself that preserve the Hamming metric.

Determination of (either or each of) the automorphism group of a code is of considerable importance. Besides being of  interest in itself, this can be useful in decoding and often finds applications to the theory of finite groups. Complete determination of the automorphism group has been achieved thus far in a few select classes of linear codes, such as Hamming codes, Reed-Muller codes, etc., and we refer to the \emph{Handbook article} of Huffman \cite[Ch. 17]{pless} for more details. See also the more recent work of Berger \cite{Berger} where the case of projective Reed-Muller codes is studied. We consider in this paper the problem of complete determination of the automorphism group(s) of Grassmann codes and also related linear codes such as affine Grassmann codes and Schubert codes. Grassmann codes were introduced by Ryan (1987) in the binary case and by Nogin (1996) in the $q$-ary case, and have been of considerable recent interest (see, e.g., \cite{Nogin, GL, GPP} and the references therein).  However, as far as we know, automorphism groups of Grassmann codes and more generally, of Schubert codes have not been studied so far. Affine Grassmann codes were recently introduced and studied by Beelen, Ghorpade and H{\o}holdt \cite{BGH1, BGH2}. In \cite{BGH1}, it was shown that the permutation automorphism group of the affine Grassmann code $\C$, corresponding to positive integers $\ell, m$ with $\ell < m$, contains a subgroup isomorphic to a semidirect product $M_{\ell\times\ell'}(\Fq) \rtimes \GL_{\ell'}(\Fq)$ of the additive group of $\ell\times\ell'$ matrices over $\Fq$ with the multiplicative group of $\ell'\times\ell'$ nonsingular matrices over $\Fq$, where $\ell'=m-\ell$.
In \cite{BGH2}, this result is extended to show that $\PAut(\C)$ contains a larger group, say ${\mathfrak H}(\ell,m)$, that is essentially obtained by taking the product of the general linear group $\GL_{\ell}(\Fq)$ with the semidirect product $M_{\ell\times\ell'}(\Fq) \rtimes \GL_{\ell'}(\Fq)$. It was also shown in \cite{BGH2} that $\PAut(\C)$ can, in fact, be larger when $\ell'=\ell$, i.e., when $m=2\ell$, and it was remarked that the complete determination of $\PAut(\C)$ is an open question. In this paper, we settle this question and show that  $\PAut(\C)$ coincides with ${\mathfrak H}(\ell,m)$ when $m\ne 2\ell$, whereas if $m=2\ell$, then $\PAut(\C)$ is a certain semidirect product of ${\mathfrak H}(\ell,m)$ with $\Z/2\,\Z$. Moreover, we completely determine $\MAut(\C)$ as well as $\Aut(\C)$. This is, in fact, facilitated by going back to the
origins of affine Grassmann codes, namely, the Grassmann codes, and the corresponding projective systems, viz., (the $\Fq$-rational points of) Grassmann varieties with their canonical Pl\"ucker embedding. Thus, we first tackle the problem of determining  the automorphism group of the Grassmann code $C(\ell,m)$ corresponding to the Grassmannian $G(\ell,m)$ of $\ell$-dimensional subspaces of $\Fq^m$. Here it is actually more natural to deal with the corresponding projective system $\mP = G(\ell,m)$ viewed as a collection of $n = \# G(\ell, m)(\Fq)$ points in the Pl\"ucker projective space $\bP^{k-1} = \bP(\wedge^\ell \Fq^m)$, where $k = \binom{m}{\ell}$. Before proceeding further with the description of our results on automorphisms of Grassmann codes and Schubert codes, it seems worthwhile to digress to discuss the notions of projective systems and their automorphism groups. 

The notion of a projective system was introduced by Tsfasman and Vl\u{a}du\c{t} as an algebraic geometric counterpart of linear codes (see \cite[p. 67]{Tsfasman} for relevant historical and bibliographical information). An $[n,k]_q$-projective system is just a collection, say $\mP$, of $n$ not necessarily distinct points in $\bP^{k-1} = \bP^{k-1}(\Fq)$. The (semilinear) automorphisms of $\mP$ are projective semilinear isomorphisms  (known in classical literature as \emph{collineations}) $g:\bP^{k-1} \to \bP^{k-1}$ that preserve $\mP$ (together with the multiplicities). These form a group that we denote by $\Aut(\mP)$ and call the \emph{automorphism group} of $\mP$; it is a subgroup of the projective semilinear group $\PgL(k, \Fq)$. An $[n,k]_q$-projective system $\mP$ corresponds to an $[n,k]_q$-linear code $\mC$ and $\Aut(\mP)$ corresponds to $\Aut(\mC)$; more precisely, $\Aut(\mC)$ is a central extension of $\Aut(\mP)$ by the subgroup $\Fq^{\times}$ of scalar matrices in $\GL(k,\Fq)$. Likewise, the projective linear  isomorphisms (known in classical literature as \emph{projectivities}) among $\Aut(\mP)$ form a subgroup $\MAut(\mP)$ of $\PGL(k,\Fq)$ that corresponds to $\MAut(\mC)$. Many, but perhaps not all, of the notions and results of classical coding theory can be translated in the language of projective systems and we refer to the book \cite{Tsfasman} for the current state of art. It may be noted, in particular, that a ``Lang-like'' problem \cite[Problem 1.1.9]{Tsfasman} in this book asks to rewrite existing books on coding theory in terms of projective systems, and it is mentioned that the authors consider this to be a rather important and interesting research problem. In the course of our exposition here, we take a small step toward solving this problem by considering automorphisms in the setting of projective systems, and as a dividend, give a short and simple proof of the classical theorem of MacWilliams mentioned earlier. 

Now let us return to a description of our main results. First, we deduce from a theorem of Chow that for the projective system $\mP = G(\ell,m)$ corresponding to Grassmann codes,  $\Aut(\mP)$ is as follows. If $m\ne 2\ell$, then $\Aut(\mP)$ is precisely $\PgL(m,\Fq) = \PgL(\Fq^m)$, viewed as a subgroup of $\PgL(k,\Fq) = \PgL(\wedge^{\ell}\Fq^m)$ by identifying an $m\times m$ nonsingular matrix with its $\ell^{\rm th}$ compound matrix, or in other words, by identifying a linear map $f:\Fq^m\to \Fq^m$ with its $\ell^{\rm th}$ exterior power $\wedge^{\ell} f : \wedge^{\ell}\Fq^m \to \wedge^{\ell}\Fq^m$. On the other hand, if $m=2\ell$, then one has to reckon with a \emph{correlation} and in this case, $\Aut(\mP)$ is a semidirect product of $\PgL(m,\Fq)$ with $\Z/2\Z$. We determine the corresponding group $\Aut(C(\ell,m))$ via central extensions, or more precisely, using group cohomology and certain subgroups of roots of unity in $\Fq^{\times}$; see Theorems \ref{AutP} and \ref{AutC} for a more precise statement. Further, one obtains $\MAut(\mP)$ essentially by replacing $\PgL$ with $\PGL$ in the above description. Next, we take up the case of affine Grassmann codes. To this end, we analyze 
the automorphisms of the so-called big cell of the Grassmannian, and show that they can be lifted to automorphisms of the full Grassmannian and the lifts preserve a certain stratification of the complementary Schubert divisor (Theorem \ref{aut_w0}). Using this and the characterization of the automorphisms of Grassmannians, we can conclude that the semilinear and the monomial automorphism groups of affine Grassmann codes are essentially given by a maximal parabolic subgroup of $\GL(V)$ modulo scalars, although one has to again make a distinction between the cases $m\ne 2\ell$ and $m=2\ell$. (See Theorems \ref{thm:AutAff} and \ref{thm:PAutAff}). Finally, we consider the codes associated to Schubert divisors in Grassmannians; these are a special case of Schubert codes introduced in \cite{GL} and are in a sense complementary to affine Grassmann codes. We show, in fact, that the automorphism groups of these codes are isomorphic to that of affine Grassmann codes, and that the automorphisms of Schubert divisors in a Grassmannian can be identified with the automorphisms of the big cell of the Grassmannian.  However, proving this in the setting of an arbitrary ground field (and in particular, a finite field) seems quite nontrivial and a number of auxiliary results are needed. In particular, we require a classification of the maximal linear subspaces of Schubert divisors. In fact, we determine the maximal linear subspaces of the Schubert divisor as well as those of its big cell in Lemmas ~\ref{max_lin_schubert}, and ~\ref{lin_W_1}, and these may be of independent interest.

A key ingredient in our determination of the automorphism group of Grassmann codes is a classical result of Chow \cite{Chow} (Theorem~\ref{chow_thm}) that characterizes line-preserving bijections of Grassmannians and can be viewed as a remarkable generalization of the fundamental theorem of projective geometry. This result has also been rediscovered by others (e.g., Nemitz \cite{Nemitz}) and it has been extended to the case when the base field is replaced by a division ring (see, e.g., \cite{Wan}). For a proof of Chow's theorem, one may refer to the books of Pankov \cite{Pankov} and Wan \cite{Wan}. The result mentioned earlier about  
the automorphisms of the projective system $\mP = G(\ell,m)$ corresponding to Grassmann codes is, in fact, just a paraphrasing of Chow's theorem. However, the corresponding result about the automorphisms of Schubert divisors in Grassmannians appears to be new.

\section{Projective Systems and Automorphisms of Codes}     \label{sec2}

In this section we consider the notions of equivalence and automorphisms  of linear codes from the point of view of  projective systems.  The purpose is to settle notations and terminology used throughout this paper. For a more leisurely treatment one may refer to \cite{Tsfasman}. Fix a prime power $q$ and let $F$ denote the finite field $\bF_q$. Let $\mC \subset F^n$  denote a nondegenerate $[n,k]_q$-linear code. The standard basis of $F^n$ will be denoted $\{e_1, \dots, e_n\}$ and the associated dual basis will be denoted $\{e^1, \dots, e^n\}$. The nondegeneracy of $\mC$ implies that the restrictions of the functionals $e^1, \dots, e^n$ to $\mC$ are nonzero and they span $\mC^*$. Thus if we let $\mP$ denote the collection of $n$ points of $\bP(\mC^*)$  corresponding to the restrictions 
$e^1_{|\mC}, \cdots, e^n_{|\mC}$, then $\mP$ consists of $n$ (not necessarily distinct) points in $\bP(\mC^*)$ not lying in a hyperplane of the projective space $\bP(\mC^*)$. We say that $\mP \subset \bP(\mC^*)$ is the \emph{projective system} associated with $\mC$. Note that $\dim(C^*) =k$ and thus $\bP(\mC^*)$ may be identified with $\bP^{k-1}_{\Fq}$. 

If $\mD \subset \mC$ is an $r$-dimensional subcode of $\mC$, then the subspace  of $\mC^*$ consisting of functionals vanishing on $\mD$ has codimension $r$. Let $L_{\mD} \subset \bP(\mC^*)$ be its projectivization, where, by convention, $L_{\mC} = \varnothing$. Using this correspondence $\mD \mapsto L_{\mD}$ we identify $r$-dimensional subcodes of $\mC$ with codimension $r$ subspaces of $\bP(\mC^*)$. The weight of an $r$-dimensional subcode $\mD$, denoted $\Vert \mD \Vert $, is the cardinality $|\mP \!\setminus\! L_{\mD}|$.  In other words, $\Vert \mD \Vert $ is the number of elements of $\mP$ (counting multiplicities) that are not in $L_{\mD}$. The weight of a nonzero codeword $v \in \mC$, denoted $\Vert v \Vert$,  is the weight of the one-dimensional subcode generated by it.  
We recall the following elementary formula \cite[p. 7]{Tsfasman} relating the weight of a subcode to that of its one-dimensional subcodes:  
\begin{equation} \label{eq:||D||}
\Vert \mD \Vert  = \frac{1}{q^{r-1}} \sum_{[v] \in \bP(\mD)} \Vert v \Vert . 
\end{equation}
Geometrically,  $\Vert \mD \Vert = |\mP \!\setminus\! L_{\mD}|$, and   $\mP \!\setminus\! L_{\mD} = \cup_{\mathcal H \supset L_{\mD}} \mP \!\setminus\! \mathcal H $, where the union is over hyperplanes $\mathcal H$ in $\bP(\mC^*)$ containing $L_{\mD}$. The  formula \eqref{eq:||D||} now follows from the fact that  for each 
 $x \in \mP \! \setminus \! L_{\mD}$, there are $q^{r-1}$ hyperplanes containing $L_{\mD}$ but not $x$.

We recall the notion of  a semilinear map $f:V \to V'$ between vector spaces over any field $\bK$. We say $f$  is   $\mu$-semilinear  if there is a field automorphism $\mu$ of $\bK$ with $f(u+c v) = f(u)+\mu(c) f(v)$ for all $u,v \in V$ and $c \in \bK$. In particular an invertible $\mu$-semilinear transformation  of $\bK^m$ has  the form $x \mapsto A \, \mu(x)$ where $A \in \GL(m,\bK)$ and $\mu$ acts on each entry of $x$.  The group of  semilinear automorphisms of a vector space $V$ will be denoted  $\gL  (V)$, and also by  $\gL (m,\bK)$ when $V = \bK^m$. For a semilinear map $f:V \to V'$, we denote the corresponding projective semilinear map from $\bP(V)$ to $\bP(V')$ by $[f]$.  By a projective semilinear isomorphism from $\bP(V)$ to $\bP(V')$ we mean the map induced by a semilinear isomorphism from $V$ to $V'$.  We will denote the group of projective semilinear isomorphisms  from $\bP(V)$ onto itself by  $\PgL(V)$, and we note that $\PgL(V)$ is the factor group of $\gL (V)$ by the subgroup $\bK^{\times}$ of transformations of $V\to V$ of the form $ v \mapsto c v$  as $c$ varies over the nonzero elements of $\bK$. 
The semilinear isomorphisms of finite dimensional projective spaces are characterized by the following  (see, e.g., \cite[Theorem 2.26]{Artin}):

\begin{theorem}[Fundamental Theorem of Projective Geometry]  \label{fund_thm}
Let $n \geq 3$. The group of  bijective self maps of the projective space $\bP(\bK^n)$ which take lines to lines, is  the group $\PgL (n, \bK)$.
\end{theorem}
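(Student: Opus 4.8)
The plan is to establish the nontrivial inclusion, the reverse being immediate: any semilinear automorphism of $\bK^n$ carries a linear subspace onto a linear subspace of the same dimension, so the induced element of $\PgL(n,\bK)$ carries lines onto lines, whence $\PgL(n,\bK)$ lies inside the group of line-preserving bijections of $\bP(\bK^n)$. Conversely, given a bijection $\varphi$ of $\bP(\bK^n)$ that carries lines to lines, I would first upgrade the hypothesis to: $\varphi$ carries every line \emph{onto} a line (and hence so does $\varphi^{-1}$). For if $\varphi(L)$ were a proper subset of a line $L'$, pick $p\notin L$ with $\varphi(p)\in L'\setminus\varphi(L)$; then for each $q\in L$ the line $\overline{pq}$ is carried into $L'$, so the plane $\langle p,L\rangle$ is carried into $L'$, contradicting injectivity. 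It follows that $\varphi$ preserves collinearity in both directions, and since any projective subspace is the closure of a spanning set under the operation of adjoining all points lying on lines through two already chosen points, $\varphi$ carries every projective subspace onto a subspace of the same dimension, preserves intersections of subspaces, and in particular permutes the hyperplanes.

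Next I would pass to an affine model. Fix a hyperplane $H$; then $\varphi(H)$ is again a hyperplane, and composing $\varphi$ with a suitable element of $\PGL(n,\bK)\subseteq\PgL(n,\bK)$ we may assume $\varphi(H)=H$. Then $\varphi$ restricts to a bijection $\psi$ of the affine space $\bA=\bP(\bK^n)\setminus H$ carrying affine lines to affine lines; since $\varphi$ preserves intersections and fixes $H$ as a set, $\psi$ also preserves parallelism (two affine lines being parallel precisely when their projective closures meet in $H$). Composing further with a translation --- which extends to an element of $\PGL(n,\bK)$ --- we may assume $\psi$ fixes a point, taken as the origin $0$, so that $\bA$ is identified with $\bK^{n-1}$.

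The core of the argument is to show that $\psi$ is semilinear. Since $\psi$ fixes $0$ and preserves lines and parallelism, the parallelogram construction of vector addition ($a+b$ as the fourth vertex of the parallelogram on $0,a,b$, with an easy detour when $0,a,b$ are collinear) is respected, so $\psi$ is additive. For each $a\neq 0$, $\psi$ carries the line $\bK a$ onto $\bK\psi(a)$ and is additive on it, hence $\psi(ta)=\mu_a(t)\,\psi(a)$ for an additive bijection $\mu_a\colon\bK\to\bK$ fixing $1$. The decisive claim is that each $\mu_a$ is also multiplicative and that all the $\mu_a$ equal a single field automorphism $\mu$ of $\bK$: this is the classical von Staudt ``algebra of throws'' argument, in which the product of two scalars is realized as an incidence-theoretic construction --- a chain of perspectivities through auxiliary points --- that $\psi$ must respect, while comparing different lines through $0$ forces the automorphisms to coincide. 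Granting this, $\psi(a+cb)=\psi(a)+\mu(c)\,\psi(b)$, so $\psi$ is $\mu$-semilinear; it extends to the $\mu$-semilinear automorphism $(\lambda,w)\mapsto(\mu(\lambda),\psi(w))$ of $\bK^n=\bK\oplus\bK^{n-1}$, whose projectivization fixes $H$ and agrees with $\varphi$ on $\bA$, and therefore equals $\varphi$ everywhere, since two line-preserving bijections that both fix $H$ and agree on $\bA$ must agree (a point of $H$ is the common point of the closures of any two affine lines meeting there). Undoing the normalizations yields $\varphi\in\PgL(n,\bK)$.

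The main obstacle is the von Staudt step above: proving that the bijection of $\bK$ induced by $\psi$ on a coordinate line is a field automorphism, the same one for every coordinate line. This is exactly where the hypothesis $n\geq3$ enters, since it makes $\bP(\bK^n)$ Desarguesian and thereby guarantees the perspectivity chains encoding $+$ and $\times$ behave as required (the statement genuinely fails for $n=2$). For a prime field $\bK$ the only field automorphism is the identity, so once additivity of $\psi$ is in hand the rest is automatic; the substance of the theorem lies in the general case.
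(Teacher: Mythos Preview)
The paper does not prove this theorem; it merely states it and cites Artin's \emph{Geometric Algebra} (Theorem~2.26) as a reference. So there is no ``paper's own proof'' to compare against.

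That said, your sketch is the standard classical argument and is essentially correct. The reduction to the affine picture by fixing a hyperplane, the parallelogram construction for additivity, and the von Staudt algebra-of-throws step for multiplicativity constitute exactly the route taken in Artin and in most textbook treatments. Your preliminary upgrade from ``lines to lines'' to ``lines onto lines'' is also fine: the plane $\langle p,L\rangle$ is the union of the lines $\overline{pq}$ for $q\in L$, each of which lands in $L'$ since it contains two points of $L'$, and a projective plane over any field has strictly more points than a projective line, contradicting injectivity. One small remark: your final clause about two line-preserving bijections that fix $H$ and agree on $\bA$ necessarily agreeing on $H$ is correct but deserves a word of care---you need two \emph{non-parallel} affine lines through a given point at infinity to pin it down, and such lines exist because $n\ge 3$ gives $\dim\bA\ge 2$.
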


We will also need the notion of the transpose of a $\mu$-semilinear map $f:V \to V'$ between vector spaces over any field $\bK$. The \emph{transpose}  $f^*:V'^* \to V^*$  is the unique semilinear map satisfying the property: 
\[
\langle \alpha, f(v) \rangle = \mu (\langle f^* \alpha, v \rangle) \quad \text{ for all } \alpha \in V'^* \text{ and } v \in V,
\]  
where $\langle ~,~ \rangle$ denotes the natural pairing $V^* \times V \to \bK$. If $f$ is represented as $x \mapsto A \mu(x)$ in coordinates with respect to some choice of bases of $V$ and $V'$, then the transpose is  represented in coordinates with respect to the dual bases by $\xi \mapsto \mu^{-1} (A^t \xi)$ where $A^t$ is the usual transpose and $\mu^{-1}$ acts entry-wise on matrices. For a projective semilinear isomorphism $g:\bP(V) \to \bP(V')$, we define a transpose as follows. Let $f: V \to V'$ be any semilinear map such that $[f]=g$, and let $f^*:V'^* \to V^*$ be its transpose. Then the projective map $[ f^*]:\bP(V'^*) \to \bP(V^*)$ induced by $f^*$ is independent of the choice of $f$ and it will be denoted by $g*$ and called the \emph{transpose} of $g$. 

We recall (from \cite{Tsfasman}) the notions of equivalence and isomorphisms for  codes and for projective systems. We allow  all maps to be semilinear.

\begin{definition}  \label{equivalence}
{\rm
Let $\mC, \mC'$ be nondegenerate $q$-ary linear codes and let $\mP \subset \bP (\mC^*)$ and $\mP' \subset \bP (\mC'^*)$ be the corresponding projective systems. We say that $\mP$ and $\mP'$ are \emph{equivalent} if there is a projective semilinear isomorphism $g: \bP(\mC^*) \to \bP(\mC'^*)$ which carries $\mP $ to $\mP'$.
We say that the codes $\mC$ and $\mC'$ are  \emph{equivalent}, and write  $\mC \sim \mC'$, if the corresponding projective systems $\mP$ and $\mP'$ are equivalent. 
}
\end{definition}

Note that since the points of $\mP, \mP'$ need not be distinct, when we say $g$ carries $\mP$ to $\mP'$, we mean that for each $p\in \mP$, the multiplicity of $g(p)$ in $\mP'$ equals the multiplicity of $p$ in $\mP$. In particular, if two codes are equivalent, then their lengths and dimensions coincide. Suppose both $\mP$ and $\mP'$ consist of $n$ elements, i.e., the length of $\mC$ as well as $\mC'$ is $n$. Let Isom$(F^n)$ be the subgroup of $\gL (n,F)$ consisting of transformations preserving the Hamming metric on $F^n$. They are of the form $x \mapsto A \mu(x)$   where $A$ is a monomial matrix and $\mu$ a power of the Frobenius automorphism. If $\mC \sim \mC'$ with $g: \bP(\mC^*) \to \bP(\mC'^*)$ as above, then there is a unique $g^{\flat} \in $ Isom$(F^n)/F^{\times}$ that takes $\mC$ to $\mC'$. 
To see this, let $\tilde g:\mC^* \to \mC'^*$ be a semilinear isomorphism inducing $g$, then $g(\mP) = \mP'$ implies that $\tilde g$ takes  
$e^j_{|\mC}$ to $a_j \, e^{\sigma(j)}_{|\mC'}$ for some permutation  $\sigma \in S_n$ and some $a_j \in F^{\times}$.  If $\tilde g$ is $\mu$-semilinear, then there is a unique $\mu$-semilinear automorphism $\phi^*$ of $(F^n)^*$ whose restriction to $C^*$ is $\tilde g$. It is given by $\phi^*(\xi) = A \mu(\xi)$ where  $A$ is the monomial matrix in $\GL(n, F)$ whose $(i,j)$th entry is $a_j \delta_{i,\sigma(j)}$. Thus $ (\tilde g^{-1})^*: \mC \to \mC'$ is the restriction to $\mC$ of the isometry $\phi^{-1}$ of $F^n$ that takes $x \mapsto  B \mu(x)$ where  $B=(B_{ij}) \in \GL(n, F)$ is given by $B_{ij} = (1/a_j)  \, \delta_{i,\sigma(j)}$. Since $\tilde g$ was ambiguous up to a scalar multiple, so is the isometry $\phi^{-1}$. The class of $\phi^{-1}$ in Isom$(F^n)/F^{\times}$ is the desired map $g^{\flat}$. 

\begin{definition}  \label{aut_def}
{\rm 
Let $\mC \subset F^n$  be a nondegenerate $[n,k]_q$-linear code, and $\mP$  the corresponding projective system.   
The \emph{automorphism group}  $\Aut(\mC)$ of the code $\mC$ is the group of transformations of $F^n$ onto itself that take $\mC$ to $\mC$   
and have the form $x \mapsto A \mu(x)$,  where $A$ is a monomial matrix in $\GL(n,F)$ and $\mu$ a field automorphism of $F$.
The \emph{automorphism group} $\Aut (\mP)$ of the projective system $\mP$ is  the subgroup of $\PgL(\mC^*)$ of transformations taking $\mP$ to itself. 
}
\end{definition}

We remark that for $\mC$ and $\mP$ as above,  $\Aut (\mC)$ is a central extension of $ \Aut (\mP)$ by $F^{\times}$, i.e., $\Aut (\mP) \simeq \Aut (\mC)/F^{\times}$, where $F^{\times}$ is the subgroup of scalar matrices in $\GL(n,F)$. The isomorphism between $\Aut(\mP)$ and $\Aut(\mC)/F^{\times}$ is given by the correspondence $g \mapsto g^{\flat}$ described above. 

In addition to the equivalence relation $\sim$ of codes (Definition \ref{equivalence}), there is another  natural notion of equivalence of nondegenerate $[n,k]_q$ codes:  

\begin{definition}  \label{approx_def} 
{\rm
Let $\mC, \mC'$ be nondegenerate $q$-ary linear codes. Define $\mC \approx \mC'$ if there is a weight preserving bijection $f:\mC \to \mC'$ carrying  $r$-dimensional subcodes of $\mC$ to $r$-dimensional subcodes of $\mC'$ for all $r\ge 0$.
}
\end{definition}

We now give a geometric proof of the following well-known result of MacWilliams. The proof below seems much shorter than the standard proofs (cf. \cite{MacWilliams,Bogart,Ward}) and illustrates the advantage of the geometric framework of projective systems.

\begin{theorem}[MacWilliams]  \label{MacWilliams}
Let $\mC, \mC' $ be nondegenerate $q$-ary linear codes of dimension $\geq 3$. Then:  $\;\mC \sim \mC' $   if and only if $\mC \approx \mC'$.
\end{theorem}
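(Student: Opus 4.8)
The plan is to treat the two implications separately, the straightforward one being $\mC \sim \mC' \Rightarrow \mC \approx \mC'$. Here I would invoke the discussion preceding Definition~\ref{aut_def}: an equivalence of the projective systems $\mP$ and $\mP'$ produces an isometry $\phi^{-1}$ of $F^n$ (a monomial matrix followed by a power of the Frobenius) whose restriction $f := \phi^{-1}|_\mC : \mC \to \mC'$ is a semilinear isomorphism. Since $\phi^{-1}$ fixes $0$ and preserves the Hamming metric, $f$ is weight preserving, and a semilinear isomorphism automatically carries $r$-dimensional subcodes onto $r$-dimensional subcodes for every $r \ge 0$; thus $\mC \approx \mC'$.

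For the converse, suppose $f : \mC \to \mC'$ is a weight preserving bijection carrying $r$-dimensional subcodes to $r$-dimensional subcodes for all $r$. First I would observe that, being a bijection, $f$ preserves set-theoretic inclusion among subcodes, and that for each $r$ the induced map on $r$-dimensional subcodes is injective; since the number of $r$-dimensional subcodes of a linear code depends only on $r$ and the dimension of the code (the common value $k$ here), this map is in fact bijective. Taking $r = 0$ forces $f(0) = 0$, and then, using $r = 1$ and $r = 2$, the assignment $[v] \mapsto [f(v)]$ is a well-defined bijection $\bP(\mC) \to \bP(\mC')$ carrying lines onto lines. Since $k \ge 3$, the Fundamental Theorem of Projective Geometry (Theorem~\ref{fund_thm}) shows that this map equals $[h]$ for some semilinear isomorphism $h : \mC \to \mC'$. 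As $[h]$ agrees with $[v] \mapsto [f(v)]$, every nonzero $v \in \mC$ satisfies $h(v) = \lambda(v)\,f(v)$ with $\lambda(v) \in F^{\times}$, and since rescaling does not change the support of a vector, $h$ is again weight preserving.

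It then remains to produce from $h$ an equivalence of the projective systems $\mP \subset \bP(\mC^*)$ and $\mP' \subset \bP(\mC'^*)$; the natural candidate is $g := [(h^{-1})^*] : \bP(\mC^*) \to \bP(\mC'^*)$, whose inverse is $[h^*]$. Writing $h$ as $\mu$-semilinear and setting $\xi_j := h^*(e^j_{|\mC'}) \in \mC^*$ (these are nonzero, since $h^*$ is bijective and the $e^j_{|\mC'}$ are nonzero by nondegeneracy of $\mC'$), the defining property of the transpose gives $\langle e^j_{|\mC'}, h(v)\rangle = \mu(\langle \xi_j, v\rangle)$ for all $v \in \mC$, so the $j$-th coordinate of $h(v)$ is $\mu(\langle \xi_j, v\rangle)$, and hence $\Vert h(v)\Vert = \#\{j : \langle \xi_j, v\rangle \ne 0\}$ because $\mu$ is a field automorphism. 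Comparing this with $\Vert v\Vert = \#\{j : \langle e^j_{|\mC}, v\rangle \ne 0\}$ and using $\Vert h(v)\Vert = \Vert v\Vert$, I get that the two multisets $\mathcal Q := \{[\xi_1],\dots,[\xi_n]\}$ and $\mP = \{[e^1_{|\mC}],\dots,[e^n_{|\mC}]\}$ of points of $\bP(\mC^*)$ have, for every hyperplane $H$ of $\bP(\mC^*)$, the same number of points lying outside $H$ (counted with multiplicity). Finally, the proof of the identity~\eqref{eq:||D||} applies verbatim to an arbitrary multiset of points, and when applied to the codimension-one subcode $\mD = \ker\xi$ — for which $L_\mD$ is the single point $[\xi]$ — it recovers the multiplicity of $[\xi]$ in the multiset as $n - \tfrac{1}{q^{k-2}}\sum_{[v]\in\bP(\ker\xi)}(\text{number of points of the multiset outside }H_v)$, which depends only on the hyperplane-complement counts. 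Therefore $\mathcal Q = \mP$ as multisets, so $[h^*]$ carries $\mP'$ onto $\mP$, i.e.\ $g$ carries $\mP$ onto $\mP'$, whence $\mC \sim \mC'$ by Definition~\ref{equivalence}.

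The main obstacle, and the real content of the argument, is this last step: one has to recognize that weight preservation of $h$ becomes, after passing to the transpose $h^*$, the coincidence of the hyperplane-complement counting functions of two point multisets in the dual projective space, and then use the fact — a discrete Radon-transform injectivity already implicit in~\eqref{eq:||D||} — that a multiset of points is determined by these counts. The earlier reduction to the Fundamental Theorem of Projective Geometry is routine; the one delicacy there is the bookkeeping with multiplicities and with the semilinear (rather than merely linear) nature of all the maps involved.
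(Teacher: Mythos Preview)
Your argument is correct, but it takes a different route from the paper's. For the nontrivial implication $\mC \approx \mC' \Rightarrow \mC \sim \mC'$, the paper works \emph{directly in the dual space}: it uses the correspondence $\mD \mapsto L_{\mD}$ applied to $(k-1)$-dimensional subcodes to define a bijection $f_\sharp:\bP(\mC^*)\to\bP(\mC'^*)$, observes that $f_\sharp$ carries codimension-$r$ subspaces to codimension-$r$ subspaces (because $f$ preserves inclusions among subcodes), and then applies the Fundamental Theorem of Projective Geometry in $\bP(\mC^*)$. The multiplicity check is then a one-line application of~\eqref{eq:||D||} to $(k-1)$-dimensional subcodes, since $\Vert \mD\Vert = n - (\text{multiplicity of the point }L_{\mD})$. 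You instead apply the Fundamental Theorem in $\bP(\mC)$ via $1$- and $2$-dimensional subcodes to obtain a semilinear $h:\mC\to\mC'$, and then must dualize via the transpose and invoke the Radon-type injectivity (that a multiset of points in projective space is determined by its hyperplane-complement counts) to conclude that $[h^*]$ matches $\mP'$ with $\mP$. Both arguments hinge on the same combinatorial identity~\eqref{eq:||D||}, but the paper's choice to work in $\bP(\mC^*)$ from the outset avoids the transpose step and the separate Radon argument, making it noticeably shorter; your approach has the minor advantage of using only low-dimensional subcodes ($r\le 2$) in the appeal to Theorem~\ref{fund_thm}, but this is not a genuine gain since the definition of $\approx$ already grants all $r$.
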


\begin{proof} 
Suppose $\mC \approx \mC'$ and let $f:\mC \to \mC'$  be as in the Definition~\ref{approx_def}.  Clearly, $\dim \mC = \dim \mC' = k $ (say). 
Since the correspondence $\mD \to L_{\mD}$ maps $(k-1)$-dimensional subcodes of $\mC$ to codimension $k-1$  subspaces of $\bP(\mC^*)$, viz., points of $\bP(\mC^*)$, we obtain from $f$  an induced bijective map $f_{\sharp}:\bP(\mC^*) \to \bP(\mC'^*)$. Given an $r$-dimensional subcode $\mD$ of $\mC$, the map $f$ carries the set of all $(k-1)$-dimensional subcodes of $\mC$ that contain $\mD$ into the set of $(k-1)$-dimensional subcodes of $\mC'$ containing $f(\mD)$. This implies that $f_{\sharp}$ carries codimension $r$ subspaces to codimension $r$ subspaces. Since $k \geq 3$, Theorem~\ref{fund_thm} implies that $f_{\sharp}$ is a projective semilinear isomorphism. Moreover,  from formula \eqref{eq:||D||}, it follows that $f$ preserves weights of all subcodes. In particular, since points of $\bP (\mC^*)$ correspond to $(k-1)$-dimensional subcodes, we see that  $|\{v\} \cap \mP| = |\{f_{\sharp}(v)\} \cap \mP'|$, for all $v \in \bP(\mC^*)$. Therefore $P \in \mP$ with multiplicity $\nu$ if and only if $f_{\sharp}(P) \in \mP'$ with multiplicity $\nu$. Thus the semilinear isomorphism $f_{\sharp}$ carries $\mP$ to $\mP'$, and so $\mC \sim \mC'$. For the converse, the discussion preceding Definition~\ref{aut_def} shows that $\mC \sim \mC'$ implies the existence of a transformation $\phi^{-1} \in$ Isom$(F^n)$ of $F^n$ taking $\mC$ to $ \mC'$.  Since Hamming isometries preserve weights of codewords,  $\phi_{|\mC}$ gives the equivalence $\mC \approx \mC'$.
\end{proof}

We observe that if $\mC \approx \mC'$ with $f:\mC \to \mC'$ a linear isometry, then the map $f_{\sharp}$ is automatically a projective linear isomorphism, and hence we do not require the $k \geq 3$ assumption necessary for Theorem \ref{fund_thm}. We state this as a corollary:

\begin{corollary} 
If $f:\mC \to \mC'$ is a linear isometric isomorphism between two $q$-ary linear codes, then $\mC$ and $\mC'$ have the same length $n$ and $f$ is the restriction to $\mC$ of a linear transformation of $F^n$ defined by a monomial matrix in $\GL(n,F)$.
\end{corollary}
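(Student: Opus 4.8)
The plan is to derive this corollary directly from the proof of the MacWilliams theorem above, simply tracking what happens when the weight-preserving bijection $f$ is in fact a \emph{linear} isometric isomorphism. First I would observe that a linear isometry $f:\mC\to\mC'$ certainly preserves weights of codewords, and since it is a linear isomorphism it carries $r$-dimensional subcodes of $\mC$ to $r$-dimensional subcodes of $\mC'$; thus $\mC\approx\mC'$ in the sense of Definition~\ref{approx_def}, and in particular $\dim\mC=\dim\mC'=k$ and both codes have the same length $n$ (the length equality is forced since $\Vert\cdot\Vert\le n$ and the maximum weight is detected). Running the argument of the MacWilliams proof, $f$ induces a bijection $f_\sharp:\bP(\mC^*)\to\bP(\mC'^*)$ carrying codimension-$r$ subspaces to codimension-$r$ subspaces; but now, because $f$ is linear rather than merely semilinear, the induced map $f_\sharp$ on the dual projective spaces is a genuine \emph{projective linear} isomorphism, so one does not invoke Theorem~\ref{fund_thm} and hence does not need $k\ge 3$.

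Next I would argue, exactly as in the MacWilliams proof via formula~\eqref{eq:||D||}, that $f_\sharp$ carries the projective system $\mP$ to $\mP'$ preserving multiplicities, so $\mC\sim\mC'$ is witnessed by the projective \emph{linear} isomorphism $f_\sharp$. Then I would apply the construction described in the paragraph preceding Definition~\ref{aut_def}: given a projective semilinear isomorphism $g:\bP(\mC^*)\to\bP(\mC'^*)$ carrying $\mP$ to $\mP'$, one produces $g^\flat\in\mathrm{Isom}(F^n)/F^\times$ taking $\mC$ to $\mC'$; when $g$ is linear, a lift $\tilde g:\mC^*\to\mC'^*$ is linear ($\mu=\mathrm{id}$), the extension $\phi^*$ of $\tilde g$ to $(F^n)^*$ is given by an honest monomial matrix $A\in\GL(n,F)$ (no Frobenius twist), and hence $\phi^{-1}$ acts on $F^n$ by $x\mapsto Bx$ with $B\in\GL(n,F)$ monomial. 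Finally I would note that $(\tilde g^{-1})^*$ restricted to $\mC$ equals both $\phi^{-1}|_\mC$ and the original map $f$ (up to the scalar ambiguity in the lift), so $f$ is the restriction of the linear monomial transformation defined by $B$; absorbing the scalar into the diagonal part of the monomial matrix removes the ambiguity.

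The only genuinely delicate point, and the one I would be most careful about, is verifying that the map $f_\sharp$ produced from a \emph{linear} $f$ is indeed projective-linear and not just projective-semilinear — this is what lets us drop the dimension hypothesis — together with checking that, under the identification of $f$ with $(\tilde g^{-1})^*$ for $g=f_\sharp$, the resulting isometry of $F^n$ restricts \emph{to $f$ itself} rather than to some scalar multiple of $f$; here one uses that a weight-preserving linear map between one-dimensional subcodes pins down the scalars $a_j$ up to a single global scalar, which can then be normalized. Everything else is a routine specialization of the already-proven MacWilliams argument, so the write-up can be kept to a few lines by citing that proof.
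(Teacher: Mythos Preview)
Your proposal is correct and follows essentially the same route as the paper, which disposes of the corollary in a single sentence preceding its statement: when $f$ is linear, the induced map $f_\sharp$ on $\bP(\mC^*)$ is automatically projective \emph{linear} (indeed $f_\sharp = [(f^{-1})^*]$), so the appeal to Theorem~\ref{fund_thm} and the hypothesis $k\ge 3$ are unnecessary; the remainder of the MacWilliams argument then produces the monomial extension with $\mu=\mathrm{id}$. One small quibble: your parenthetical justification for the equality of lengths (``the maximum weight is detected'') is not valid in general, since a nondegenerate code need not contain a codeword of weight equal to its length; the correct argument is the one you give implicitly afterwards, namely that $f_\sharp$ carries $\mP$ to $\mP'$ preserving multiplicities, whence $n=|\mP|=|\mP'|=n'$.
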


\section{Grassmann Codes}  \label{sec3}

In the first subsection below, we work over an arbitrary field $\bK$. Subsequently, we will let $\bK$ be the finite field $F=\Fq$.   Fix positive integers $\ell, m$.  
Let $\Glm$ denote the set of $\ell$-dimensional subspaces of $\bK^m$. We will also use the symbol $V$ for $\bK^m$, and denote $\Glm$ by $G_{\ell}(V)$.
Let $\{e_1, \dotsc, e_m\}$ be a basis of $\bK^m$ and let $\mathcal I(\ell,m)$ denote the set of multi-indices:
\[ 
\mathcal I(\ell,m) = \{ (i_1,\cdots,i_{\ell}) : 1 \leq i_1 < i_2 < \dotsb < i_{\ell} \leq m\}. 
\]
For $I=(i_1,\cdots,i_{\ell})\in \mathcal I(\ell,m)$, let $e_I$ denote the decomposable element $e_{i_1}\wedge \dots \wedge e_{i_{\ell}}$ of $\wlv$. 
Then $\{ e_I  :  I \in  \mathcal I(\ell,m) \}$ is a basis of $\wlv$.  To any  $\gamma \in \Glm$, we assign the wedge product  $v_1 \wedge v_2 \wedge \dotsb \wedge v_{\ell} \in \bP(\wlv)$ where $\{ v_1, \dotsc, v_{\ell}\}$ is an arbitrary basis of $\gamma$. Writing  $v_1 \wedge v_2 \wedge \dotsb \wedge v_{\ell} = \sum_I p_I e_I$, with $p_I\in \bK$ for $I\in \mathcal I(\ell,m)$, we obtain the Pl\"{u}cker coordinates  $\left(p_I\right)_{I\in \mathcal I(\ell,m)}$ of $\gamma$. It is well-known that the 
resulting map $\Glm\to \bP(\wlv)$ is a nondegenerate embedding of $\Glm$ as a projective subvariety of $\bP(\wlv)$ defined by certain quadratic polynomials  (see, e.g., \cite[\S VII.2]{HP1}). We will refer to $\Glm$ as the \emph{Grassmannian} or the \emph{Grassmann variety} (of $\ell$-dimensional subspaces of $V=\bK^m$). Note that if $\ell =m$, then  $\Glm$ reduces to a point, whereas if $\ell=1$ (or $\ell=m-1$), then $\Glm$ is just the projective space $\bP(V)$. With this in view, to avoid trivialities, we shall henceforth assume that $1<\ell<m$. This means, in particular, that $m\ge 3$. 

\subsection{Line-preserving Bijections of Grassmannians}
By a \emph{line} in $\glv$  we mean a set of $\ell$-dimensional spaces of $V$ containing a given $(\ell-1)$-dimensional subspace and contained in a given $(\ell+1)$-dimensional subspace (see the discussion following Lemma \ref{max_lin}). We define: $\Aut(\glv)$ to be the set of all bijections $f:G_{\ell}(V)  \to G_{\ell}(V)$  such that both $f$ and $f^{-1}$ take lines to lines. Evidently $\Aut(\glv)$ is a group with respect to composition. The group $\Aut(\glv)$ was explicitly determined by Wei-Liang Chow (1949); see Theorem~\ref{chow_thm} below. We will need a few definitions and lemmas (which are also needed in later sections) before we can state this theorem. Note that the assumption $\ell < m$ is crucial in the lemma below. 

\begin{lemma} \label{rho_def}
Let $\hat \rho: \gL (V) \to  \gL (\wlv)$ be the $\ell$-th  exterior power representation that maps $f \in  \gL (V)$ to $\wedge^{\ell} f$, where $\wedge^{\ell} f \in \gL (\wlv)$ is defined by 
\[
(\wedge^{\ell} f)(v_1 \wedge \cdots \wedge v_{\ell})=f(v_1)   \wedge \cdots \wedge f(v_{\ell}) \quad \text{ for } v_1, \dots , v_{\ell}\in V. %We have:
\]
Then the induced homomorphism $\rho: \PgL (V) \to  \PgL (\wlv)$  is injective. Moreover, the kernel of $\hat \rho$ is 
\begin{equation*} 
\label{eq:kerrho}
\ker (\hat \rho) = \{ c \, I_m \,:\, c\in \bK \text{ with } c^{\ell} = 1\}. % \quad \emph{ for all } \;   1 \leq \ell \leq m-1 
\end{equation*}
\end{lemma}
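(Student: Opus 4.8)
The plan is to prove the two assertions separately: first compute $\ker(\hat\rho)$, then deduce injectivity of $\rho$ from it.

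\medskip

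\textbf{Step 1: Computing $\ker(\hat\rho)$.} Suppose $f\in\gL(V)$ is $\mu$-semilinear and $\wedge^\ell f$ acts as the identity on $\wlv$. First I would argue that $f$ must in fact be \emph{linear}, i.e.\ $\mu=\mathrm{id}$: apply $\wedge^\ell f$ to basis elements $e_I$ and use that $f(e_i)=\sum_j a_{ji}e_j$ with the action $c\mapsto\mu(c)$ to see that the scalar relations forced by $\wedge^\ell f(e_I)=e_I$ for all $I$ are incompatible with a nontrivial $\mu$ (concretely, comparing the action on $e_I$ versus on $c\,e_I$ shows $\mu(c)=c$ for all $c$). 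Once $f$ is linear, write $A\in\GL(m,\bK)$ for its matrix. The condition is that the $\ell$-th compound matrix $\wedge^\ell A$ equals $I_k$, $k=\binom m\ell$. Fixing any two indices $i\ne j$ and an $(\ell-1)$-subset $S$ of $\{1,\dots,m\}\setminus\{i,j\}$ — here the hypothesis $\ell<m$ guarantees such an $S$ exists — I compute the Plücker coordinate of $A(e_S\wedge e_i)$ along $e_S\wedge e_j$; this is (up to sign) the $2\times 2$ minor of $A$ on rows $i,j$ and certain columns, and setting the off-diagonal compound entries to $0$ and diagonal ones to $1$ forces $A$ to be diagonal, say $A=\mathrm{diag}(c_1,\dots,c_m)$, and then $\wedge^\ell A=I_k$ forces every $\ell$-fold product $c_{i_1}\cdots c_{i_\ell}=1$. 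Taking ratios of two such products sharing $\ell-1$ indices gives $c_i=c_j$ for all $i,j$ (again using $\ell<m$ to arrange the overlap), so $A=cI_m$ with $c^\ell=1$. Conversely $\wedge^\ell(cI_m)=c^\ell I_k=I_k$ when $c^\ell=1$, giving the stated equality.

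\medskip

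\textbf{Step 2: Injectivity of $\rho$.} The map $\rho$ is the homomorphism induced on the quotients $\PgL(V)=\gL(V)/\bK^\times$ and $\PgL(\wlv)=\gL(\wlv)/\bK^\times$. Suppose $[f]\in\ker\rho$; then $\wedge^\ell f=\lambda\,\mathrm{id}$ for some $\lambda\in\bK^\times$. Picking $\nu\in\bK^\times$ (after possibly noting we may first reduce to $f$ linear as in Step 1, or by absorbing the semilinear twist) with $\wedge^\ell$ of the scaled map equal to the identity — concretely, replace $f$ by $\nu^{-1}f$ where $\nu^\ell=\lambda$ if such a root exists; if not, argue directly that $\wedge^\ell f=\lambda\,\mathrm{id}$ already forces $f$ diagonal by the same $2\times2$-minor argument, then $f=cI_m$ for a scalar $c$, whence $[f]$ is trivial in $\PgL(V)$. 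The cleanest route avoiding root-extraction: from $\wedge^\ell f=\lambda\,\mathrm{id}$ run the Step 1 computation verbatim (the off-diagonal compound entries still vanish), concluding $f=cI_m$ for some $c\in\bK^\times$, so $[f]=1$ in $\PgL(V)$. Hence $\ker\rho$ is trivial.

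\medskip

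\textbf{Main obstacle.} The delicate point is handling the semilinear case uniformly — ruling out a nontrivial field automorphism $\mu$ — and, for injectivity of $\rho$, dealing with the possibility that $\lambda$ has no $\ell$-th root in $\bK$ (an issue over a general field $\bK$, e.g.\ a finite field). Both are resolved by the same device: the vanishing of all off-diagonal entries of the compound matrix $\wedge^\ell f$ does not depend on the scalar $\lambda$, so the $2\times 2$-minor argument forces $f$ to be a scalar map \emph{before} one ever needs to extract a root, after which linearity and the scalar constraint follow immediately. The role of the hypothesis $\ell<m$ is exactly to ensure that for any pair of indices $i\ne j$ there is an $(\ell-1)$-subset of the remaining indices, which is what makes these minor comparisons available.
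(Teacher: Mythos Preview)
Your approach is sound in outline but takes a genuinely different route from the paper, and in reversed order. You first compute $\ker(\hat\rho)$ by a compound-matrix argument and then deduce injectivity of $\rho$; the paper does the opposite. It proves injectivity of $\rho$ first by a short geometric argument: if $g\in\ker(\rho)$ then $g(\gamma)=\gamma$ for every $\gamma\in G_\ell(V)$; supposing some $v\in\bP(V)$ had $g(v)\ne v$, one uses $\ell<m$ to choose $\gamma\in G_\ell(V)$ with $v\in\gamma$ but $g(v)\notin\gamma$, contradicting $g(\gamma)=\gamma$. Once $\rho$ is injective, $\ker(\hat\rho)$ consists of scalars, and $cI_m\in\ker(\hat\rho)$ iff $c^\ell=1$. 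The paper's argument is shorter, coordinate-free, and handles the semilinear case uniformly without ever writing a matrix; yours is more explicit and closer to how one would compute with compound matrices.

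One local error to fix: the Pl\"ucker coordinate of $(\wedge^\ell A)(e_S\wedge e_i)$ along $e_S\wedge e_j$ is the full $\ell\times\ell$ minor $\det A[S\cup\{j\}\mid S\cup\{i\}]$, not a $2\times 2$ minor of $A$. Your conclusion that $A$ is diagonal still follows, but by a cleaner step: for $i\ne j$ pick any $\ell$-subset $I$ with $i\in I$ and $j\notin I$ (this is precisely where $\ell<m$ is used); then $\wedge^\ell A(e_I)=e_I$ forces $A$ to preserve the coordinate subspace $\mathrm{span}(e_k:k\in I)$, whence $a_{ji}=0$. With this correction your diagonal-then-equal-entries argument and Step~2 go through as written (and your observation that the off-diagonal vanishing is insensitive to the scalar $\lambda$, avoiding any $\ell$-th root extraction, is exactly right).
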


\begin{proof}
For  $\gamma \in G_{\ell}(V)$ and $g \in \PgL (V)$, denote by $g(\gamma) \in G_{\ell}(V)$ the image  under $g$ of the subspace $\gamma$ of $V$. 
Let $g \in \ker(\rho)$. Then for any  $\gamma \in G_{\ell}(V)$, the  Pl\"{u}cker coordinates of  $g(\gamma)$ and $\gamma$ are equal, and so $g(\gamma) =\gamma$.  
Suppose there exists $v \in \bP(V)$ with $g(v) \neq v$. 
Since $\ell < m$, we can find $\gamma \in G_{\ell}(V)$ %be an $\ell$-dimensional subspace 
containing $v$ such that $g(v) \notin \gamma$. 
%$\gamma$ contains $v$, but not $g(v)$. %(here we use the condition that $\ell \neq m$). 
But $v \in \gamma$ implies $g(v) \in g(\gamma) = \gamma$, which is a contradiction. % $g(v) \notin \gamma$. 
This shows that 
$g$ is the identity element of $\PgL (V)$. Thus $\rho$ is injective.  
It follows that the elements of ker$(\hat \rho)$  are scalar matrices, and clearly $c \, I_m \in$ ker$(\hat \rho)$  if and only if $c^{\ell}=1$. 
\end{proof}

%%%%%%%%%%%%%%%%%%%%%% Alternative proof %%%%%%%%%%%%%%%%%%%%%%%%
%Clearly, $cI_m\in \ker(\hat \rho)$ for all $c\in \bK$ with $c^{\ell} = 1$. Conversely, let $g\in \ker(\hat \rho)$. We claim that $g(v)$ is a multiple of $v$ for every $v\in V$. To see this, assume the contrary. Then there is $v_1\in V^*$ such that $v_1$ and $g(v_1)$ are linearly independent. Since $\ell <m$, we can extend $\{v_1\}$ to a linearly independent subset $\{v_1, \dots v_{\ell}\}$ of $V$ such that $g(v_1)\not\in {\rm {span}}\{v_1, \dots v_{\ell}\}$. But $(\wedge^{\ell} g)(v_1 \wedge \cdots \wedge v_{\ell}) = v_1 \wedge \cdots \wedge v_{\ell}$  and hence
%$g(v_1)\wedge v_1 \wedge \cdots \wedge v_{\ell} = 0$, which contradicts $g(v_1)\not\in {\rm {span}}\{v_1, \dots v_{\ell}\}$. This proves the claim. Consequently, there is a basis $\{u_1, \dots , u_m\}$ of $V$ and %scalars 
%$c_1, \dots , c_m\in \bK$ such that $g(u_i) = c_i u_i$ for $i=1, \dots , m$. Moreover, $g(u_1+ \dots + u_m) = c(u_1+\dots +u_m)$ for some $c\in \bK$. It follows that $c=c_1= \dots = c_m$. Thus $g = cI_m$ and we must have $c^{\ell} = 1$ since $g\in \ker(\hat \rho)$. 
%%%%%%%%%%%%%%%%%%%%%%%%%%%%%%%%%%%%%%%%%%%%%%%%%%%%%%%%%%%%%%%%%%

We recall that the group $\PGL(m,\bK)$ has an inverse transpose automorphism taking $[A] \mapsto [A^{-t}]$, where $A \in \GL(m,\bK)$ is a representative of $[A]$. For $m \geq 3$, this is an outer automorphism corresponding to a diagram   automorphism of the root system. (For $m=2$, the automorphism is inner because $[A^{-t}] = \bbsm 0 & -1\\ 1 &0  \besm [A] \bbsm 0 & -1\\ 1 &0  \besm^{-1}$). There is an induced automorphism on $\gL (V)$ taking a $\mu$-semilinear  map $x \mapsto A \mu(x)$ to the $\mu$-semilinear map $x \mapsto A^{-t} \mu(x)$, which in turn descends to an automorphism of $\PgL (V)$. We denote these automorphisms of 
$\gL(V)$ and $\PgL(V)$ by a common symbol $-t$, and refer to it as the \emph{inverse transpose outer automorphism}.

\begin{definition} \label{Hodge_def}
{\rm
The \emph{Hodge star isomorphism}   $\ast_{\ell}: \wlv \to \wmlv$ is the linear isomorphism of $\wlv$ defined by $e_I \to $ \emph{sgn}$(I I^{\circ})\,  e_{I^{\circ}}$ where $I \in \mathcal I(\ell,m)$ and  $I^{\circ} \in \mathcal I(m-\ell,m)$ is the complement of $I$. The scalar \emph{sgn}$(I I^{\circ}) \in \{ \pm 1\}$ denotes the sign of the permutation $(1,\cdots,m)  \mapsto (I I^{\circ})$. We use the same symbol $\ast_{\ell}:\bP(\wlv) \to \bP(\wmlv)$ for the induced projective linear isomorphism.
}
\end{definition}

\begin{proposition}  \label{Hodge_prop}
Let $V = \bK^m$.  The isomorphism $\ast_{\ell}$ takes $G_{\ell}(V)$ to $G_{m-\ell}(V)$. Moreover, $\ast_{\ell} \circ \wedge^{\ell} [f] \circ  \ast_{\ell}^{-1}= \wedge^{m-\ell} [f]^{-t}$ for all $[f] \in \PgL (V)$. In the case  $m=2 \ell$, $ \wedge^{\ell} [f] \mapsto \ast_{\ell} \circ \wedge^{\ell} [f] \circ  \ast_{\ell}^{-1}$  is  a nontrivial  outer automorphism of \emph{im}$(\rho)$.
\end{proposition}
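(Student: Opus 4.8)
The plan is to verify the three assertions in order, starting from the definition of $\ast_\ell$ on the standard basis. First, to show that $\ast_\ell$ maps $G_\ell(V)$ onto $G_{m-\ell}(V)$, I would use the standard fact that a nonzero element $\omega\in\wlv$ is decomposable (hence represents a point of $G_\ell(V)$ under the Plücker embedding) if and only if the map $V^*\to\wedge^{\ell+1}V$, $\alpha\mapsto \alpha\lrcorner\,? $ — or more simply, I would argue via the contraction/annihilator description: $\omega = v_1\wedge\cdots\wedge v_\ell$ is decomposable iff its annihilator $\{\,v\in V : v\wedge\omega = 0\,\}$ has dimension $\ell$, and then observe directly from the definition of $\ast_\ell$ on basis vectors that for a decomposable $\omega$ spanning a subspace $\gamma$, the element $\ast_\ell\omega$ is decomposable and spans a subspace of dimension $m-\ell$; concretely, if $\gamma=\langle e_{i_1},\dots,e_{i_\ell}\rangle$ is a coordinate subspace then $\ast_\ell(e_I) = \pm e_{I^\circ}$ is decomposable with span the complementary coordinate subspace, and since $G_\ell(V)$ is covered by the $\GL(m,\bK)$-orbit of such coordinate subspaces, equivariance (which is exactly the content of the intertwining identity, proved next) extends this to all of $G_\ell(V)$. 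It is a bijection because $\ast_{m-\ell}$ is an inverse up to sign.

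For the intertwining identity $\ast_\ell\circ\wedge^\ell[f]\circ\ast_\ell^{-1} = \wedge^{m-\ell}[f]^{-t}$, the cleanest route is to observe that both sides are projective linear isomorphisms $\bP(\wmlv)\to\bP(\wmlv)$, so it suffices to check equality of the underlying linear maps up to a scalar that is independent of $[f]$. I would use the characterization of $\ast_\ell$ via the wedge pairing $\wlv\times\wmlv\to\wedge^m V\cong\bK$ together with the pairing $\wedge^{m-\ell}V\times\wedge^{m-\ell}V^* \to \bK$: up to the fixed identification $\wedge^m V\cong\bK$, one has $\eta\wedge\ast_\ell^{-1}(\cdot)$ realizing the duality, and then the identity becomes the statement that $\wedge^{m-\ell}f$ acting on $\wmlv$ corresponds, under this duality, to $\wedge^{m-\ell}(f^t)^{-1}$ acting on $\wldv$ — which is the elementary fact that the compound matrix of $A^{-t}$ is the inverse transpose of the compound matrix of $A$, combined with the fact that $\wedge^\ell f$ and $\wedge^{m-\ell}f$ are adjoint with respect to the wedge pairing up to the determinant $\det f$. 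Since everything is done projectively, the scalars $\det f$ and the sign ambiguities wash out. I would carry this out by a direct computation on decomposable elements $f(v_1)\wedge\cdots\wedge f(v_\ell)$, which is the most transparent way to see the adjunction.

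Finally, for the case $m=2\ell$: here $\ast_\ell$ is an isomorphism $\wlv\to\wlv$, so $T:[h]\mapsto \ast_\ell\circ[h]\circ\ast_\ell^{-1}$ is genuinely an automorphism of $\PgL(\wlv)$, and by the identity just proved it restricts on $\mathrm{im}(\rho) = \rho(\PgL(V))$ to the map $\wedge^\ell[f]\mapsto\wedge^\ell[f]^{-t}$, i.e. under the isomorphism $\rho$ of Lemma \ref{rho_def} it is conjugate to the inverse transpose automorphism $-t$ of $\PgL(V)$. It remains to see this is a \emph{nontrivial outer} automorphism of $\mathrm{im}(\rho)$. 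Nontriviality: $-t$ is nontrivial on $\PgL(m,\bK)$ for $m\ge 3$ (since a diagonal matrix with distinct nonzero eigenvalues is not conjugate to its inverse transpose in general — pick $\mathrm{diag}(1,2,\dots)$ over a field with enough elements, or note $-t$ acts by $-1$ on the character lattice), and $\rho$ is an isomorphism onto its image, so $T|_{\mathrm{im}(\rho)}$ is nontrivial. Outerness: if $T|_{\mathrm{im}(\rho)}$ were inner, say conjugation by some $g\in\mathrm{im}(\rho)$, then transporting through $\rho$ would make $-t$ an inner automorphism of $\PgL(m,\bK)$, contradicting the remark recorded before Definition \ref{Hodge_def} that for $m\ge 3$ the inverse transpose automorphism is outer (it corresponds to the nontrivial diagram automorphism of type $A_{m-1}$). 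The main obstacle I anticipate is purely bookkeeping: keeping the sign conventions in $\mathrm{sgn}(II^\circ)$ and the identification $\wedge^m V\cong\bK$ consistent so that the intertwining identity holds on the nose (up to an $[f]$-independent scalar) rather than merely up to a sign depending on $I$ — but since the final statement is projective, a careful check on basis elements suffices and no delicate argument is needed beyond that.
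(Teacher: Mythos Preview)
Your proposal is correct and rests on the same core ideas as the paper: the wedge pairing $\wlv\times\wmlv\to\wedge^mV$, the adjointness of $\wedge^\ell f$ and $\wedge^{m-\ell}f$ under that pairing up to $\det f$, and (for $m=2\ell$) the reduction of outerness to the known outerness of $-t$ on $\PgL(V)$, where your argument is identical to the paper's.

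The organization differs. The paper introduces an explicit factorization $\ast_\ell=(\wedge^{m-\ell}\theta)\circ H_\ell$, where $H_\ell:\wlv\to\wldv$ is defined by $\langle H_\ell(\xi),\zeta\rangle\,\eta=\xi\wedge\zeta$ (with $\eta=e_1\wedge\cdots\wedge e_m$) and $\theta:V^*\to V$ sends $e^i\mapsto e_i$. This factorization does two things at once. First, a short basis-extension argument shows $H_\ell$ (and trivially $\wedge^{m-\ell}\theta$) carries decomposable elements to decomposable elements, giving the first assertion directly; this avoids your bootstrap through $\GL$-transitivity plus the intertwining identity, and hence avoids the need to prove the intertwining identity first. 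Second, since $f^{-t}=\theta\circ(f^{-1})^*\circ\theta^{-1}$, the intertwining identity reduces to $H_\ell\circ\wedge^\ell f\circ H_\ell^{-1}=\det(A)\,\wedge^{m-\ell}(f^{-1})^*$, which is a one-line pairing computation that handles the semilinear case $f(x)=A\mu(x)$ uniformly (the $\mu$ appears via $\wedge^m f(\eta)=\det(A)\,\eta$ after rewriting). Your sketch is equivalent but leaves the semilinear bookkeeping implicit under ``the scalars wash out''; the paper's factorization makes that step explicit with essentially no extra work. Either route is fine; the paper's buys a cleaner treatment of the first assertion and of semilinearity, yours is slightly more conceptual for the reader who already knows the compound-matrix identities.
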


\begin{proof} Let $f \in \gL (V)$ be a representative of $[f]$, and let $x \mapsto A \mu(x)$ denote the coordinate expression of $f$ with respect to the basis $\{e_1, \dots, e_m\}$. We will show $\ast_{\ell}\circ \wedge^{\ell} f \circ  \ast_{\ell}^{-1} = {\rm det}(A) \wedge^{m-\ell} f^{-t}$ holds $\gL (V)$, which in turn implies that 
$\ast_{\ell} \circ \wedge^{\ell} [f] \circ  \ast_{\ell}^{-1}= \wedge^{m-\ell} [f]^{-t}$. We construct linear isomorphisms $H_{\ell}: \wlv \to \wldv$ and $\theta:V^*  \to V$ such that $\ast_{\ell} =  (\wedge^{m-\ell} \theta) \circ H_{\ell} :\wlv \to \wmlv$.  Let $\eta = e_1 \wedge \dots \wedge e_m \in \wedge^m V$. For any $\xi \in \wlv$, let $H_{\ell}(\xi) \in \wldv$ be defined by $\langle H_{\ell}(\xi), \zeta \rangle \,   \eta  =  \xi \wedge \zeta $  for all $\zeta \in \wmlv$, where $\langle ,\rangle$ is the natural pairing between  $\wedge^i V^*$ and $\wedge^i V$ for each $0\leq i \leq m$. We claim that $H_{\ell}$ carries a decomposable element $v_1 \wedge \dots \wedge v_{\ell}$ of $\wlv$ to a decomposable element of $\wldv$. To see this, we extend $\{v_1, \dots, v_{\ell}\}$ to a basis $\{v_1, \dots, v_m\}$ of $V$ such that $v_1 \wedge \dots \wedge v_m = \eta$. It follows from the definition of $H_{\ell}$ that $H_{\ell}(v_1 \wedge \dots \wedge v_{\ell})$ is the decomposable element  $v^{\ell+1} \wedge \dots \wedge v^m$, where $\{v^1, \dots,v^m\}$ is the dual basis. In particular $H_{\ell}(e_I) = {\rm sgn}(II^{\circ}) e^{I^{\circ}}$.
Now let $\theta:V^*  \to V$ be the linear isomorphism defined by $e^i \mapsto e_i$ for all $i$. It now follows that $(\wedge^{m-\ell} \theta) \circ H_{\ell}$ carries $e_I$ to sgn$(I I^{\circ})\,  e_{I^{\circ}}$, and hence  $(\wedge^{\ell} \theta) \circ H_{\ell}$ is $\ast_{\ell}$. The map $\wedge^{m-\ell} \theta$ takes a decomposable element  $v^1 \wedge \dots \wedge v^{m-\ell} \in \wldv$ to the decomposable element $\theta(v^1) \wedge \dots \wedge \theta(v^{m-\ell})$. Thus the composition $(\wedge^{m-\ell} \theta) \circ H_{\ell}$ takes  decomposable elements to decomposable elements . Therefore the projective map $\ast_{\ell}$ takes $G_{\ell}(V)$ to $G_{m-\ell}(V)$.

Next, we observe that the maps $f^{-t}$ and $\theta \circ (f^{-1})^* \circ \theta^{-1}$ are equal, as both have the coordinate expression $x \mapsto A^{-t} \mu(x)$ 
(the transpose $f^*$ of a semilinear map $f$ was defined immediately after Theorem \ref{fund_thm}). Using the definition $\ast_{\ell} = (\wedge^{m-\ell} \theta) \circ H_{\ell}$ we get,  $\ast_{\ell} \circ \wedge^{\ell} f \circ  \ast_{\ell}^{-1} = \wedge^{m-\ell} \theta \circ H_{\ell} \circ \wedge^{\ell} f \circ {H_{\ell}}^{-1} \circ \wedge^{m-\ell} \theta^{-1}$. Therefore, it suffices to show $ H_{\ell} \circ \wedge^{\ell} f \circ {H_{\ell}}^{-1} = {\rm det}(A) \wedge^{m-\ell} (f^{-1})^*$. 
To prove this we need to show: 
\beq  \label{eq:Hodge_step1}
\langle (H_{\ell} \circ \wedge^{\ell} f) (\xi), \zeta \rangle =   {\rm det}(A) \, \langle (\wedge^{m-\ell} (f^{-1})^* \circ H_{\ell}) (\xi), \zeta \rangle
\eeq
holds for all $\xi \in \wlv, \,\zeta \in \wmlv$. Multiplying both sides of \eqref{eq:Hodge_step1} by $\eta$, the left side is $(\wedge^{\ell} f (\xi)) \wedge \zeta$  which we rewrite as:
\beq  \label{eq:Hodge_step2}
 \wedge^m\!f \,\left( \xi \wedge (\wedge^{m-\ell}f^{-1})(\zeta) \right) = 
\mu\left( \langle  H_{\ell}(\xi), \wedge^{m-\ell} f^{-1} (\zeta) \rangle \right) \, {\rm det}(A) \, \eta  \eeq
which is $\eta$ multiplied by the right side of \eqref{eq:Hodge_step1}, as desired. 

Finally, suppose $m = 2 \ell$. Note that $m\ge 3$ since $\ell >1$. Assume, on the contrary, that  $\wedge^{\ell} [f] \mapsto \ast_{\ell} \circ \wedge^{\ell} [f] \circ  \ast_{\ell}^{-1}$ is an inner automorphism of $\im (\rho)$, obtained by conjugating by $\wedge^{\ell} [f_0]$. Then, using  $\ast_{\ell}\circ \wedge^{\ell} [f] \circ  \ast_{\ell}^{-1} = \wedge^{\ell} [f]^{-t}$, we get  $\rho([f_0 \circ f \circ f_0^{-1}]) = \rho([f]^{-t})$. Since $\rho$ is injective, we get a contradiction to the fact that $-t$ is not an inner automorphism of $ \PgL (V)$.
\end{proof}

\begin{corollary}   \label{Hodge_crlry}
Assume that $m=2\ell$. Consider the semidirect product
\[
\PgL (V) \rtimes_{-t} \mathbb Z / 2 \mathbb Z:=\langle \PgL (V), \epsilon \, |\, \epsilon^2=I_{\PgL(V)}, \epsilon g \epsilon^{-1} = g^{-t} \rangle. 
\]
Let  $\emph{im} (\rho) \rtimes_{\ast_{\ell}} \mathbb Z / 2 \mathbb Z$ be the subgroup of $\PgL(\wlv)$ generated by im$(\rho)$ and $\ast_{\ell}$:
\[
\emph{im} (\rho) \rtimes_{\ast_{\ell}} \mathbb Z / 2 \mathbb Z:= \langle \emph{im} (\rho) , \ast_{\ell} \, |\, \ast_{\ell}^2=I_{\PgL(\wlv)}, \ast_{\ell} (\wedge^{\ell}g) \ast_{\ell}^{-1} = \wedge^{\ell} g^{-t}, \rangle. 
\]
where $I_G$ denotes the identity element of a group $G$. The homomorphism  from $\PgL (V) \rtimes_{-t} \mathbb Z / 2 \mathbb Z$  to $\emph{im} (\rho) \rtimes_{\ast_{\ell}} \mathbb Z / 2 \mathbb Z$ that restricts to $\rho$ on $\PgL(V)$ and sends the generator $\epsilon$ to $\ast_{\ell}$ is an isomorphism.
\end{corollary}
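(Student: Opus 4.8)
The plan is to recognize the asserted map as the homomorphism produced by the universal property of the abstract semidirect product $\PgL(V)\rtimes_{-t}\Z/2\Z$. Giving a homomorphism out of $\PgL(V)\rtimes_{-t}\Z/2\Z$ amounts to giving a homomorphism on the factor $\PgL(V)$ together with an element of order dividing $2$ in the target that conjugates the image of $\PgL(V)$ by the prescribed automorphism. I would take $\rho\colon\PgL(V)\to\PgL(\wlv)$ for the first datum and (since $m=2\ell$, so that $\ast_\ell$ is a self-map of $\bP(\wlv)$) the element $\ast_\ell$ for the second. Granting that these satisfy the two required relations, the homomorphism exists; its image is by construction the subgroup generated by $\im(\rho)$ and $\ast_\ell$, i.e.\ all of $\im(\rho)\rtimes_{\ast_\ell}\Z/2\Z$, so surjectivity is automatic, and the entire content is injectivity.

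To check the two relations: the conjugation relation $\ast_\ell\,\rho([f])\,\ast_\ell^{-1}=\rho([f]^{-t})$ is immediate from Proposition~\ref{Hodge_prop}, since $\ast_\ell\circ\wedge^\ell[f]\circ\ast_\ell^{-1}=\wedge^{m-\ell}[f]^{-t}$ and $m-\ell=\ell$. For $\ast_\ell^2=I$ in $\PgL(\wlv)$ I would compute on the basis $\{e_I\}$: one gets $\ast_\ell^2(e_I)={\rm sgn}(II^\circ)\,{\rm sgn}(I^\circ I)\,e_I$, and since interchanging the block $I$ with the block $I^\circ$, of lengths $\ell$ and $m-\ell=\ell$, introduces $\ell^2$ transpositions, ${\rm sgn}(I^\circ I)=(-1)^{\ell^2}{\rm sgn}(II^\circ)$; hence $\ast_\ell^2=(-1)^\ell I_{\wlv}$, a scalar, so it is trivial in $\PgL(\wlv)$. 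This part is routine bookkeeping.

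For injectivity, I would use the normal form: every element of $\PgL(V)\rtimes_{-t}\Z/2\Z$ is uniquely $g$ or $g\epsilon$ with $g\in\PgL(V)$. An element $g$ maps to $\rho(g)$, which is trivial only if $g=I$ because $\rho$ is injective by Lemma~\ref{rho_def}. An element $g\epsilon$ maps to $\rho(g)\ast_\ell$, and if this were trivial we would have $\ast_\ell=\rho(g^{-1})\in\im(\rho)$; then conjugation by $\ast_\ell$ on $\im(\rho)$ would be an inner automorphism, and combined with the relation $\ast_\ell\,\rho([h])\,\ast_\ell^{-1}=\rho([h]^{-t})$ together with the injectivity of $\rho$ this would force $-t$ to be an inner automorphism of $\PgL(V)$ — contradicting the fact (used in the last paragraph of the proof of Proposition~\ref{Hodge_prop}) that $-t$ is outer for $m\ge 3$. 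Hence the kernel is trivial, and the map is an isomorphism. The only step beyond formal manipulation is this last one, namely ruling out $\ast_\ell\in\im(\rho)$; everything else follows directly from Proposition~\ref{Hodge_prop} and Lemma~\ref{rho_def}.
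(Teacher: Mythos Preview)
Your proof is correct and follows essentially the same approach as the paper: verify $\ast_\ell^2$ is scalar (the paper states $\ast_\ell^2=(-1)^{\ell(m-\ell)}I$, which agrees with your $(-1)^{\ell^2}=(-1)^{\ell}$ when $m=2\ell$), note surjectivity is immediate, and deduce injectivity from the injectivity of $\rho$ together with $\ast_\ell\notin\im(\rho)$ via the outer-automorphism argument from Proposition~\ref{Hodge_prop}. Your write-up is somewhat more explicit about the universal property and the normal form in the semidirect product, but the substance is identical.
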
 

\begin{proof} 
We note that  $\ast_{\ell}^2 = (-1)^{\ell(m-\ell)} I_{\gL(\wl)}$, and hence when $m = 2 \ell$ we get  $\ast_{\ell}^2 = (-1)^{\ell} I_{\gL(\wlv)}$ in $\gL(\wlv)$ and $\ast_{\ell}^2 = I_{\PgL(\wlv)}$ in $\PgL(\wlv)$. Therefore the homomorphism described in the statement is well defined. It is clearly surjective, and it is injective because  $\rho$ is injective and  $\ast_{\ell} \notin {\rm im}(\rho)$ (because $\wedge^{\ell} g \mapsto \ast_{\ell} \circ \wedge^{\ell} g \circ  \ast_{\ell}^{-1}$ is not an inner automorphism of $\im (\rho)$).
\end{proof}

\noi {\bf Remark:} We will also need the following relation between $\ast_{\ell-1}$ and $\ast_{\ell}$, which follows from their definitions. 
\beq \label{eq:star_prop} 
\ast_{\ell}( \beta \wedge v) = \iota_{\theta^{-1}(v)} (\ast_{\ell-1} \beta)   \quad \mbox{for all} \quad \beta \in \wedge^{\ell-1} V, \, v \in V,
\eeq 
where $\iota_{\omega}:\wedge^{r} V \to \wedge^{r-1} V$ is the interior multiplication operation defined by  $\langle \nu, \iota_{\omega} \xi\rangle = \langle \omega \wedge \nu, \xi \rangle$ for all  $\nu \in \wedge^{r-1} V^*$, $\omega \in V^*$ and $\xi \in \wedge^r V$. 
\smallskip

We use the letters $\alpha, \beta, \gamma$ and $\delta$ for points of $G_{\ell-2}(V), G_{\ell-1}(V),G_{\ell}(V)$, and $G_{\ell+1}(V)$ respectively.  By a linear subspace of $G_{\ell}(V)$, we mean a linear subspace of $\bP(\wlv)$ contained in $G_{\ell}(V)$. The following basic fact characterizes the linear  subspaces of $G_{\ell}(V)$ (cf.  \cite[\S II.1]{Chow}, \cite[\S 24.2]{HT}, \cite[p. 68]{Harris}, \cite[\S 2]{GPP}). 

\begin{lemma}[Maximal linear subspaces of $G_{\ell}(V)$] \label{max_lin} 
For any $\beta \in G_{\ell-1}(V)$ and any $\delta \in G_{\ell+1}(V)$, let 
\[
 \pi_{\beta} = \{ \gamma \in G_{\ell}(V) : \gamma \supset \beta\} \quad \text{and} \quad 
 \pi^{\delta} = \{ \gamma \in G_{\ell}(V)  : \gamma \subset \delta\}.
\] 
Then $\pi_{\beta}$ and $\pi^{\delta}$ are linear subspaces of $G_{\ell}(V),$ of dimensions $m-\ell$ and $\ell$ respectively. Moreover, any linear subspace of $G_{\ell}(V)$ is contained in a $\pi_{\beta}$ or a $\pi^{\delta}$. 
\end{lemma}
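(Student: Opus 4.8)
The plan is to dispatch the two explicit families $\pi_\beta$ and $\pi^\delta$ by a direct computation, and then to prove the ``moreover'' clause by reducing it to a combinatorial fact about $\ell$-dimensional subspaces of $V$ that are pairwise \emph{adjacent}, i.e.\ meet in dimension $\ell-1$. For $\pi_\beta$, I would fix a basis $v_1,\dots,v_{\ell-1}$ of $\beta$ and consider the linear map $L_\beta\colon V\to\wlv$ sending $v$ to $v_1\wedge\dots\wedge v_{\ell-1}\wedge v$; its kernel is exactly $\beta$, so $W_\beta:=L_\beta(V)$ has dimension $m-\ell+1$, every nonzero vector of $W_\beta$ is a decomposable vector $L_\beta(v)$ with $v\notin\beta$ representing $\beta+\langle v\rangle$, and the Pl\"ucker point of any $\gamma\supseteq\beta$ lies in $\bP(W_\beta)$; hence $\pi_\beta=\bP(W_\beta)$ is a linear subspace of $\glv$ of dimension $m-\ell$. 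For $\pi^\delta$, the Pl\"ucker embedding identifies it with $G_\ell(\delta)\subseteq\bP(\wedge^\ell\delta)\subseteq\bP(\wlv)$, and since $\dim\delta=\ell+1$ every nonzero $\omega\in\wedge^\ell\delta$ is decomposable — the space $\{v\in\delta:v\wedge\omega=0\}$ is then $\ell$-dimensional and $\omega$ is a scalar multiple of the wedge of any of its bases — so $\bP(\wedge^\ell\delta)=G_\ell(\delta)=\pi^\delta$, a linear subspace of dimension $\binom{\ell+1}{\ell}-1=\ell$.

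For the ``moreover'' clause, let $\Lambda\subseteq\glv$ be a linear subspace; the cases $\Lambda=\varnothing$ or $\Lambda$ a single point being trivial, I may assume $\Lambda$ contains two distinct points. The first step is that any two distinct $\gamma,\gamma'\in\Lambda$ are adjacent: the projective line $\overline{\gamma\gamma'}$ lies in $\glv$, so if $\omega,\omega'$ are decomposable representatives of $\gamma,\gamma'$ then $\omega+\omega'$ is also decomposable, and after choosing bases of $\gamma$ and $\gamma'$ adapted to $\gamma\cap\gamma'$ this forces $\dim(\gamma/(\gamma\cap\gamma'))=1$, because $a_1\wedge\dots\wedge a_j+b_1\wedge\dots\wedge b_j$ is not decomposable when $j\ge 2$ (a standard fact, seen by passing to $V/(\gamma\cap\gamma')$); alternatively one may quote the classical description of the lines of a Grassmannian. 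The second step is a dichotomy for a pairwise-adjacent triple $\gamma_1,\gamma_2,\gamma_3$: \emph{either} $\gamma_1\cap\gamma_2\cap\gamma_3$ has dimension $\ell-1$ \emph{or} $\gamma_1+\gamma_2+\gamma_3$ has dimension $\ell+1$. Indeed $\gamma_1\cap\gamma_2$ and $\gamma_1\cap\gamma_3$ are hyperplanes of $\gamma_1$; if they coincide the first alternative holds, and if not $\gamma_1=(\gamma_1\cap\gamma_2)+(\gamma_1\cap\gamma_3)\subseteq\gamma_2+\gamma_3$, which has dimension $\ell+1$ by adjacency, so $\gamma_1+\gamma_2+\gamma_3=\gamma_2+\gamma_3$.

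It then remains to globalize. I would fix distinct $\gamma_1,\gamma_2\in\Lambda$, set $\beta_0=\gamma_1\cap\gamma_2$ and $\delta_0=\gamma_1+\gamma_2$, and argue as follows. If every $\gamma\in\Lambda$ contains $\beta_0$, then $\Lambda\subseteq\pi_{\beta_0}$ and we are done. Otherwise pick $\gamma_3\in\Lambda$ with $\beta_0\not\subseteq\gamma_3$; the dichotomy for $\gamma_1,\gamma_2,\gamma_3$ cannot fall in the first case (that $(\ell-1)$-space would lie in $\beta_0$ and have the same dimension, hence equal $\beta_0$), so it falls in the second and $\gamma_3\subseteq\delta_0$, since $\delta_0$ is the unique $(\ell+1)$-space over $\gamma_1$ and $\gamma_2$. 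Finally I would show $\Lambda\subseteq\pi^{\delta_0}$: for any $\gamma\in\Lambda$ with $\gamma\not\subseteq\delta_0$, the dichotomy for $\gamma_1,\gamma_2,\gamma$ rules out the second alternative (which would again be $\delta_0$), so $\beta_0\subseteq\gamma$; but then the dichotomy for the pairwise-adjacent triple $\gamma_1,\gamma_3,\gamma$ is contradictory in both cases — in the first, its $(\ell-1)$-space equals $\gamma_1\cap\gamma_3$ (both of dimension $\ell-1$), which together with $\beta_0$ (a different hyperplane of $\gamma_1$, as $\beta_0\not\subseteq\gamma_3$) spans $\gamma_1$, forcing $\gamma_1=\gamma\subseteq\delta_0$; in the second, its $(\ell+1)$-space contains the distinct $\ell$-spaces $\gamma_1$ and $\gamma_3$, both lying in $\delta_0$, hence equals $\delta_0$, forcing $\gamma\subseteq\delta_0$. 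So no such $\gamma$ exists and $\Lambda\subseteq\pi^{\delta_0}$.

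The routine parts are the descriptions of $\pi_\beta$ and $\pi^\delta$ and the triple dichotomy; the step I expect to require the most care is the globalization argument above, where several almost-identical subcases must be organized so that the logic stays transparent, with a secondary nuisance being the ``line $\Rightarrow$ adjacency'' reduction unless one simply invokes the classical classification of Grassmannian lines.
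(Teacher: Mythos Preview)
Your argument is correct. The paper does not prove this lemma at all: it is stated as a ``basic fact'' with references to Chow~\cite[\S II.1]{Chow}, Hirschfeld--Thas, Harris, and~\cite{GPP}, and is then used immediately to deduce that every line in $\glv$ has the form $\pi_\beta^\delta$. So there is no ``paper's proof'' to compare against; what you have supplied is a self-contained justification of a result the authors take as known.

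That said, your route is essentially the classical one, going back to Chow's original paper. The two ingredients---that collinearity in the Pl\"ucker embedding forces adjacency ($\dim(\gamma\cap\gamma')=\ell-1$), and the triple dichotomy (three pairwise-adjacent $\ell$-spaces either share a common $(\ell-1)$-space or lie in a common $(\ell+1)$-space)---are exactly the combinatorial core of Chow's argument and of the expositions in Pankov and Wan cited later in the paper. Your globalization step is also standard, though your case analysis is a bit more explicit than some treatments, which instead argue symmetrically that a pairwise-adjacent family with no common $(\ell-1)$-subspace must lie in a common $(\ell+1)$-space.

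One small remark on presentation: in the ``line $\Rightarrow$ adjacency'' step you invoke the fact that $a_1\wedge\cdots\wedge a_j+b_1\wedge\cdots\wedge b_j$ is not decomposable for $j\ge 2$ when the $a_i,b_i$ are independent. This is indeed standard (e.g.\ the annihilator $\{v:v\wedge\omega=0\}$ of a nonzero decomposable $j$-vector is $j$-dimensional, whereas here it is~$\{0\}$), and works over any field; but since the paper \emph{derives} the description of lines from this very lemma, it is worth making clear that your adjacency argument is independent and not circular. Everything else is routine.
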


We note that the intersection of all $\gamma$  contained in a $\pi_{\beta}$  is $\beta$. Therefore, $\beta \to  \pi_{\beta}$ is one-one. Similarly the vector space sum of all $\gamma$ in a $\pi^{\delta}$ is $\delta$, therefore $\delta \to  \pi^{\delta}$ is one-one. In case  $m=2 \ell$, both $\pi_{\beta}$ and $\pi^{\delta}$ are $\ell$-dimensional, but $\pi_{\beta} \neq \pi^{\delta}$ because the intersection  of all $\gamma$ in  a $\pi_{\beta}$ is $\beta$, where as it is just $\{0\}$ for a $\pi^{\delta}$.  We note that  $\pi^{\delta}$ and $\pi_{\beta}$  are projectively isomorphic to  $\bP(\delta^*)$ and  $\bP(V/\beta)$ respectively. Therefore the linear subspaces of $\pi^{\delta}$ and $\pi_{\beta}$ (of codimension $r$) are:
\beqn \label{eq:lin_glv}
\pi_W^{\delta} := \{ \gamma \in G_{\ell}(\delta)  :  \gamma \supset W\} \quad \text{for  } W \in G_{r}(\delta), \text{ and} \\
\nonumber  \pi_{\beta}^U := \{ \gamma \in \glv  : \beta \subset \gamma \subset U\} \quad  \text{for  } U\in G_{m-r}(V) \text{ with }
%  {\rm codim}(U) = r, \, 
\beta \subset U, 
\eeqn
respectively. 
In particular, the lines contained in a $\pi^{\delta}$ or in a $\pi_{\beta}$ are of the form $\pi_\beta^\delta$ for $\beta \in G_{\ell-1}(\delta)$. 
It follows that every line in $\glv$ is of the form $\pi_{\beta}^{\delta}$. We also note that $\pi_{\beta} \cap \pi^{\delta}$ is the line $\pi_{\beta}^{\delta}$ if $\beta \subset \delta$, and is empty otherwise. Similarly $\pi_{\beta} \cap \pi_{\beta'}$ is the point $\beta+\beta'$ or  empty depending on whether $\beta, \beta'$ are collinear in $\gkv$ or not, and $\pi_{\delta} \cap \pi_{\delta'}$ is the point $\delta \cap \delta'$  or empty depending on whether $\delta, \delta'$ are collinear in $\gmv$ or not.
We now state an important theorem of Chow which will be used in the sequel.

\begin{theorem}[Chow]
\label{chow_thm} Let $V = \bK^m$ with $m \geq 3$. Then 
\[ \emph{Aut}(\glv) =   \begin{cases}      {\rm im}(\rho) \simeq \PgL (V)     & \mbox{ if }   m \neq 2 \ell, \\
{\rm im}(\rho) \rtimes_{\ast_{\ell}} \mathbb Z / 2 \mathbb Z \simeq \PgL (V) \rtimes_{-t} \mathbb Z / 2 \mathbb Z  & \mbox{ if }  m=2 \ell.  \end{cases} \]
\end{theorem}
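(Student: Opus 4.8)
The plan is to prove Chow's theorem by showing that any $f \in \Aut(\glv)$ is induced either by a semilinear map of $V$ (giving an element of $\im(\rho)$) or, when $m = 2\ell$, by a semilinear map composed with the passage to dual spaces (giving the extra $\ast_\ell$). The first observation is that $\im(\rho) \subseteq \Aut(\glv)$ and, when $m = 2\ell$, also $\ast_\ell \in \Aut(\glv)$: indeed, $\wedge^\ell f$ preserves lines because it carries $\pi_\beta^\delta$ to $\pi_{f(\beta)}^{f(\delta)}$, and $\ast_\ell$ carries the Grassmannian $\glv$ to $G_{m-\ell}(V) = \glv$ by Proposition~\ref{Hodge_prop}, interchanging the two families $\pi_\beta \leftrightarrow \pi^{\delta^\circ}$-type subspaces, hence preserving lines. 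By Corollary~\ref{Hodge_crlry} the subgroup these generate is (isomorphic to) the right-hand side, so the content of the theorem is the reverse inclusion. So the real work is: given an arbitrary line-preserving bijection $f$ with line-preserving inverse, reconstruct it from a semilinear map.

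The key structural step is that $f$ must respect the two families of maximal linear subspaces described in Lemma~\ref{max_lin}. Since $f$ and $f^{-1}$ both take lines to lines, $f$ takes maximal linear subspaces to maximal linear subspaces (a linear subspace is maximal iff it is a maximal set pairwise-joined by lines lying in a common linear subspace), so $f$ permutes the set $\{\pi_\beta\} \cup \{\pi^\delta\}$. Using the incidence relations recalled just before the theorem — $\pi_\beta \cap \pi_{\beta'}$ is a point iff $\beta,\beta'$ are collinear, $\pi^\delta \cap \pi^{\delta'}$ is a point iff $\delta,\delta'$ are collinear, while a $\pi_\beta$ meets a $\pi^\delta$ in a line or not at all — one shows that when $m \neq 2\ell$ the two families have different dimensions, so $f$ must send $\{\pi_\beta\}$ to $\{\pi_\beta\}$ and $\{\pi^\delta\}$ to $\{\pi^\delta\}$; when $m = 2\ell$ the two families have the same dimension and $f$ either preserves each family or swaps them (the swap being realized, up to composing with an element of $\im(\rho)$, by $\ast_\ell$ via Proposition~\ref{Hodge_prop}). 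After possibly composing $f$ with $\ast_\ell$, we may thus assume $f$ preserves the family $\{\pi_\beta : \beta \in \gkv\}$, hence induces a bijection $\bar f : \gkv \to \gkv$ via $\beta \mapsto \pi_\beta \mapsto f(\pi_\beta) = \pi_{\bar f(\beta)}$, and similarly a bijection on $\gmv$; moreover $\bar f$ preserves collinearity in $\gkv$ in both directions, i.e. $\bar f \in \Aut(\gkv)$.

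This sets up an induction on $\ell$ (and by the Hodge star, effectively a two-sided induction reaching both ends). The base case is $\ell = 1$: then $\glv = \bP(V)$, a line in $\glv$ is an honest projective line, and the Fundamental Theorem of Projective Geometry (Theorem~\ref{fund_thm}, applicable since $m \geq 3$) gives $\Aut(\bP(V)) = \PgL(V) = \im(\rho)$; the dual end $\ell = m-1$ is handled by applying $\ast_{m-1}$. For the inductive step, from $\bar f \in \Aut(\gkv)$ one gets by induction a semilinear map $g$ of $V$ with $\wedge^{\ell-1} g$ inducing $\bar f$ (in the $m = 2(\ell-1)$ subcase one may first have to compose with a Hodge star at that level, but this is absorbed consistently). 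One then checks that $\wedge^\ell g$ and $f$ induce the same map on $\glv$: both induce $\bar f$ on the $\pi_\beta$'s, i.e. agree on which $(\ell-1)$-subspace a given $\pi_\beta$ corresponds to, and a point $\gamma \in \glv$ is determined by the pencil of $\beta$'s it contains together with the $\delta$'s containing it (equivalently $\gamma = \bigcap_{\beta \subset \gamma}$ of the associated data), so agreement on the $\pi_\beta$ family forces $(\wedge^\ell g)^{-1} \circ f$ to fix every $\gamma$, whence it is the identity. Therefore $f = \wedge^\ell g \in \im(\rho)$, completing the induction.

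I expect the main obstacle to be the dimension/incidence bookkeeping that pins down how $f$ acts on the two families of maximal linear subspaces — in particular, cleanly ruling out a ``mixed'' behavior when $m \neq 2\ell$ and correctly identifying the swap with $\ast_\ell$ (up to $\im(\rho)$) when $m = 2\ell$ — together with the verification that the induced map $\bar f$ on $\gkv$ really lies in $\Aut(\gkv)$, which requires checking that collinearity of $\beta, \beta'$ is detected purely by the line-preserving property of $f$ via the relation $\pi_\beta \cap \pi_{\beta'} \neq \varnothing$. Once the maximal-linear-subspace combinatorics is in place, the descent to $\gkv$, the appeal to the Fundamental Theorem of Projective Geometry at the base, and the uniqueness argument identifying $f$ with $\wedge^\ell g$ are comparatively routine; the role of Lemma~\ref{rho_def} is exactly to guarantee that this reconstructed $g$ (hence the resulting element of $\PgL(V)$) is unique, so that the isomorphisms in the statement, and the semidirect product description via Corollary~\ref{Hodge_crlry}, hold on the nose.
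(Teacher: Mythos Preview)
The paper does not prove this theorem. Immediately after stating it, the paper writes: ``Proofs of this theorem are given in Chow \cite{Chow}, Pankov \cite{Pankov}, Wan \cite{Wan} and Nemitz \cite{Nemitz},'' and then only explains how its formulation differs from those references (the definition via collinearity-preserving bijections rather than line-preserving ones, and the identification of the correlation case with $\ast_\ell$ via Corollary~\ref{Hodge_crlry}). So there is no ``paper's own proof'' to compare against; Chow's theorem is imported as a black box.

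That said, your outline is essentially the standard proof one finds in the cited sources (Chow, Pankov, Wan): reduce to the action on maximal linear subspaces via Lemma~\ref{max_lin}, use the incidence pattern (point-intersection within a family, line-intersection across families) to show $f$ either preserves or swaps the two families, compose with $\ast_\ell$ if necessary, then descend to $G_{\ell-1}(V)$ and invoke the Fundamental Theorem of Projective Geometry at the bottom. The places you flag as obstacles are exactly where the real work lies. Two points deserve more care than you give them. First, ruling out ``mixed'' behavior when $m = 2\ell$ is not automatic: one needs a connectivity argument on the incidence graph of maximal subspaces (if $f(\pi_{\beta_0})$ is a $\pi^{\delta}$, then every $\pi_{\beta}$ meeting $\pi_{\beta_0}$ in a point must also go to the $\pi^{\delta}$-family, and so on). Second, the final identification step---showing that $(\wedge^\ell g)^{-1} \circ f$ is the identity once it fixes each $\pi_\beta$ setwise---requires knowing that it fixes each $\pi_\beta$ \emph{pointwise}, which in turn uses that a point $\gamma$ is the unique intersection of any two $\pi_\beta, \pi_{\beta'}$ with $\beta + \beta' = \gamma$; your sketch gestures at this but does not pin it down. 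With those two steps filled in, your argument would match the proofs in the references the paper cites.
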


\noi Proofs of this theorem are given in Chow \cite[\S II.1]{Chow}, Pankov  \cite[Theorem 3.2]{Pankov},  Wan \cite[Theorem 3.45]{Wan} and Nemitz \cite{Nemitz}, with two differences.  The first is that Aut$(\glv)$ is defined, in these works, to be the group of bijections $f:\glv \to \glv$ such that $f, f^{-1}$ take collinear points to collinear points. It follows from the proof that all such $f, f^{-1}$ also  take lines to lines. Another difference is that in the case $m= 2 \ell$, if  $f \in $ Aut$(\glv)$ is not in im$(\rho)$ then the proofs in these works establish that $f$ is induced by a \emph{correlation}, i.e., there is a semilinear isomorphism $\tilde \theta: V^* \to V$ such that $f$ is (the restriction to $\glv$ of) the projectivization of $\wedge^{\ell} \tilde \theta \circ  H_{\ell}$ . We recall the maps $H_{\ell}, \; \theta$ and  $\ast_{\ell}$ defined in Proposition~\ref{Hodge_prop} and its proof. Since $\theta^{-1} \circ \tilde \theta \in \gL(V)$, the full group Aut$(\glv)$  (when $m = 2 \ell$) is the subgroup of $\PgL (\wlv)$ generated by im$(\rho)$ and $\ast_{\ell}$, the structure of which is given in Corollary~\ref{Hodge_crlry}.

\subsection{Automorphism group of $C(\ell,m)$} \label{aut_clm}
Here, and in the remainder of this subsection, we take $\bK$ to be the finite field $F=\Fq$ with $q$ elements. In this case $\Glm$ is a finite set.  We will denote the  number $ \binom{m}{\ell}$  as $k$, and the cardinality $|\Glm|$ as $n$. The projective system $\mP  = \Glm \subset \bP(\wl)$ gives rise to the $[n,k]_q$ Grassmann code $\mC = C(\ell,m)$. Let $\{P_1, \cdots, P_n\}$ denote representatives in $\wl$ of the  $n$ points of $\Glm$ taken in some fixed order. The code $\mC$ is given by
\[ 
\mC = \{ ( \omega(P_1), \cdots,\omega(P_n) )  :  \omega \in \wld \}
\]
The $k \times n$ matrix $M$ whose $(i,j)$-th entry is the $i$-th Pl\"{u}cker coordinate of the $j$-th point of $\mP$ is a generator matrix for $\mC = C(\ell,m)$. This requires picking  an ordering of the $k$ elements of $\mathcal I_{\ell,m}$. Now Definition~\ref{aut_def} of the automorphism group gives:
\begin{eqnarray} \label{eq:aut_glm}
{\rm Aut}(\mP) &=& \{ g \in  \PgL ( \wl) \; : \;\; g(\Glm) = \Glm \}, \text{ and} \\ 
\nonumber
{\rm Aut}(\mC) &\simeq& \pi^{-1}({\rm Aut}(\mP)),  
\end{eqnarray}
where $\pi: \gL( \wl) \to  \PgL( \wl)$ is the canonical  homomorphism. 

\begin{theorem}  \label{AutP} 
For the projective system $\mP   = \Glm \subset \bP(\wl)$, we have:   
\[ \emph{Aut}(\mP) \simeq \begin{cases}  \PgL(m,F)  & \mbox{ if } \; m \neq 2 \ell, \\
\PgL(m,F)  \rtimes_{-t} \mathbb Z / 2 \mathbb Z  & \mbox{ if } \; m = 2 \ell. \end{cases} \]
\end{theorem}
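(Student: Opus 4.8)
The plan is to identify $\Aut(\mP)$ with the group $\Aut(\glv)$ of line-preserving bijections of the Grassmannian and then invoke Chow's theorem (Theorem~\ref{chow_thm}) together with the structure results of Lemma~\ref{rho_def} and Corollary~\ref{Hodge_crlry}. The first step is to observe that an element $g \in \Aut(\mP)$ is, by definition, a projective semilinear isomorphism of $\bP(\wl)$ that maps the Pl\"ucker image of $\Glm$ onto itself; restricting $g$ to $\Glm$ thus gives a bijection $f \colon \Glm \to \Glm$, and $f^{-1}$ arises from $g^{-1}$. Since $g$ is projective semilinear, it carries linear subspaces of $\bP(\wl)$ to linear subspaces of the same dimension. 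The second step is to check that $f$ and $f^{-1}$ take lines of $\glv$ to lines of $\glv$: a line in $\glv$ is, by the discussion following Lemma~\ref{max_lin}, a one-dimensional linear subspace of $\bP(\wl)$ contained in $\Glm$ (of the form $\pi_\beta^\delta$); since $g$ sends one-dimensional linear subspaces to one-dimensional linear subspaces and preserves $\Glm$, it sends such a line to a one-dimensional linear subspace of $\bP(\wl)$ contained in $\Glm$, which is again a line. Hence $f \in \Aut(\glv)$, giving a map $\Aut(\mP) \to \Aut(\glv)$, $g \mapsto g|_{\Glm}$.

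The third step is to show this map is an isomorphism onto $\Aut(\glv)$. Injectivity follows because $\Glm$ is a nondegenerate subset of $\bP(\wl)$ (it spans the Pl\"ucker space projectively), so a projective semilinear isomorphism of $\bP(\wl)$ is determined by its restriction to $\Glm$; more carefully, if $g|_{\Glm}$ is the identity then $g$ fixes a spanning set of points of $\bP(\wl)$, and a short argument (fixing enough points in general position forces $g$ to be the identity in $\PgL(\wl)$) finishes it. Surjectivity is exactly the content of Chow's theorem as stated: every $f \in \Aut(\glv)$ is the restriction to $\Glm$ of an element of $\mathrm{im}(\rho)$, and, when $m = 2\ell$, possibly composed with $\ast_\ell$; in either case $f$ extends to a projective semilinear isomorphism of $\bP(\wl)$ preserving $\Glm$, i.e.\ to an element of $\Aut(\mP)$. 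Thus $\Aut(\mP) \simeq \Aut(\glv)$.

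Finally, one reads off the answer from Theorem~\ref{chow_thm}: if $m \neq 2\ell$, then $\Aut(\glv) = \mathrm{im}(\rho) \simeq \PgL(V) = \PgL(m,F)$ via the injective homomorphism $\rho$ of Lemma~\ref{rho_def}; if $m = 2\ell$, then $\Aut(\glv) = \mathrm{im}(\rho) \rtimes_{\ast_\ell} \Z/2\Z \simeq \PgL(V) \rtimes_{-t} \Z/2\Z$ by Corollary~\ref{Hodge_crlry}. Combining with the identification of the previous paragraph yields the stated formula for $\Aut(\mP)$.

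I expect the main obstacle to be the bookkeeping in the injectivity/identification step: one must be careful that the ``line-preserving'' notion used in Chow's theorem matches the intrinsic notion for $\glv$ and that a collineation of $\bP(\wl)$ fixing $\Glm$ pointwise is trivial. Both are essentially routine given that $\Glm$ spans $\bP(\wl)$ and contains enough points in general position, but they are the places where a careless argument could go wrong; everything else is a direct appeal to the already-established Theorem~\ref{chow_thm}, Lemma~\ref{rho_def}, and Corollary~\ref{Hodge_crlry}.
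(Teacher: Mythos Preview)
Your proposal is correct and follows essentially the same approach as the paper: restrict an element of $\Aut(\mP)\subset\PgL(\wl)$ to $\Glm$, observe that projective semilinear maps send lines of $\glv$ to lines of $\glv$, and then invoke Chow's theorem together with Lemma~\ref{rho_def} and Corollary~\ref{Hodge_crlry}. The paper's argument is more terse in one respect: since Chow's theorem already identifies $\Aut(\glv)$ with the concrete subgroup $\mathrm{im}(\rho)$ (or $\mathrm{im}(\rho)\rtimes_{\ast_\ell}\mathbb Z/2\mathbb Z$) of $\PgL(\wl)$, both $\Aut(\mP)$ and $\Aut(\glv)$ sit inside $\PgL(\wl)$ from the outset, so your separate injectivity argument (via nondegeneracy of $\Glm$) is unnecessary---the ``restriction'' map is simply an inclusion of subgroups, and the reverse inclusion is immediate since elements of $\mathrm{im}(\rho)$ and $\ast_\ell$ manifestly preserve $\Glm$.
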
 

This result is a consequence of Theorem \ref{chow_thm} (see also Westwick \cite{Westwick}): The group $\Aut(\mP)$ according to \eqref{eq:aut_glm} consists of those elements of $\PgL (\wl)$ which take $\Glm$ to itself. Since elements of $\PgL (\wl)$  preserve the set of lines in $\Glm$,  the result follows from Theorem \ref{chow_thm}.
\qed \\

In the remaining part of this subsection we give a explicit description of the group Aut$(\mC) = \pi^{-1}($Aut$(\mP))$. This is the subgroup of $\gL(\wl)$ generated by im$(\hat \rho)$, the scalar matrices $F^{\times}$, and  if $m = 2 \ell$  also $\ast_{\ell}$. We need some definitions.

\begin{definition} 
{\rm 
Let $\mG$ denote the subgroup of $\GL(\wl)$ generated by $\hat \rho(\GL(m,F))$ and the group $F^{\times}$ of scalar matrices.  Let $\lambda = (q-1,\ell)$ be the GCD of $q-1$ and $\ell$, and let $\lambda' = (q-1)/\lambda$. Also let $\mu_{\lambda}$ and $\mu_{\lambda'}$ denote, respectively, the group of $\lambda$-th and $\lambda'$-th roots of unity in $F^{\times}$ (identified with the corresponding scalar matrices in $\GL(m,F)$).
}
\end{definition}

If $A\in \mu_{\lambda}$, then clearly $\wedge^{\ell}\!A$ is the identity matrix. Thus we have  a natural epimorphism
\[ 
\varrho: (\GL(m,F)/\mu_{\lambda}) \times F^{\times} \to \mG \quad \text{ given by } \quad (A,c)  \mapsto c \, (\wedge^{\ell}\!A). 
 \]

If $(A,d) \in$ ker$(\varrho)$ then $d \,(\wedge^{\ell} A)$ is the identity element of $\GL(\wl)$, which in turn implies that $[A] \in \PGL (m,F)$ is in ker$(\rho)$. Since ker$(\rho)$ is trivial, we obtain $A = c I_m$ for some $c \in F^{\times}/\mu_{\lambda}$. Since $d (\wedge^{\ell}A)$ is identity, we get $d = c^{-\ell}$. Thus 
\beq
\label{eq:K_def}
{\rm ker}(\varrho)=  \{ (c I_m,c^{-\ell})  : c \in F^{\times}/\mu_{\lambda} \} \subset (\GL(m,F)/\mu_{\lambda}) \times F^{\times} . 
\eeq
Therefore, upon letting $K$ denote $\ker \varrho$, we obtain an isomorphism: 
\beq \label{eq:mGK} 
\mG \simeq ((\GL(m,F)/\mu_{\lambda}) \times F^{\times})/K . 
\eeq

We now describe Aut$(\mC)$ in terms of $\mG$. The group Aut$(\mC) \subset \gL(\wl)$ is generated by $\mG$, Aut$(F)$ (acting entrywise on on $\GL(\wl)$), and 
also $\ast_{\ell}$ if $m  = 2 \ell$.  Writing $g \in \mG$ as $c(\wedge^{\ell}\!A)$, we see that $\sigma(g) = \sigma(c)  \wedge^{\ell}\!\sigma(A)$ is in $\mG$ for all $\sigma \in \Aut(F)$; hence we have an injection $\Aut(F) \hookrightarrow \Aut(\mG)$. In the case $m = 2 \ell$, we see that  $\ast_{\ell} \circ g \circ \ast_{\ell}^{-1} = c \, {\rm det}(A) (\wedge^{\ell}\!A^{-t})$ is in $\mG$. Since $\ast_{\ell}^2$ differs from $I_{\GL(\wl)}$ by a sign factor of  $(-1)^{\ell}$, it is more convenient to work with $\tilde \ast_{\ell} = (\wedge^{\ell}\kappa) \circ \ast_{\ell}$, where $\kappa \in \GL(m,F)$ is the matrix with ones on the antidiagonal and zeros elsewhere, i.e it is the matrix of the linear transformation that sends $e_i \mapsto e_{m-i+1}$. Using $\kappa = \kappa^{-1} = \kappa^t$ and det$(\kappa) = (-1)^{\ell (2 \ell-1)}$, we get $\tilde \ast_{\ell}^2 = (\wedge^{\ell}\kappa  \circ \ast_{\ell})^2 =\wedge^{\ell}(\kappa \kappa^{-t}) \,{\rm det}(\kappa) \ast_{\ell}^2 = I_{\GL(\wl)}$. Therefore we have an injection of the group  $\mathbb Z / 2 \mathbb Z$ generated by $\tilde \ast_{\ell}$ into Aut$(\mG)$. We also note that $\tilde \ast_{\ell} \notin \mG$, for otherwise  $\wedge^{\ell} [f] \mapsto \ast_{\ell} \circ \wedge^{\ell} [f] \circ  \ast_{\ell}^{-1}$ would be an inner automorphism of $\im (\rho)$, which is not true (Proposition \ref{Hodge_prop}). The automorphisms of $\mG$ corresponding to Aut$(F)$ and $\tilde \ast_{\ell}$ commute, and hence we have an injection Aut$(F) \times \mathbb Z / 2 \mathbb Z  \hookrightarrow $ Aut$(\mG)$. In summary:
\beq \label{eq:mCmG}
{\rm Aut}(\mC)= \begin{cases}  \mG  \rtimes  {\rm Aut}(F)   & \mbox{ if }   m \neq 2 \ell, \\
                               \mG \rtimes  ({\rm Aut}(F)  \times \mathbb Z / 2 \mathbb Z) & \mbox{ if }  m=2 \ell. 
\end{cases} 
\eeq
 In order to clarify the structure of $\mG$ as a central extension of $\PGL (m,F)$ we consider the following exact sequences:
\begin{eqnarray} \label{eq:exct1}
 1 \to &\mu_{\lambda'}& \xrightarrow{\imath_1} \GL(m,F)/\mu_{\lambda} \xrightarrow{\jmath_1} \PGL(m,F) \to 1 \\  \label{eq:exct2}
 1 \to &\mu_{\lambda'}& \xrightarrow{\imath_2} F^{\times}  \xrightarrow{\jmath_2} \mu_{\lambda} \to 1
\end{eqnarray}
\beq \label{eq:exct3}
1 \to \mu_{\lambda'} \times \mu_{\lambda'} 
\xrightarrow{\imath_1 \times \imath_2} (\GL(m,F)/\mu_{\lambda}) \times F^{\times}  \xrightarrow{\jmath_1 \times \jmath_2} \PGL (m,F) \times \mu_{\lambda} \to 1
 \eeq
where the map $\imath_1$ is given by $d \mapsto d^{1/\lambda} I_m$ (the map $d \mapsto d^{1/\lambda}$  is the inverse of the isomorphism
$F^{\times}/\mu_{\lambda} \to \mu_{\lambda'}$ given by $c \mapsto c^{\lambda}$), whereas the map $\imath_2$ is given by $d \mapsto d^{\ell/\lambda}$ (note that $\imath_2$ is injective because $(\ell/\lambda,\lambda')=1$). The map $\jmath_1$ is the projectivization map induced from $\GL(m,F) \to \PGL(m,F)$. The map $\jmath_2$ takes $c \mapsto c^{\lambda'}$.  The exact sequence \eqref{eq:exct3} is obtained from \eqref{eq:exct1}-\eqref{eq:exct2} by taking products in each factor. We now consider the subgroup $K \subset  (\GL(m,F)/\mu_{\lambda})\times F^{\times}$, and the subgroup $K'  \subset \mu_{\lambda'} \times \mu_{\lambda'}$ defined by:
\[ 
K'=\{ (d,d^{-1})  : d \in \mu_{\lambda'}\} = \{(c^{\lambda},c^{-\lambda})  : c \in F^{\times}/\mu_{\lambda}\}
 \subset \mu_{\lambda'} \times \mu_{\lambda'}.
\]

We observe that the map $\imath_1 \times \imath_2$ of \eqref{eq:exct3} carries $K'$ isomorphically to $K$. Thus passing to the quotients $(\mu_{\lambda'} \times \mu_{\lambda'})/K'  \simeq \mu_{\lambda'}$  and $((\GL(m,F)/\mu_{\lambda}) \times F^{\times})/K \simeq \mG$ in \eqref{eq:exct3}, we get the following exact sequence:
\beq \label{eq:exct4}
 1 \to \mu_{\lambda'} \xrightarrow{[\imath_1 \times \imath_2]} \mG  \xrightarrow{[\jmath_1 \times \jmath_2]} \PGL (m,F) \times \mu_{\lambda} \to 1
\eeq
Thus the exact sequence  \eqref{eq:exct4} expresses $\mG$  as a central extension by $\mu_{\lambda'}$ of $\PGL(m,F) \times \mu_{\lambda}$. The process of arriving at  \eqref{eq:exct4} from \eqref{eq:exct1}-\eqref{eq:exct2}  can be understood in terms of group cohomology. Using the correspondence between central $\mu_{\lambda'}$-extensions of a group $G$ and elements of  $H^2(G,\mu_{\lambda'})$ (with $G$ acting trivially on $\mu_{\lambda'}$), let $\alpha \in  Z^2(\PGL(m,F) ,\mu_{\lambda'})$ and  $\beta \in Z^2(\mu_{\lambda},\mu_{\lambda'})$ be cocycles representing \eqref{eq:exct1} and \eqref{eq:exct2}.
If $p_1$ and $p_2$ denote the projection homomorphisms from $\PGL(m,F)  \times \mu_{\lambda}$ to $\PGL(m,F)$ and   $\mu_{\lambda}$ respectively, then
the $\mu_{\lambda'}$-extension of $\PGL(m,F)  \times \mu_{\lambda}$ that corresponds to the cocycle  $p_1^*(\alpha) +p_2^*(\beta)$
is given by the amalgamated central product of the extensions corresponding to $\alpha$ and $\beta$, i.e.,   $((\GL(m,F)/\mu_{\lambda}) \times F^{\times})/K$. We now have a complete description of $\mG$ and (using \eqref{eq:mCmG}) of Aut$(\mC)$ .
 
\begin{theorem} \label{AutC}
Let $\lambda=(q-1,\ell)$, $\lambda' = (q-1)/\lambda$, and let $\mu_{\lambda}$ and $\mu_{\lambda'}$ denote, respectively, the group of $\lambda$-th and $\lambda'$-th roots of unity in $F^{\times}$. Let $p_1, p_2$ denote the projections of $\PGL(m,F)  \times \mu_{\lambda}$ onto its factors. The group $\mG$ is a central extension of $(\PGL(m,F) \times \mu_{\lambda})$ by $\mu_{\lambda'}$, corresponding to the class  $[p_1^*(\alpha)] +[p_2^*(\beta)] \in  H^{2}((\PGL(m,F)  \times \mu_{\lambda}),\mu_{\lambda'})$, where $[\alpha] \in  H^2(\PGL(m,F),\mu_{\lambda'})$ and  $[\beta] \in H^2(\mu_{\lambda},\mu_{\lambda'})$ are classes representing the $\mu_{\lambda'}$-extensions $ \GL(m,F)/\mu_{\lambda}$ and $F^{\times}$ appearing in \eqref{eq:exct1} and \eqref{eq:exct2}. 
\end{theorem}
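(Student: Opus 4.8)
The statement is a summary of the construction that produced \eqref{eq:exct4}, so the plan is to make each step of that construction explicit and then read off the resulting cohomology class. First I would check that the short exact sequences \eqref{eq:exct1} and \eqref{eq:exct2} are genuinely exact. In \eqref{eq:exct1}, the kernel of $\GL(m,F)/\mu_{\lambda}\to\PGL(m,F)$ is the image of the central scalars $F^{\times}/\mu_{\lambda}$, and $c\mu_{\lambda}\mapsto c^{\lambda}$ is an isomorphism $F^{\times}/\mu_{\lambda}\xrightarrow{\sim}\mu_{\lambda'}$ whose inverse (followed by multiplication by $I_m$) is $\imath_1$. In \eqref{eq:exct2}, $\jmath_2\colon c\mapsto c^{\lambda'}$ maps the cyclic group $F^{\times}$ of order $\lambda\lambda'$ onto its unique subgroup of order $\lambda$, namely $\mu_{\lambda}$, with kernel $\mu_{\lambda'}$, while $\imath_2\colon d\mapsto d^{\ell/\lambda}$ is an automorphism of $\mu_{\lambda'}$ because $(\ell/\lambda,\lambda')=1$ (any common divisor of $\ell/\lambda$ and $(q-1)/\lambda$, multiplied by $\lambda$, divides $(\ell,q-1)=\lambda$), so $\im(\imath_2)=\mu_{\lambda'}=\ker(\jmath_2)$. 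Taking the direct product of \eqref{eq:exct1} and \eqref{eq:exct2} slotwise gives the central extension \eqref{eq:exct3}, whose kernel term $\mu_{\lambda'}\times\mu_{\lambda'}$ is embedded via $\imath_1\times\imath_2$.

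Next I would verify, via the computation $(\imath_1\times\imath_2)(d,d^{-1})=(d^{1/\lambda}I_m,\,d^{-\ell/\lambda})=(cI_m,c^{-\ell})$ with $c=d^{1/\lambda}$, that $\imath_1\times\imath_2$ carries the antidiagonal $K'=\{(d,d^{-1}):d\in\mu_{\lambda'}\}$ isomorphically onto the subgroup $K$ of \eqref{eq:K_def}. Since $\mu_{\lambda'}\times\mu_{\lambda'}$ is central in $E:=(\GL(m,F)/\mu_{\lambda})\times F^{\times}$, so is $K$; the quotient $E/K$ is $\mG$ by \eqref{eq:mGK}, and $(\mu_{\lambda'}\times\mu_{\lambda'})/K'\cong\mu_{\lambda'}$ via $(a,b)\mapsto ab$ (note a coordinate projection does \emph{not} descend to this quotient; the multiplication map is the one with kernel $K'$). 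Dividing the three terms of \eqref{eq:exct3} by the compatible central subgroups $K'$, $K$, and the trivial subgroup then yields precisely \eqref{eq:exct4}, which is the first assertion of the theorem; centrality of $\mu_{\lambda'}$ in $\mG$ is inherited from the ambient product.

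For the cohomology class, the key point is that \eqref{eq:exct4} is the pushforward of the central extension \eqref{eq:exct3} along the surjection $\varphi\colon\mu_{\lambda'}\times\mu_{\lambda'}\to\mu_{\lambda'}$, $(a,b)\mapsto ab$, whose kernel is $K'$: pushing a central extension forward along a quotient $A\twoheadrightarrow A/N$ of its kernel amounts to dividing the middle term by the image of $N$, and that is exactly the passage from \eqref{eq:exct3} to \eqref{eq:exct4}. Now choose set-theoretic sections $s_1,s_2$ of $\jmath_1,\jmath_2$; then $s_1\times s_2$ is a section of \eqref{eq:exct3} with $2$-cocycle $((g,h),(g',h'))\mapsto(\imath_1(\alpha(g,g')),\imath_2(\beta(h,h')))$, so under the natural isomorphism $H^2(\PGL(m,F)\times\mu_{\lambda},\,\mu_{\lambda'}\times\mu_{\lambda'})\cong H^2(\PGL(m,F)\times\mu_{\lambda},\mu_{\lambda'})^{\oplus2}$ the class of \eqref{eq:exct3} is $(p_1^{*}[\alpha],p_2^{*}[\beta])$, where $[\alpha],[\beta]$ represent \eqref{eq:exct1} and \eqref{eq:exct2}. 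The induced map $\varphi_{*}$ restricts to the identity of $\mu_{\lambda'}$ on each summand, hence $\varphi_{*}(x,y)=x+y$, and so the class of \eqref{eq:exct4} is $[p_1^{*}(\alpha)]+[p_2^{*}(\beta)]$. (Equivalently, the section $[s_1\times s_2]$ of \eqref{eq:exct4} has $2$-cocycle the pointwise product $\alpha(g,g')\,\beta(h,h')$, whose class is this Baer sum.) This is the asserted class, which finishes the proof.

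I do not expect a genuine obstacle here: exactness of \eqref{eq:exct1}--\eqref{eq:exct2}, the identification $\mG\cong E/K$, the quotient description of a pushed-forward extension, and the additivity of $H^2$ in the coefficient group are all standard. The only points needing care are bookkeeping: keeping the $G$-action on the coefficients trivial throughout, using the \emph{multiplication} map $\mu_{\lambda'}\times\mu_{\lambda'}\to\mu_{\lambda'}$ (not a coordinate projection) --- it is the one with kernel $K'$ --- and remembering that injectivity of $\imath_2$ genuinely rests on $(\ell/\lambda,\lambda')=1$.
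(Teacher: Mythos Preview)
Your proposal is correct and follows essentially the same route as the paper: verify exactness of \eqref{eq:exct1}--\eqref{eq:exct2}, identify $K'$ with $K$ under $\imath_1\times\imath_2$, pass to the quotient to obtain \eqref{eq:exct4}, and then read off the cohomology class as the Baer sum $[p_1^*(\alpha)]+[p_2^*(\beta)]$. The paper phrases the last step as recognising $((\GL(m,F)/\mu_{\lambda})\times F^{\times})/K$ as the amalgamated central product of the two extensions, whereas you make the same identification more explicitly as the pushforward of \eqref{eq:exct3} along the multiplication map $\mu_{\lambda'}\times\mu_{\lambda'}\to\mu_{\lambda'}$ with kernel $K'$; these are the same construction, and your added care about which quotient map to use is a welcome clarification.
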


In the special case when  $(\lambda,\lambda')=1$, the extension $F^{\times}$ splits as $F^{\times} = \mu_{\lambda} \times \mu_{\lambda'}$. The class $[\beta]= 0$, and hence  $[p_1^*(\alpha)] +[p_2^*(\beta)] = [p_1^*(\alpha)]$. Thus $\mG \simeq (\GL(m,F)/ \mu_{\lambda})  \times \mu_{\lambda}$.  In particular if 
$\lambda = (q-1,\ell)= 1$, the group  $\mu_{\lambda}$ is trivial, and so it follows that  $\mG \simeq \GL(m,F)$. Similarly, if $\ell \equiv 0\imod{q-1}$, then $\mu_{\lambda} = F^{\times}$ and hence $\mG \simeq \PGL(m,F) \times F^{\times}$. \\

\noi
{\bf Remark:} The monomial automorphism group of the Grassmann $\mC(\ell,m)$, viz.,  $\MAut(\mC(\ell,m)) = {\rm Aut}(\mC(\ell,m)) \cap \GL(\wl)$ is $\mG$ if $m \neq 2 \ell$ and $\mG \rtimes  \mathbb Z / 2 \mathbb Z$ if $m  = 2 \ell$, where the semidirect product is as in \eqref{eq:mCmG}.

\section{Affine Grassmann codes} \label{sec4}

We continue to use the notations of the previous section. In particular, integers $\ell, m$ are kept fixed throughout and it may be tacitly assumed that $1< \ell < m$. To begin with, suppose the ground field $\bK$ is the finite field $F=\Fq$ with $q$ elements.  We recall $\mP = \glm \subset \bP(\wl)$ is the projective system defining the Grassmann code $C(\ell,m)$. Let $\mH_0$ denote the coordinate hyperplane of $\bP(\wl)$ given by the vanishing of the Pl\"{u}cker coordinate $p_{I_0}$ where $I_0:=(m-\ell+1, m-\ell+2, \dots,m)$. The intersection of $\mH_0$ with $\glm$ is a Schubert divisor of $\glm$. It will be denoted as $\Omega$.  The complement of $\Omega$, i.e., $\glm \setminus\!\mH_0$ will be denoted $W_0$. It is the big cell (of dimension $\ell(m-\ell)$) of $\glm$ in its standard cell structure. Let  $V_{m-\ell}$ denote the span of the first $m-\ell$ standard basic vectors  of $F^m$. We can also describe $\omg$ and $W_0$ as:
\beqn \label{eq:W_0_def}
\Omega& =& \{ \gamma \in \glm  : {\rm dim}(\gamma \cap V_{m-\ell}) > 0\} \text{ and } \\ \nonumber
 W_0 &=& \{ \gamma \in \glm  : {\rm dim }(\gamma \cap V_{m-\ell}) = 0\}. 
\eeqn

We consider the projective system $\mP^{\bA} = W_0 \subset \bP(\wl)$. It is known that the minimum distance of the Grassmann code $C(\ell,m)$ is $|W_0| = q^{\ell(m-\ell)}$ (see \cite{Nogin}). In other words a hyperplane of $\bP(\wl)$ can intersect $\glm$ in at most $|\glm|-|W_0| = O(q^{\ell(m-\ell)-1})$ points. As a consequence, we see that the projective system $\mP^{\bA} \subset \bP(\wl)$ is nondegenerate. The affine Grassmann code $C^{\bA}(\ell,m)$ is the code associated with this projective system. The length and dimension of this code are $n = q^{\ell(m-\ell)}$ and $k = \binom{m}{\ell}$ respectively. By construction, it is obtained from the Grassmann code $C(\ell,m)$ by puncturing on the coordinates corresponding to the points of $\mP \!\setminus\! \mP^{\bA}$. A $k \times n$ generator matrix $M^{\bA}$ for this code can be constructed as follows: We pick some ordering of the $k$ Pl\"{u}cker coordinates $\{p_I  :  I \in \mathcal I(\ell,m)\}$, such that the first of these is $p_{I_0}$ (as defined above). We pick some ordering of the $n$ points of $W_0$. The $j$-th column of the $k \times n$ generator matrix $M^{\bA}$ is the coordinate vector $\{p_I/p_{I_0}  : I \in \mathcal I(\ell,m)\}$ (in the chosen order) of the $j$-th point of $W_0$. The code $C^{\bA} (\ell,m)$ is also an example of an algebraic-geometric code $C(X,\mathcal L; \mP)$ (see \cite[Chapter 3.1]{T-V}) constructed by evaluating the global sections $H^0(X,\mathcal L(D))$ of a line bundle $\mathcal L(D)$ associated with a divisor $D$ of a smooth projective variety $X$ over $F$. In this construction the global sections are evaluated on some subset $\mP =\{P_1, \dots,P_n\}$ of $F$-rational points of $X$ (after choosing an isomorphism of $\mathcal L_{P_i}$ with $F$) disjoint from the support of $D$. For the affine Grassmann code, the triple $(X, D, \mP)$ is $(\glm, \omg, W_0)$. To see this we note that the vector space generated by $\{p_I/p_{I_0}  :  I \in \mathcal I(\ell,m)\}$ (viewed as functions on $W_0$) is just the space $L(\omg): = \{0\} \cup \{ f \in \bF_q(\glm)^{\times}  : {\rm div}(f) + \omg \geq 0\}$. Therefore the code $C(X,\mathcal L; \mP)$ is monomially equivalent to $C^{\bA} (\ell,m)$. The affine Grassmann code was introduced in a slightly different but equivalent formulation by Beelen, Ghorpade and H{\o}holdt \cite{BGH1}. For any $\gamma \in W_0$, there is a unique $\ell \times (m -\ell)$ matrix $A$ such that the rows of the $\ell \times m$ matrix $(A \,|\,  I_{\ell} )$  (where $I_{\ell}$ is the $\ell \times \ell$ identity matrix) form a basis for $\gamma$ (\cite[p. 193]{GH}). The Pl\"{u}cker coordinates $\{p_I/p_{I_0}  :  I \in \mathcal I(\ell,m)\}$ of a point $\gamma \in W_0$ were interpreted in  \cite{BGH1}, as the set of all $r \times r$ minors (for $0 \leq r \leq \ell$) of the matrix $A$ described above. In particular, any generator matrix for the code constructed in  \cite{BGH1}, differs from $M^{\bA}$ only by  a row transformation and  a column permutation. 

In this section we  determine the automorphism group $\Aut(\mP^{\bA})$ of the projective system, or equivalently  $\Aut(C^{\bA}(\ell,m))/F^{\times}$ (the automorphism group of the code modulo dilations). The group $\Aut(C^{\bA}(\ell,m))$ itself is a central $F^{\times}$-extension of $\Aut(\mP^{\bA})$. Of more interest in the case of affine Grassmann codes, is the permutation automorphism group, $\PAut(C^{\bA}(\ell,m))$. This is the subgroup of  the permutation group on the columns of $M^{\bA}$, consisting of permutations which preserve the row space of $M^{\bA}$. In \cite{BGH2}, a subgroup of $\PAut(C^{\bA}(\ell,m))$ was identified, and it was remarked that the full group could be larger and it was shown that this subgroup excludes an element (in fact, an involution) of $\PAut(C^{\bA}(\ell,m))$ when $\ell = m-\ell$. We show that the subgroup determined in \cite{BGH2} is, in fact, the full group $\PAut(C^{\bA}(\ell,m))$ when $m\ne 2\ell$, and that the excluded involution is essentially the only missing ingredient when $m=2\ell$. 

\subsection{Automorphisms of the big cell of the Grassmannian}
In this subsection, the ground field $\bK$ is arbitrary. Let $V = \bK^m$, and  let $V_{m-\ell} \subset V$ be the subspace spanned by the first $m-\ell$ standard basic vectors $\{e_1, \dots, e_{m-\ell}\}$. We recall that $W_0 \subset \glv$ consists of those $\gamma$ which are complementary to $V_{m-\ell}$. We define:
\[ 
{\rm Aut}(W_0) = \{g \in \PgL (\wlv)  :  g(W_0) = W_0\}.
\]
We introduce a decomposition of $\glv$ as $W_0 \cup W_1 \cup \dots  \cup W_{\ell}$ where: 
\beq \label{eq:W_i_def}
W_i: = \{\gamma \in \glv  :  {\rm dim}(\gamma \cap V_{m-\ell}) = i\}, \quad  \mbox{for} \;\, 0 \leq i \leq \ell .
\eeq
The following elementary lemma follows from the fact that $\glv \subset \bP(\wlv)$ is cut out by quadratic polynomials.
\begin{lemma} \label{line_glv}
If a line $L \subset \bP(\wlv)$ intersects $\glv$ in two distinct points $\gamma, \gamma'$, then either $L \subset \glv$ or $L \cap \glv = \{\gamma,\gamma'\}$.  
\end{lemma}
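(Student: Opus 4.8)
The plan is to exploit the fact that $\glv$ is cut out by quadrics, so that its intersection with any line is a scheme of degree at most $2$. Concretely, fix a line $L \subset \bP(\wlv)$, parametrize it as $[s\,\xi + t\,\xi']$ for $[\xi],[\xi'] \in L$ with $\xi,\xi' \in \wlv$ two linearly independent vectors, and consider a single quadratic Pl\"ucker relation $Q$ among those defining $\glv$. Restricting $Q$ to $L$ gives a homogeneous degree-$2$ form $Q(s\xi + t\xi')$ in the two variables $s,t$ (equivalently, a binary quadratic), and the vanishing locus of this form on $L \cong \bP^1$ is either all of $\bP^1$, or consists of at most two points (counted with multiplicity). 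Running over all the defining quadrics, $L \cap \glv$ is the intersection of these loci, so it is either all of $L$, or a finite set of at most two points of $L$.

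First I would set up the parametrization and record the trichotomy for a single quadric: for fixed $Q$, either $Q|_L \equiv 0$, or $Q|_L$ is a nonzero binary quadratic form with at most two zeros on $\bP^1$ (over $\bK$; over a non-algebraically-closed field it may have zero, one, or two $\bK$-rational zeros, but never more). Next, suppose $L$ meets $\glv$ in two distinct points $\gamma \neq \gamma'$. If every defining quadric $Q$ satisfies $Q|_L \equiv 0$, then every point of $L$ lies on $\glv$, i.e.\ $L \subset \glv$. Otherwise some defining quadric $Q_0$ has $Q_0|_L \not\equiv 0$; then $Q_0|_L$ is a nonzero binary quadratic vanishing at the two distinct points $\gamma,\gamma'$, so (after choosing coordinates $s,t$ on $L$ with $\gamma = [1:0]$, $\gamma' = [0:1]$) $Q_0|_L$ is a nonzero scalar multiple of $st$, whose only zeros on $L$ are $\gamma$ and $\gamma'$. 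Since $L \cap \glv$ is contained in the zero locus of $Q_0|_L$, we conclude $L \cap \glv = \{\gamma, \gamma'\}$. This gives the dichotomy.

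I do not expect a serious obstacle here; the one point requiring a little care is the behaviour over an arbitrary (in particular non-algebraically-closed) ground field $\bK$, which is why the statement is phrased as ``$L \subset \glv$ or $L \cap \glv = \{\gamma,\gamma'\}$'' rather than invoking B\'ezout in a naive form. The safe route is the elementary one above: a nonzero binary quadratic form over any field has at most two roots in $\bP^1(\bK)$ (indeed at most two roots in $\bP^1(\overline{\bK})$, hence at most two $\bK$-rational ones), because a nonzero polynomial of degree $\le 2$ in one affine variable has at most two roots. One should also note the degenerate sub-case where $Q_0|_L$ is a nonzero square $t^2$ or a constant: this cannot occur once we know $Q_0|_L$ vanishes at the \emph{two distinct} points $\gamma, \gamma'$, since a nonzero binary quadratic with two distinct roots is, up to scalar, exactly $st$. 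Thus the argument is self-contained given only that $\glv \subset \bP(\wlv)$ is defined by quadratic polynomials, a fact already recalled in Section~\ref{sec3}.
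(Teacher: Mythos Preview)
Your argument is correct and is exactly the approach the paper has in mind: the lemma is stated as an immediate consequence of the fact that $\glv$ is cut out by quadratic polynomials, and you have simply spelled out why that implies the dichotomy by restricting a defining quadric to the line and using that a nonzero binary quadratic has at most two roots in $\bP^1$.
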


\begin{theorem} \label{aut_w0}
Viewed as subgroups of $\PgL (\wlv)$, we have \emph{Aut}$(W_0) \subset $\emph{Aut}$(\glv)$. Moreover, every $g \in\,$\emph{Aut}$(W_0)$ also satisfies 
$g(W_i) = W_i$ for $0 \leq i \leq \ell$.
\end{theorem}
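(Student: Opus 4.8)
The plan is to first show that any $g \in \operatorname{Aut}(W_0)$ extends to (equals, in $\PgL(\wlv)$) an element of $\operatorname{Aut}(\glv)$ by proving that $g$ maps lines of $\glv$ to lines of $\glv$, and then to bootstrap this to the statement that $g$ preserves each stratum $W_i$. For the first part, I would exploit Lemma~\ref{line_glv}: a line $L \subset \glv$ is determined by any two of its points, and — crucially — $W_0$ meets every line of $\glv$ in more than two points (since $W_0$ is the complement of a hyperplane section, $\Omega = \glv \cap \mH_0$, and a line not contained in $\mH_0$ meets $\mH_0$ in exactly one point, so $|L \cap W_0| = |L| - |L\cap\Omega| \geq |\bP^1| - 1 \geq 2$; over a general field $|\bP^1|$ is infinite or at least has $q+1\ge 3$ points, and one checks the finitely many lines that could lie inside $\mH_0$ separately, or argues with $W_0$ being open dense in each $\pi_\beta^\delta$). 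Given $\gamma,\gamma' \in W_0$ collinear in $\glv$, let $L$ be their line; pick a third point $\gamma'' \in L \cap W_0$. Then $g(\gamma), g(\gamma'), g(\gamma'')$ are three points of $W_0$; I want them collinear in $\glv$. The line $M$ through $g(\gamma), g(\gamma')$ in $\bP(\wlv)$ either lies in $\glv$ or meets it in just those two points by Lemma~\ref{line_glv} — but I need an a priori reason $g$ sends collinear-in-$\glv$ triples to collinear-in-$\bP(\wlv)$ triples. This is where the geometric rigidity of $W_0$ inside $\glv$ has to be used: one characterizes the lines of $\glv$ through a point $\gamma \in W_0$ intrinsically in terms of $W_0$ alone — e.g. as the closures of maximal families $\pi_\beta^\delta \cap W_0$, which can be recognized because $W_0 \cong \mathbb{A}^{\ell(m-\ell)}$ has its own affine-linear structure (the matrix coordinates $A$ from the excerpt), and the traces of $\glv$-lines on $W_0$ are precisely certain affine lines/curves distinguishable by incidence. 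The cleanest route: show $L \cap W_0$ is, for each line $L$ of $\glv$, a maximal subset of $W_0$ that is "linear" in the sense of lying on a line of $\bP(\wlv)$, and that these are exactly the affine lines of $W_0$ of a special type; since $g$ is a projective transformation of $\bP(\wlv)$ preserving $W_0$, it permutes the lines of $\bP(\wlv)$ meeting $W_0$ in $\ge 2$ points, hence permutes these distinguished subsets, hence (taking closures) permutes lines of $\glv$. Then $g, g^{-1}$ both take lines to lines and $g \in \operatorname{Aut}(\glv)$.

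Once $g \in \operatorname{Aut}(\glv)$, by Chow's theorem (Theorem~\ref{chow_thm}) $g$ is either $\wedge^\ell[f]$ for some $[f] \in \PgL(V)$ or, when $m = 2\ell$, possibly $\ast_\ell \circ \wedge^\ell[f]$. In the first case $g$ is induced by a semilinear bijection $f$ of $V$; the condition $g(W_0) = W_0$ translates to $f(\gamma) \cap V_{m-\ell} = 0$ for all $\gamma$ complementary to $V_{m-\ell}$, and I claim this forces $f(V_{m-\ell}) = V_{m-\ell}$. Indeed if $f(V_{m-\ell}) \ne V_{m-\ell}$ then $\dim(f(V_{m-\ell}) \cap V_{m-\ell}) < m-\ell$, so one can choose an $\ell$-plane $\gamma$ complementary to $V_{m-\ell}$ with $f(\gamma) \cap V_{m-\ell} \ne 0$ (a dimension count: $f^{-1}(V_{m-\ell})$ is an $(m-\ell)$-plane different from $V_{m-\ell}$, so there is a line $\ell_0 \subset f^{-1}(V_{m-\ell})$ with $\ell_0 \cap V_{m-\ell} = 0$, extend $\ell_0$ to an $\ell$-plane $\gamma$ meeting $V_{m-\ell}$ trivially). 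Hence $f$ stabilizes $V_{m-\ell}$, and therefore $f$ preserves $\dim(\gamma \cap V_{m-\ell})$ for every $\gamma$: if $f(V_{m-\ell}) = V_{m-\ell}$ then $\dim(f(\gamma) \cap V_{m-\ell}) = \dim(f(\gamma) \cap f(V_{m-\ell})) = \dim f(\gamma \cap V_{m-\ell}) = \dim(\gamma \cap V_{m-\ell})$. That is exactly $g(W_i) = W_i$ for all $i$.

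Finally, the $m = 2\ell$ correlation case: write $g = \ast_\ell \circ \wedge^\ell[\tilde\theta]$ for a semilinear isomorphism $\tilde\theta\colon V^* \to V$ (as in the remark after Chow's theorem). Here $g$ sends an $\ell$-plane $\gamma$ to $\tilde\theta$ applied to the annihilator $\gamma^\perp \subset V^*$, i.e. to an $\ell$-plane of $V$; the condition $g(W_0) = W_0$ now reads $\tilde\theta(\gamma^\perp) \cap V_{m-\ell} = 0$ whenever $\gamma \cap V_{m-\ell} = 0$, and since $\gamma \mapsto \gamma^\perp$ interchanges "$\gamma$ transverse to $V_{m-\ell}$" with "$\gamma^\perp$ transverse to $V_{m-\ell}^\perp = (V/V_{m-\ell})^*$" (using $\ell = m - \ell$), the same argument as above forces $\tilde\theta$ to carry $V_{m-\ell}^\perp$ onto $V_{m-\ell}$, whence $\tilde\theta(\gamma^\perp) \cap V_{m-\ell}$ has the same dimension as $\gamma^\perp \cap V_{m-\ell}^\perp = (\gamma \cap V_{m-\ell})$ — giving $g(W_i) = W_i$ again. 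I expect the main obstacle to be the first paragraph: pinning down precisely why a projective transformation of $\bP(\wlv)$ that merely preserves the set $W_0$ must respect the line structure of $\glv$ — i.e. giving a clean intrinsic characterization of the traces on $W_0$ of the lines of $\glv$, so that Chow's theorem can be invoked. Everything after that is a routine dimension-count argument about stabilizing $V_{m-\ell}$.
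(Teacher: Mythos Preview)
Your overall architecture---first prove $g \in \Aut(\glv)$, then invoke Chow to identify $g$ as $\wedge^\ell[f]$ (or a correlation), then deduce $f(V_{m-\ell})=V_{m-\ell}$---is genuinely different from the paper's, and the gap you flag in step~1 is more serious than you suggest. The characterization ``lines of $\glv$ meeting $W_0$ are exactly the $\bP(\wlv)$-lines meeting $W_0$ in at least three points'' does follow from Lemma~\ref{line_glv} when $|\bK|\ge 3$, but fails for $q=2$ (a Grassmann line then has only three points, two in $W_0$). More seriously, even over large fields a point $\gamma\in W_r$ with $r\ge 2$ lies on \emph{no} Grassmann line meeting $W_0$: any such line $\pi_\beta^\delta$ has $\beta\subset\gamma$, hence $\dim(\beta\cap V_{m-\ell})\ge r-1\ge 1$, so every point of $\pi_\beta^\delta$ already lies in $\Omega$. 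Thus knowing that $g$ permutes the lines meeting $W_0$ tells you nothing about $g(\gamma)$ for $\gamma$ deep in $\Omega$, and you cannot conclude $g(\glv)=\glv$ from this alone; the ``check the remaining lines separately'' clause hides the whole difficulty.

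The paper reverses the logic: it proves $g(W_i)=W_i$ for all $i$ by induction on $i$ and deduces $g(\glv)=\glv$ only at the end, as the union. For the inductive step one constructs, for each $\gamma\in W_r$, a single Grassmann line $L$ through $\gamma$ with all \emph{other} points in $W_{r-1}$ (choose $\beta\subset\gamma$ with $\dim(\beta\cap V_{m-\ell})=r-1$, write $\gamma=\beta\oplus\bK u_1$ with $u_1\in V_{m-\ell}$, and take $L=\pi_\beta^{\beta\oplus\bK u_1\oplus\bK u_2}$ for any $u_2\notin\beta+V_{m-\ell}$). Since $g\in\PgL(\wlv)$, the image $g(L)$ is a line in $\bP(\wlv)$ with all but one point in $W_{r-1}\subset\glv$ by induction; Lemma~\ref{line_glv} then forces $g(L)\subset\glv$, so $g(\gamma)\in\glv$, and a short case analysis on $g(L)=\pi_{\beta'}^{\delta'}$ pins down $g(\gamma)\in W_r$. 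Chow's theorem is never invoked. Your steps~2--3 are correct and are essentially the content of the paper's \emph{subsequent} Corollary~\ref{aut_w0_1}, which uses Theorem~\ref{aut_w0} rather than proving it.
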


\noi
{\bf Remark:} The theorem is motivated by the special case when $\bK$ is algebraically closed. In this case for any $g \in \PGL(\wlv)$ with $g(W_0)=W_0$,  the intersection $g(\glv) \cap \glv$ is a closed subvariety of $\glv$ containing $W_0$. Since $\glv$ is the Zariski closure of $W_0$, it follows that $g(\glv) \cap \glv = \glv$.  For arbitrary fields this reasoning has to be replaced by an argument such as the following.

\begin{proof}  The first assertion follows from the second, because $\glv$ is the union $W_0 \cup \dots \cup W_{\ell}$. We prove the second assertion by induction on $i$. The  base case $i=0$ is true by hypothesis. We assume inductively that $g(W_i) = W_i$ for $0 \leq i \leq r-1$, and establish the result for $i=r$.  For any  $\gamma \in W_r$, we pick an $(\ell-1)$-dimensional subspace $\beta \subset \gamma$ such that $\dim (\beta \cap V_{m-\ell}) = r-1$, and pick $u_1 \in V_{m-\ell}$ satisfying  $\gamma = \beta \wedge u_1$. We pick any $u_2 \notin (\beta+V_{m-\ell})$ and consider $\gamma' = \beta \wedge (u_1+u_2)$. The line $L$ joining $\gamma$ and $\gamma'$ has exactly one point in $L \cap W_{r}$ namely $\gamma$, the remaining  points of $L$ are in $W_{r-1}$. Now let $h$ be any one of the two maps $g, g^{-1}$. Since $h \in \PgL (V)$, the image $h(L)$ is a line in $\bP(\wlv)$, and since $h(W_{r-1}) = W_{r-1}$ we see that $h(L) \!\setminus\! \glv$ has at most one point. Therefore, by Lemma \ref{line_glv}, we have $h(L) \subset \glv$, i.e., $h(\gamma) \in \glv$. It remains to prove that $h(\gamma) \in W_{r}$. To this end we make two observations about $h(L)$:  the first is that $|h(L) \cap W_{r-1}| >1$, because as mentioned above there is  atmost one point of $h(L)$ which is not in $W_{r-1}$. The second observation is that $h(L)$  has atleast one point not in $W_{r-1}$, because if $h(L) \subset W_{r-1}$, then the fact that $h^{-1}$ takes $W_{r-1}$ to itself would imply that  $L \subset W_{r-1}$ which is not true. Together these observations imply that $h(\gamma) \notin W_{r-1}$ and all the other points of $h(L)$ are in $W_{r-1}$. Since $h(L)$ is a line on $\glv$, we can express it as $h(L) =\pi_{\beta'}^{\delta}$ for some $\beta' \in \gkv$ and  $\delta \in \gmv$. Any $\gamma' \in h(L)$ other than $h(\gamma)$  is in $W_{r-1}$, and hence  $\dim (\beta' \cap V_{m-\ell})$ is  $r-2$ or $r-1$ and $\dim (\delta \cap V_{m-\ell})$ is  $r-1$ or $r$. Of these four cases, in the case  $\dim (\beta' \cap V_{m-\ell})=r-1$ and $\dim (\delta \cap V_{m-\ell})=r$, the line $h(L)$ has exactly one point in $W_r$ which must be $h(\gamma)$ as was to be shown. In the other three cases, either  $h(L) \subset W_{r-1}$ or $|h(L) \cap W_{r-1}| =1$ which have been observed above to be false. Since $\gamma \in W_r$ was arbitrary, we have shown that $h(W_r) \subset W_r$. Since $h$ was allowed to be either of the maps $g, g^{-1}$, we find $g(W_r) = W_r$. 
\end{proof} 

We need some more notation before we can describe the subgroup $\Aut(W_0)$ of  $\Aut(\glv)$.  Let $P_{m-\ell,\ell}$ be the maximal parabolic subgroup of $\GL(V)$ consisting of matrices that carry $V_{m-\ell}$ to itself. (i.e.,  matrices whose $\ell \times m-\ell$ submatrix on the last $\ell$ rows and first $m-\ell$ columns is zero).
Let  $\hat P_{m-\ell,\ell} := P_{m-\ell,\ell} \rtimes {\rm Aut}(\bK)$ denote the subgroup of $\gL(V)$ consisting of transformations that carry $V_{m-\ell}$ to itself, and let $\hat P_{m-\ell,\ell}/\bK^{\times}$ be the corresponding subgroup of $\PgL (V)$. In the case  when $m = 2 \ell$, let $- \tilde t$ denote the automorphism $g \mapsto \kappa g^{-t} \kappa^{-1} = (\kappa g \kappa^{-1})^{-t}$ of $\GL(V)$ (where $\kappa$ was defined before \eqref{eq:mCmG}). There is an induced automorphism of  $PGL(V)$ and $\PgL  (V)$, again denoted by  $-\tilde t$. For $g \in P_{\ell,\ell}$, the matrix $\kappa g \kappa^{-1}$ has a zero for the submatrix on first $\ell$ rows and last $\ell$ columns. Therefore $(\kappa g \kappa^{-1})^{-t}$ is again in $ P_{\ell,\ell}$, and hence $-\tilde t$ is an  automorphism of $P_{\ell,\ell}/\bK^{\times}$ and $\hat P_{\ell,\ell}/\bK^{\times}$.

\begin{definition}  \label{Pll2_def}
{\rm 
Let $(P_{\ell,\ell}/ \bK^{\times}) \rtimes_{-\tilde t} \mathbb Z/ 2 \mathbb Z$ and $(\hat P_{\ell,\ell}/ \bK^{\times}) \rtimes_{-\tilde t} \mathbb Z/ 2 \mathbb Z$ denote
 the semidirect product of $P_{\ell,\ell}/\bK^{\times}$ and $\hat P_{\ell,\ell}/\bK^{\times}$ respectively, with $\mathbb Z/ 2 \mathbb Z$ where the generator of $\mathbb Z/ 2 \mathbb Z$ acts by the automorphism $-\tilde t$. 
}
\end{definition}

Continuing further with the case $m = 2 \ell$, we note that  $\tilde \ast_{\ell} \in \gL(\wlv)$ permutes (upto sign) the basic vectors $\{e_I :  I \in \mathcal I(\ell,2 \ell) \}$ of $\wlv$, and in particular sends $e_{I_0} \mapsto (-1)^{\ell (\ell+1)/2}  e_{I_0}$, where as before $I_0 = (\ell+1, \dots, 2 \ell)$. Therefore,  $\tilde \ast_{\ell} \in \PgL(\wlv)$ preserves the hyperplane $\mH_0 \subset \bP(\wlv)$. Since $\tilde \ast_{\ell} \in$ Aut$(\glv)$,  it  takes $W_0 = \glv \setminus \mH_0$ to itself, i.e.,  $\tilde \ast_{\ell} \in$ Aut$(W_0)$.  Now, $\Aut(\glv)$ is the subgroup of $\PgL (\wlv)$ generated by $\im (\rho)$ and $\ast_{\ell}$, where we may replace $\ast_{\ell}$ by $\tilde \ast_{\ell}$ because $( \ast_{\ell})^{-1} \tilde \ast_{\ell} = \wedge^{\ell} \kappa  \in \im (\rho)$. Writing an element of Aut$(\glv)$ as $\rho(f) \circ \tilde \ast_{\ell}^{i}$ with $i \in\{0,1\}$, it follows that  $\Aut(W_0)$ is generated by $\Aut(W_0) \cap \im (\rho)$ and $\tilde \ast_{\ell}$. Since, $\ast_{\ell}\circ \wedge^{\ell} f \circ  \ast_{\ell}^{-1} = \wedge^{\ell} f^{-t}$ in $\PgL(\wlv)$ (Proposition \ref{Hodge_prop}), we get $\tilde \ast_{\ell}\circ \wedge^{\ell} f \circ  {\tilde \ast_{\ell}}^{-1} = \wedge^{\ell}(\kappa  f^{- t} \kappa^{-1})$. We note that $\tilde \ast_{\ell} \notin$ im$(\rho)$, hence $\tilde \ast_{\ell} \notin 
 \rho( \hat P_{\ell,\ell}/\bK^{\times})$. We denote the subgroup of $\PgL (\wlv)$ generated by 
$\tilde \ast_{\ell}$ and $\rho( \hat P_{\ell,\ell}/\bK^{\times})$ by $\rho( \hat P_{\ell,\ell}/ \bK^{\times})\rtimes_{\tilde \ast_{\ell}} \mathbb Z/ 2 \mathbb Z$.
Writing:  
\beqs
 (\hat P_{\ell,\ell}/ \bK^{\times})\!\rtimes_{-\tilde t} \mathbb Z/ 2 \mathbb Z \!\!\!\!&=&\!\!\!\! \langle \hat P_{\ell,\ell}/ \bK^{\times}, \epsilon \, |\, \epsilon^2=I_{\PgL(m,\bK)}, \epsilon f \epsilon^{-1} = \kappa  f^{-t} \kappa^{-1} \rangle, \; \mbox{ and }\\
\rho(\hat P_{\ell,\ell}/ \bK^{\times})\!\rtimes_{\tilde \ast_{\ell}} \mathbb Z/ 2 \mathbb Z  \!\!\!\!&=&\!\!\!\!
\langle \rho( \hat P_{\ell,\ell}/ \bK^{\times}), \tilde \ast_{\ell} \, |\, \tilde \ast_{\ell}^2 = I_{\PgL(\wlk)},  \tilde \ast_{\ell} \wedge^{\ell}\!\!f \tilde \ast_{\ell}^{-1} = \wedge^{\ell} \!(\kappa f^{-t} \kappa^{-1}) \rangle,
\eeqs
 it follows that the map from $(\hat P_{\ell,\ell}/ \bK^{\times}) \rtimes_{-\tilde t} \mathbb Z/ 2 \mathbb Z$ to $\rho( \hat P_{\ell,\ell}/ \bK^{\times})\rtimes_{\tilde \ast_{\ell}} \mathbb Z/ 2 \mathbb Z$ that sends  $f \mapsto \wedge^{\ell}f$ and $\epsilon \mapsto \tilde \ast_{\ell}$  establishes an isomorphism between these groups.

\begin{corollary}  \label{aut_w0_1}
For the big cell $W_0$ of the Grassmannian $G_{\ell}(V)$, we have
\[
{\rm Aut}(W_0) = \begin{cases} \rho( \hat P_{m-\ell,\ell}/\bK^{\times} )  \quad \simeq \; \hat P_{m-\ell,\ell}/\bK^{\times}  & \mbox{ if }  \; m \neq 2 \ell, \\
\rho( \hat P_{\ell,\ell}/ \bK^{\times})\rtimes_{\tilde \ast_{\ell}} \mathbb Z/ 2 \mathbb Z \quad  \simeq \; (\hat P_{\ell,\ell}/ \bK^{\times}) \rtimes_{-\tilde t} \mathbb Z/ 2 \mathbb Z & \mbox{ if } \;  m = 2 \ell. 
\end{cases}
\]
\end{corollary}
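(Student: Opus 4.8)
The plan is to establish the two inclusions making up $\mathrm{Aut}(W_0)$, treating the cases $m \neq 2\ell$ and $m = 2\ell$ together as far as possible. First I would prove the easy inclusion: any element of $\rho(\hat P_{m-\ell,\ell}/\bK^\times)$ (and, when $m = 2\ell$, also $\tilde\ast_\ell$) lies in $\mathrm{Aut}(W_0)$. For $\rho(\hat P_{m-\ell,\ell}/\bK^\times)$ this is immediate: a semilinear map carrying $V_{m-\ell}$ to itself preserves the condition $\dim(\gamma \cap V_{m-\ell}) = 0$, hence preserves $W_0$; and $\tilde\ast_\ell$ was already observed in the text preceding the statement to fix $\mH_0$ and to lie in $\mathrm{Aut}(\glv)$, so it preserves $W_0 = \glv \setminus \mH_0$. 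The isomorphism claims with the semidirect-product groups $\hat P_{m-\ell,\ell}/\bK^\times$ and $(\hat P_{\ell,\ell}/\bK^\times)\rtimes_{-\tilde t}\bZ/2\bZ$ are exactly what was verified in the paragraphs just above the corollary (the $-t$-intertwining identity from Proposition~\ref{Hodge_prop}, rewritten via $\kappa$, gives the presentation match), so that part needs only a sentence of citation.

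The substantive direction is the reverse inclusion: every $g \in \mathrm{Aut}(W_0)$ lies in the claimed group. Here the key input is Theorem~\ref{aut_w0}, which already gives $g \in \mathrm{Aut}(\glv)$ and, moreover, $g(W_i) = W_i$ for all $i$. Combined with Chow's theorem (Theorem~\ref{chow_thm}), when $m \neq 2\ell$ we may write $g = \rho(f)$ for a unique $[f] \in \PgL(V)$, and when $m = 2\ell$ we write $g = \rho(f)$ or $g = \rho(f)\circ\tilde\ast_\ell$ (using that $\ast_\ell$ and $\tilde\ast_\ell$ differ by an element of $\mathrm{im}(\rho)$). In the second case, since $\tilde\ast_\ell$ itself preserves $W_0$, composing reduces us to showing $\rho(f) \in \mathrm{Aut}(W_0)$ forces $[f] \in P_{\ell,\ell}/\bK^\times$. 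So in all cases it suffices to show: if $\rho(f)(W_0) = W_0$ then $[f]$ (or $[f]^{-t}$, respectively) lies in $P_{m-\ell,\ell}/\bK^\times$, i.e. $f$ preserves $V_{m-\ell}$. The cleanest way to see this is via the stratification: $\rho(f)$ preserves $W_\ell = \pi_{V_{m-\ell}}$, the unique maximal linear subspace of $\glv$ consisting of all $\ell$-spaces inside $V_{m-\ell}$ — wait, more precisely $W_\ell = \{\gamma : \gamma \subset V_{m-\ell}\} = \pi^{V_{m-\ell}}$ when $\ell < m-\ell$, and is a single point $V_{m-\ell} \in G_{m-\ell}(V) \cong$ ... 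I would instead argue: $g$ preserves $W_i$ for all $i$, hence preserves the closure-type set $\Omega = W_1 \cup \dots \cup W_\ell = \glv \cap \mH_0$, and $\rho(f)$ preserving the Schubert divisor $\Omega$ (equivalently the coordinate hyperplane $\mH_0$ cut on $\glv$) is classically equivalent to $f$ stabilizing $V_{m-\ell}$; this last equivalence can be extracted from the fact that the linear span of $\Omega$ inside $\bP(\wlv)$ is $\mH_0$, so $\rho(f)$ maps $\mH_0$ to $\mathrm{span}(\rho(f)(\Omega)) = \mathrm{span}(\Omega) = \mH_0$, and a linear automorphism of $\bP(\wlv)$ of the form $\wedge^\ell f$ fixing the coordinate hyperplane $p_{I_0} = 0$ must have $\wedge^\ell f$ fixing the line $\bK e_{I_0}$, whence $f(V_{m-\ell}) = V_{m-\ell}$ (the annihilator description of $e_{I_0}$ under $\wedge^\ell f$).

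The main obstacle I anticipate is the last step just sketched — rigorously deducing "$f$ stabilizes $V_{m-\ell}$" from "$\wedge^\ell f$ stabilizes the line $\bK e_{I_0}$" over an arbitrary field. The cleanest route is to use that the decomposable vector $e_{I_0} = e_{m-\ell+1}\wedge\cdots\wedge e_m$ spans $\wedge^\ell U_0$ where $U_0 = \mathrm{span}(e_{m-\ell+1},\dots,e_m)$ is the complement used to define $I_0$, and that $(\wedge^\ell f)(\wedge^\ell U_0) = \wedge^\ell f(U_0)$; equality of the lines forces $f(U_0) = U_0$, and then $f(V_{m-\ell}) = V_{m-\ell}$ follows from the correspondence $W_0 = \{\gamma : \gamma \oplus V_{m-\ell} = V\}$ actually only pinning down $U_0$ — so I would need to be careful and argue directly with $V_{m-\ell}$ rather than $U_0$. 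The correct statement is: $g$ preserves the hyperplane $\mH_0 = \{p_{I_0} = 0\}$ of $\bP(\wlv)$, and this hyperplane, intrinsically, is the set of $\omega \in \wlv$ with $\omega \wedge e_{I_0^\circ}$... no: $\mH_0$ corresponds to the functional $e^{I_0} \in \wldv$, and $(\wedge^\ell f)^\ast$ must fix $\bK e^{I_0}$; writing $e^{I_0} = e^{m-\ell+1}\wedge\cdots\wedge e^m$ as a generator of $\wedge^\ell(\mathrm{Ann}\, V_{m-\ell})$ inside $\wedge^\ell V^\ast$, fixing its line forces $f^\ast$ to stabilize $\mathrm{Ann}\,V_{m-\ell}$, hence $f$ stabilizes $V_{m-\ell}$. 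I would present precisely this annihilator argument, which works over any field and closes the proof; the rest is bookkeeping between the group presentations already established above.
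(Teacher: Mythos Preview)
Your proposal is correct but takes a different route for the key reverse inclusion than the paper does. Both arguments start from Theorem~\ref{aut_w0}: any $g\in\Aut(W_0)$ lies in $\Aut(\glv)$ and preserves every stratum $W_i$. By Chow's theorem one reduces (after peeling off $\tilde\ast_\ell$ in the $m=2\ell$ case) to showing that $\rho(f)\in\Aut(W_0)$ forces $f(V_{m-\ell})=V_{m-\ell}$.

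Here the paper argues via the single stratum $W_\ell = G_\ell(V_{m-\ell})$: if some $v_0\in V_{m-\ell}$ had $f(v_0)\notin V_{m-\ell}$, choose $\gamma\in W_\ell$ with $v_0\in\gamma$; then $f(\gamma)\ni f(v_0)$ fails to lie in $V_{m-\ell}$, so $f(\gamma)\notin W_\ell$, contradicting $\rho(f)(W_\ell)=W_\ell$. This is a two-line direct argument using only that the top stratum is preserved.

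Your route instead passes through the hyperplane $\mH_0$: since $g$ preserves $\Omega=\glv\cap\mH_0$ and $\Omega$ spans $\mH_0$, the semilinear map $\wedge^\ell f$ preserves $\mH_0$, hence its transpose fixes the line $\bK\,e^{I_0}=\wedge^\ell(\mathrm{Ann}\,V_{m-\ell})$; via $(\wedge^\ell f)^\ast=\wedge^\ell(f^\ast)$ and the Pl\"ucker correspondence this yields $f^\ast(\mathrm{Ann}\,V_{m-\ell})=\mathrm{Ann}\,V_{m-\ell}$, whence $f(V_{m-\ell})=V_{m-\ell}$. This is sound over an arbitrary field (the span claim holds because every basis vector $e_I$ with $I\neq I_0$ contains some index $\le m-\ell$, so equals a Pl\"ucker point of $\Omega$), but it is longer and requires the auxiliary identifications $(\wedge^\ell f)^\ast=\wedge^\ell(f^\ast)$ and the nondegeneracy of $\Omega$ in $\mH_0$, which you would need to state and justify. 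The paper's $W_\ell$ argument bypasses all of this.
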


\begin{proof} 
Since any $g \in \hat P_{m-\ell,\ell}/\bK^{\times}$, by definition carries $V_{m-\ell}$ to itself, we see that $\rho(g)$  carries $W_0 = \{\gamma \in \glv  :  {\rm dim}(\gamma \cap V_{m-\ell}) = 0\}$  to itself.  Conversely, suppose $f \in \im (\rho)$ is in $\Aut(W_0)$.  Writing $f = \wedge^{\ell} g$ for some $g \in \PgL (V)$, we will prove that  $g(V_{m-\ell}) =  V_{m-\ell}$, and hence conclude that $f \in \hat P_{m-\ell,\ell}/\bK^{\times}$. To prove this suppose there is a $v_0 \in V_{m-\ell}$ with $g(v_0) \notin V_{m-\ell}$. We  pick a $\gamma \in G_{\ell}(V_{m-\ell}) = W_{\ell}$, such that $v_0 \in \gamma$. Since $f(\gamma)$, i.e., the image of the vector space $\gamma$ under $g$ contains $g(v_0) \notin V_{m-\ell}$, it follows that $f(\gamma) \notin W_{\ell}$, contradicting  the observation in Theorem \ref{aut_w0}, that $f$ carries each $W_i$ to itself bijectively, in particular $W_{\ell}$. In the case $m \neq 2 \ell$, since $\Aut(\glv)$ is $\im (\rho)$ and $\Aut(W_0) \subset \Aut(\glv)$, we conclude that $\Aut(W_0) =\rho(\hat P_{m-\ell,\ell}/\bK^{\times})$.  On the other hand, when $m = 2 \ell$, we have already observed that $\Aut(W_0)$ is generated by  $\tilde \ast_{\ell}$ and  $\Aut(W_0) \cap \im (\rho) = \rho(\hat P_{\ell,\ell}/ \bK^{\times})$. Hence $\Aut(W_0) = \rho( \hat P_{\ell,\ell}/ \bK^{\times})\rtimes_{\tilde \ast_{\ell}} \mathbb Z/ 2 \mathbb Z $.
\end{proof}

\subsection{Permutation Automorphisms  of $C^{\bA}(\ell,m)$}
To begin with, let us consider the semilinear as well as the monomial automorphisms of the affine Grassmann code $C^{\bA}(\ell,m)$.
By definition, the automorphism group of the projective system $\mP^{\bA} = W_0 \subset \bP(\wl)$ is the group $\Aut(W_0) \subset \PgL(\wl)$ determined in Corollary \ref{aut_w0_1}. The group $\Aut(C^{\bA}(\ell,m))$ is the subgroup $\pi^{-1}(\Aut(W_0))$, where $\pi:\gL(\wl) \to \PgL(\wl)$ is the quotient homomorphism.
In other words  $\Aut(C^{\bA}(\ell,m))$ is the subgroup of $\gL(\wl)$ generated by $\hat \rho( \hat P_{m-\ell,\ell})$ and  $F^{\times}$ and also $\tilde \ast_{\ell}$ if $m = 2 \ell$. Let $\mG^A$ denote the subgroup of $\GL(\wl)$ generated by $\hat \rho( P_{m-\ell,\ell})$ and $F^{\times}$. We recall from Section \ref{aut_clm} the epimorphism $\varrho: (\GL(m,F)/\mu_{\lambda}) \times F^{\times} \to \mG$. Let $\varrho_1$ denote the restriction of $\varrho$ to  $(P_{m-\ell,\ell}/\mu_{\lambda}) \times F^{\times}$. The image of $\varrho_1$ is $\mG^A$ and since $K = $ ker$(\varrho)$  (see \eqref{eq:K_def})  is contained in  $(P_{m-\ell,\ell}/\mu_{\lambda}) \times F^{\times}$, we have  ker$(\varrho_1) = K$ as well. Therefore, $\varrho_1$ induces  an isomorphism
\beq 
\label{eq:mGAK} 
\mG^A \simeq ( (P_{m-\ell,\ell}/\mu_{\lambda}) \times F^{\times})/K .
\eeq

The group MAut$(C^{\bA}(\ell,m)) =$ Aut$(C^{\bA}(\ell,m)) \cap GL(\wl)$ is $\mG^A$ if $m \neq 2 \ell$ and generated by $\mG^A$ and $\tilde \ast_{\ell}$ if $m  = 2 \ell$. The group Aut$(C^{\bA}(\ell,m))$ is generated by MAut$(C^{\bA}(\ell,m))$ and Aut$(F)$. Since the automorphisms of Aut$(\mG)$ induced by Aut$(F)$ and $\tilde \ast_{\ell}$ (see the discussion preceding \eqref{eq:mCmG}) clearly take $\mG^A$ to itself we have injections  Aut$(F) \hookrightarrow $ Aut$(\mG)$, and in case $m = 2 \ell$,  Aut$(F) \times \mathbb Z / 2 \mathbb Z  \hookrightarrow $ Aut$(\mG)$. 
In summary:

\begin{theorem} \label{thm:AutAff} 
The automorphism group of the affine Grassmann code $C^{\bA}(\ell,m)$ is given by 
\beqs \label{eq:mCmGA}
{\rm Aut}(C^{\bA}(\ell,m))= \begin{cases}  \mG^A  \rtimes  {\rm Aut}(F)   & \mbox{ if }   m \neq 2 \ell, \\
                               \mG^A \rtimes  ({\rm Aut}(F)  \times \mathbb Z / 2 \mathbb Z) & \mbox{ if }  m=2 \ell.  \end{cases}  
\eeqs
whereas the monomial automorphism group of $C^{\bA}(\ell,m)$ is given by 
\beqs
{\rm MAut}(C^{\bA}(\ell,m)) = \begin{cases}  \mG^A  & \mbox{ if }   m \neq 2 \ell, \\
                               \mG^A \rtimes  \mathbb Z / 2 \mathbb Z & \mbox{ if }  m=2 \ell.  \end{cases}
\eeqs
\end{theorem}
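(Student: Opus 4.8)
The plan is to read the statement off from Corollary~\ref{aut_w0_1} via the code/projective-system correspondence of Section~\ref{sec2}. Recall from the discussion after Definition~\ref{aut_def} and from \eqref{eq:aut_glm} that $\Aut(C^{\bA}(\ell,m)) = \pi^{-1}(\Aut(\mP^{\bA}))$, where $\pi:\gL(\wl)\to\PgL(\wl)$ is the quotient by the subgroup $F^{\times}$ of scalars and $\mP^{\bA} = W_0$ is the projective system of the affine Grassmann code. Corollary~\ref{aut_w0_1} identifies $\Aut(\mP^{\bA}) = \Aut(W_0)$ with $\rho(\hat P_{m-\ell,\ell}/\bK^{\times})$ when $m\ne 2\ell$ and with $\rho(\hat P_{\ell,\ell}/\bK^{\times})\rtimes_{\tilde\ast_\ell}\Z/2\Z$ when $m=2\ell$. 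Since $\pi^{-1}$ of a subgroup generated by a set $S$ is the subgroup generated by $\ker\pi = F^{\times}$ together with any chosen lifts of the elements of $S$, and since $\hat\rho(\hat P_{m-\ell,\ell})\subset\gL(\wl)$ lifts $\rho(\hat P_{m-\ell,\ell}/\bK^{\times})$ while $\tilde\ast_\ell$ itself lies in $\GL(\wl)$ with $\tilde\ast_\ell^2 = I_{\GL(\wl)}$ (Section~\ref{aut_clm}), we conclude that $\Aut(C^{\bA}(\ell,m))$ is generated by $\hat\rho(\hat P_{m-\ell,\ell})$, by $F^{\times}$, and, when $m=2\ell$, also by $\tilde\ast_\ell$.

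Next I would untangle these generators. Writing $\hat P_{m-\ell,\ell} = P_{m-\ell,\ell}\rtimes\Aut(F)$ and recalling that $\Aut(F)$ acts entrywise, the subgroup generated by $\hat\rho(P_{m-\ell,\ell})$ and $F^{\times}$ is by definition $\mG^A$, and the isomorphism \eqref{eq:mGAK} holds because the scalar matrices $cI_m$ lie in $P_{m-\ell,\ell}$, so the kernel $K$ of $\varrho$ from \eqref{eq:K_def} is contained in $(P_{m-\ell,\ell}/\mu_\lambda)\times F^{\times}$ and thus equals $\ker\varrho_1$. To get the semidirect-product structure I would verify three normalization facts: (i) for $\sigma\in\Aut(F)$ and $g = c\,(\wedge^{\ell}\!A)\in\mG^A$ with $A\in P_{m-\ell,\ell}$, one has $\sigma g\sigma^{-1} = \sigma(c)\,(\wedge^{\ell}\!\sigma(A))$ with $\sigma(A)\in P_{m-\ell,\ell}$, since the parabolic is cut out by the vanishing of a fixed block of matrix entries, a condition preserved by the entrywise Frobenius; (ii) when $m=2\ell$, the relation $\tilde\ast_\ell\,(\wedge^{\ell}\!f)\,\tilde\ast_\ell^{-1} = \wedge^{\ell}(\kappa f^{-t}\kappa^{-1})$ recorded before Definition~\ref{Pll2_def}, together with the fact noted there that $\kappa f^{-t}\kappa^{-1}\in P_{\ell,\ell}$ for $f\in P_{\ell,\ell}$, shows $\tilde\ast_\ell$ normalizes $\mG^A$; and (iii) conjugation by $\tilde\ast_\ell$ commutes with the $\Aut(F)$-action, because $\tilde\ast_\ell$ is represented by a matrix with all entries in $\{0,\pm 1\}$, which the Frobenius fixes. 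For the products to be internal, the intersections must be trivial: every element of $\gL(\wl)$ has a well-defined field-automorphism component, so the $F$-linear subgroup $\mG^A$ (resp.\ $\mG^A\rtimes\Z/2\Z$, since $\tilde\ast_\ell$ is linear) meets $\Aut(F)$ trivially, and $\tilde\ast_\ell\notin\mG^A$ because $\tilde\ast_\ell\notin\mG$ (Proposition~\ref{Hodge_prop}, as observed before \eqref{eq:mCmG}). This yields the asserted formula for $\Aut(C^{\bA}(\ell,m))$.

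Finally, $\MAut(C^{\bA}(\ell,m)) = \Aut(C^{\bA}(\ell,m))\cap\GL(\wl)$ by definition, and in the semidirect decomposition just obtained an element $g\mu$ with $g\in\mG^A$ (or $g\in\mG^A\rtimes\Z/2\Z$) and $\mu\in\Aut(F)$ is $F$-linear precisely when $\mu=1$; hence the intersection is exactly $\mG^A$ when $m\ne2\ell$ and $\mG^A\rtimes\Z/2\Z$ when $m=2\ell$. I do not expect a serious obstacle here: all the substantive input — that $\Aut(W_0)\subset\Aut(\glv)$ (Theorem~\ref{aut_w0}), Chow's theorem (Theorem~\ref{chow_thm}), the parabolic description of $\Aut(W_0)$ (Corollary~\ref{aut_w0_1}), and the structure of $\mG$ from Section~\ref{aut_clm} — is already in place, and what remains is bookkeeping; the only points needing care are the three normalization checks and the triviality of the intersections, i.e.\ separating the $F$-linear part from the Frobenius part and recording that $\tilde\ast_\ell\notin\mG^A$.
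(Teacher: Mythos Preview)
Your proposal is correct and follows essentially the same approach as the paper: identify $\Aut(C^{\bA}(\ell,m))$ with $\pi^{-1}(\Aut(W_0))$, invoke Corollary~\ref{aut_w0_1} to see this is generated by $\hat\rho(\hat P_{m-\ell,\ell})$, $F^{\times}$, and (when $m=2\ell$) $\tilde\ast_\ell$, then split off $\Aut(F)$ and $\tilde\ast_\ell$ as semidirect factors of $\mG^A$ via the normalization checks. You are somewhat more explicit than the paper about the three normalization facts and the triviality of the intersections (the paper simply refers back to the analogous discussion for $\mG$ preceding \eqref{eq:mCmG}), but the argument is the same.
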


Since we are mainly interested in the permutation automorphism group of  $C^{\bA}(\ell,m)$, we do not further  discuss the structure of $\Aut(C^{\bA}(\ell,m))$ as a central $F^{\times}$-extension of Aut$(W_0)$.  We now describe $\PAut(C^{\bA}(\ell,m))$.

\begin{theorem}
\label{thm:PAutAff} 
The permutation automorphism group of the affine Grassmann code $C^{\bA}(\ell,m)$ is given (up to isomorphism) by 
\[  {\rm PAut}(C^{\bA}(\ell,m)) \simeq 
\begin{cases}   P_{m-\ell,\ell}/F^{\times}  & \mbox{ if } \; m \neq 2 \ell, \\
 P_{\ell,\ell}/ F^{\times} \rtimes_{-\tilde t} \mathbb Z/ 2 \mathbb Z  & \mbox{ if } \; m = 2 \ell, \end{cases} 
\]
 where $P_{m-\ell,\ell}$ is the maximal parabolic subgroup of $\GL(V)$ of transformations preserving $V_{m-\ell}$ (the span of the first $m-\ell$ standard basic vectors of $F^m$), and where  $F^{\times} \subset P_{m-\ell,\ell}$ denotes the subgroup of scalar matrices, and $-\tilde t$ is the automorphism of $P_{\ell,\ell}$ defined in the discussion preceding Definition \ref{Pll2_def}.
\end{theorem}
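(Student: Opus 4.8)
The plan is to read $\PAut(\C)$ off from the monomial automorphism group $\MAut(\C)$ already determined in Theorem~\ref{thm:AutAff}, using that $\PAut(\C)=\MAut(\C)\cap S_n$. Set $\mP^{\bA}=W_0\subset\bP(\wl)$, let $I_0=(m-\ell+1,\dots,m)$, and let $P_1,\dots,P_n\in\wl$ be the representatives of the points of $W_0$ normalized so that each has $I_0$-th Pl\"ucker coordinate $1$; these are exactly the columns of the generator matrix $M^{\bA}$. Since the only scalar matrix that is a permutation matrix is the identity, the canonical map $\MAut(\C)\to\MAut(\C)/F^{\times}=\MAut(\mP^{\bA})$ meets $\PAut(\C)$ trivially on its kernel, so $\PAut(\C)\hookrightarrow\MAut(\mP^{\bA})$. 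On the other hand, restricting Corollary~\ref{aut_w0_1} to its linear part (and using injectivity of $\rho$ from Lemma~\ref{rho_def}) gives $\MAut(\mP^{\bA})=\rho(P_{m-\ell,\ell}/F^{\times})\simeq P_{m-\ell,\ell}/F^{\times}$ if $m\neq 2\ell$, and $\MAut(\mP^{\bA})=\rho(P_{\ell,\ell}/F^{\times})\rtimes_{\tilde\ast_{\ell}}\mathbb Z/2\mathbb Z\simeq(P_{\ell,\ell}/F^{\times})\rtimes_{-\tilde t}\mathbb Z/2\mathbb Z$ if $m=2\ell$ (the last isomorphism being the one set up in the discussion preceding Corollary~\ref{aut_w0_1}). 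So everything reduces to showing the inclusion $\PAut(\C)\hookrightarrow\MAut(\mP^{\bA})$ is \emph{onto}, i.e.\ that every $g\in\MAut(\mP^{\bA})$ lifts to a genuine permutation matrix of $\C\subset F^n$; combining this with the above yields the asserted isomorphism.

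For surjectivity I would invoke the explicit description, from Section~\ref{sec2}, of the monomial matrix attached to a collineation: if $\hat g\in\GL(\wl)$ represents $g$ and $\hat g P_j=\mu_j P_{\sigma(j)}$ with $\mu_j\in F^{\times}$ and $\sigma\in S_n$, then $g$ corresponds to the monomial matrix with $(i,j)$-entry $\mu_j^{-1}\delta_{i,\sigma(j)}$, and after rescaling $\hat g$ by a scalar this coset contains a permutation matrix precisely when the multiplier $\mu_j$ is independent of $j$. Because $\mu_j(\hat g_1\hat g_2)=\mu_{\sigma_2(j)}(\hat g_1)\,\mu_j(\hat g_2)$, it is enough to verify constancy of the multiplier on a generating set of $\MAut(\mP^{\bA})$. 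For $g=\rho([A])$ with $A\in P_{m-\ell,\ell}$ take $\hat g=\wedge^{\ell}A$: writing $\gamma\in W_0$ as the row space of $(N\,|\,I_{\ell})$ and $A=\bbsm A_{11}&A_{12}\\ 0&A_{22}\besm$ in the block form adapted to $V_{m-\ell}$ (with $A_{22}$ the $\ell\times\ell$ lower right block), the last $\ell$ rows of $A\bbsm N^{t}\\ I_{\ell}\besm$ form exactly the matrix $A_{22}$, so the $I_0$-th Pl\"ucker coordinate of $\wedge^{\ell}A\cdot P_j$ equals $\det(A_{22})$ irrespective of $\gamma$; hence $\mu_j\equiv\det(A_{22})$. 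When $m=2\ell$, for $g=\tilde\ast_{\ell}$ one uses that $\tilde\ast_{\ell}$ permutes the basis $\{e_I\}$ of $\wl$ up to sign and that $e_{I_0}$ is the unique basis vector sent to a scalar multiple of $e_{I_0}$, the scalar being $(-1)^{\ell(\ell+1)/2}$; hence $\mu_j$ is the constant $(-1)^{\ell(\ell+1)/2}$. This gives surjectivity.

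The decisive point is the multiplier computation in the second paragraph: everything hinges on the fact that conjugating by an element of $P_{m-\ell,\ell}$ alters the $p_{I_0}$-normalization by the single scalar $\det(A_{22})$, uniformly over $W_0$ — this is where the block-triangular shape of the parabolic (and, for $m=2\ell$, the explicit signed-permutation form of $\tilde\ast_{\ell}$ together with $\tilde\ast_{\ell}^{2}=I$) is genuinely used, and it is the step demanding care with Pl\"ucker coordinates; the rest is formal bookkeeping with central extensions. I expect this to be the main obstacle, the remaining items (the injection $\PAut(\C)\hookrightarrow\MAut(\mP^{\bA})$, the passage to $P_{m-\ell,\ell}/F^{\times}$ via $\rho$, and the identification of the $\mathbb Z/2\mathbb Z$-action with $-\tilde t$) being routine. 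As a by-product the argument shows the somewhat surprising fact that $\PAut(\C)\simeq\MAut(\C)/F^{\times}$, i.e.\ that $\C$ has no monomial symmetry beyond scalars times permutations, and it recovers the subgroup of $\PAut(\C)$ constructed in \cite{BGH2}, with the extra involution present when $m=2\ell$ appearing as the class of $\tilde\ast_{\ell}$.
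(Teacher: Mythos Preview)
Your argument is correct and rests on the same computational core as the paper's proof: the observation that for $A=\bbsm A_{11}&A_{12}\\ 0&A_{22}\besm\in P_{m-\ell,\ell}$ the map $\wedge^{\ell}A$ multiplies the coordinate $p_{I_0}$ by the single scalar $\det(A_{22})$, and that $\tilde\ast_{\ell}$ multiplies it by $(-1)^{\ell(\ell+1)/2}$. The packaging, however, is different. The paper works inside $\mG^A\simeq((P_{m-\ell,\ell}/\mu_{\lambda})\times F^{\times})/K$, singles out the subgroup $H=\{(A\bmod\mu_{\lambda},d):\det(A_{22})=d^{-1}\}$ whose image is $\PAut(\C)$, and then exhibits an explicit epimorphism $H\to P_{m-\ell,\ell}/F^{\times}$ with kernel $K$ (and similarly with $\tilde H$ and $\epsilon_1$ when $m=2\ell$). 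You instead bypass $\mG^A$, $K$, and $\mu_{\lambda}$ entirely by noting that $\PAut(\C)\hookrightarrow \MAut(\mC)/F^{\times}=\MAut(\mP^{\bA})$ is injective (the only scalar permutation matrix is the identity) and then proving it is surjective via the constant-multiplier computation. This is a cleaner route to the same destination: it makes transparent the conceptual fact $\PAut(\C)\simeq\MAut(\mP^{\bA})$ without the auxiliary quotients, while the paper's version has the advantage of writing down the isomorphism in completely explicit matrix terms compatible with the earlier description of $\mG^A$ and $\mG$ from \S\ref{aut_clm}. One small wording point: your phrase ``$e_{I_0}$ is the unique basis vector sent to a scalar multiple of $e_{I_0}$'' is exactly what is needed (since $\tilde\ast_{\ell}$ acts as a signed permutation of the $e_I$'s and fixes $e_{I_0}$ up to sign, no other $e_J$ can land on $\pm e_{I_0}$); just be careful not to conflate this with the stronger (and false) claim that $e_{I_0}$ is the only $e_I$ sent to a multiple of \emph{itself}.
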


\begin{proof}
$\PAut(C^{\bA}(\ell,m))$ is the subgroup of MAut$(C^{\bA}(\ell,m))$ consisting of those automorphisms which  preserve the set of vectors in $\wl$ having  Pl\"{u}cker coordinate $p_{I_0} = 1$, where $I_0:=(m-\ell+1, m-\ell+2, \dots,m)$.  We write a general element of $P_{m-\ell,\ell}$ as $\bbsm A & {\mathbf{u}}\\ 0 & B \besm$, where $A \in \GL(m-\ell,F)$, $B \in \GL(\ell,F)$ and  ${\mathbf{u}}$ is an $m-\ell \times \ell$ matrix.  We first consider the case $m \neq 2 \ell$. In this case MAut$(C^{\bA}(\ell,m)) = \mG^A$, and the subgroup PAut$(C^{\bA}(\ell,m))$ is the the subgroup of $\mG^A_1 \subset \mG^A$ consisting of elements which preserve the set of $\xi \in \wl$ with $p_{I_0}(\xi) = 1$. Let $H: = \varrho_1^{-1}(\mG^A_1)$. In other words:  
\beq
H :=  \left\{\left(\bbsm A & {\mathbf{u}}\\ 0 & B \besm \, {\rm mod }\, \mu_{\lambda}, \; d\right)  :  {\rm det}(B) =d^{-1} \right\} \subset P_{m-\ell,\ell}/ \mu_{\lambda}   \times F^{\times}.
\eeq
By definition of $H$ we see that $\varrho_1$ induces an isomorphism of $H/K$ with  $\mG^A_1 =$  PAut$(C^{\bA}(\ell,m))$ (where as before $K=$ ker$(\varrho) = $ker$(\varrho_1)$). Now, the epimorphism from $H$ to  $P_{m-\ell,\ell}/F^{\times}$ given by
\[   
(\bbsm A & {\mathbf{u}}\\ 0 & B \besm \, {\rm mod }\, \mu_{\lambda}, \; {\rm det}(B)^{-1})   \longmapsto 
\bbsm A & {\mathbf{u}}\\ 0 & B \besm \, {\rm mod }\, F^{\times} 
\]
has exactly $K$ for its kernel, thus proving that PAut$(C^{\bA}(\ell,m)) \simeq P_{m-\ell,\ell}/F^{\times}$. 

Next, we consider the case $m = 2 \ell$. Here, MAut$(C^{\bA}(\ell,m)) $ is the subgroup of $\GL(\wl)$ generated by $\mG^A$ and $\tilde \ast_{\ell}$:
\[ 
{\rm MAut}(C^{\bA}(\ell,m)) = \left\langle \,\mG^A, \tilde \ast_{\ell} \, |\,  \tilde \ast_{\ell}^2 = I_{\GL(\wl)},  \tilde \ast_{\ell}(c \,\wedge^{\ell}\!f)  \tilde \ast_{\ell}^{-1} = 
c \,{\rm det}(f) \, \wedge^{\ell}\!(\kappa f^{-t}\kappa^{-1}) \,\right\rangle.
\]
We consider the group:
\[ 
\tilde G:=\left\langle P_{\ell,\ell}/ \mu_{\lambda}   \times F^{\times}, \epsilon \,|\,  \epsilon^2 = I_{P_{\ell,\ell}/ \mu_{\lambda}   \times F^{\times}}, \,  \epsilon (f, c) \epsilon^{-1} = (\kappa f^{-t} \kappa^{-1}, {\rm det}(f) \, c )\right\rangle. 
\]
(where $\kappa f^{-t} \kappa^{-1}$mod $\mu_{\lambda}$ and det$(f)$ only depend on $f$ mod $\mu_{\lambda}$, because $-\tilde t$ preserves $\mu_{\lambda}$
and $\lambda$ divides $m = 2 \ell$).  The map $\varrho_1':\tilde G \to {\rm MAut}(C^{\bA}(\ell,m))$ that equals $\varrho_1$ on $P_{\ell,\ell}/ \mu_{\lambda}  \times F^{\times}$ and that sends $\epsilon \mapsto \tilde \ast_{\ell}$ is a well defined epimorphism, with ker$(\varrho_1') = K$ again. 
Since  $\tilde \ast_{\ell}$ sends  $e_{I_0} \mapsto (-1)^{\ell (\ell+1)/2}  e_{I_0}$, the subgroup PAut$(C^{\bA}(\ell,m)) \subset$ MAut$(C^{\bA}(\ell,m))$ is generated by $\mG^A_1$ and $ (-1)^{\ell (\ell+1)/2}  \tilde \ast_{\ell}$, and hence the group $\tilde H: = (\varrho_1')^{-1}($PAut$(C^{\bA}(\ell,m)))$ is generated by 
the group $H$ (defined above) and the element $\epsilon_1:=( I_{P_{\ell,\ell}/ \mu_{\lambda} },(-1)^{\ell (\ell+1)/2} ) \epsilon$. By definition of $\tilde H$,  $\varrho_1'$ induces an isomorphism $\tilde H/K \simeq$  PAut$(C^{\bA}(\ell,m))$.    The group $\tilde H \subset \tilde G$ has the presentation:
\[ 
\left\langle H, \epsilon_1 \, | \,   \epsilon_1^2 = I_H,  \;\epsilon_1 \left(\bbsm A & {\mathbf{u}}\\ 0 & B \besm \, {\rm mod }\, \mu_{\lambda}, \; {\rm det}(B)^{-1}\right)  \epsilon_1^{-1} =  \left( -\tilde t(\bbsm A & {\mathbf{u}}\\ 0 & B \besm)\, {\rm mod }\, \mu_{\lambda}, {\rm det}(A)\right) \right\rangle .
\]
Now $-\tilde t (\bbsm A & {\mathbf{u}}\\ 0 & B \besm) = \kappa   \bbsm A & {\mathbf{u}}\\ 0 & B \besm^{-t} \kappa^{-1}$ is of the form 
$\bbsm \tilde B & *\\ 0 & \tilde A \besm$, where $\tilde A_{ij} = (A^{-t})_{\ell-i+1, \ell-j+1}$. Since $\tilde A$ is obtained from $A^{-t}$ by
 permuting the rows as well as the columns by the same permutation, we see det$(\tilde A) = $ det$(A)^{-1}$ and hence $\epsilon_1 H \epsilon_1^{-1} = H$. 
The group  $P_{\ell,\ell}/ F^{\times} \rtimes_{-\tilde t} \mathbb Z/ 2 \mathbb Z$ of Definition \ref{Pll2_def} has the presentation:
\[ 
\left\langle P_{\ell,\ell}/ F^{\times}, \epsilon \, |\, \epsilon^2=I_{\PGL(m,F)}, \epsilon f \epsilon^{-1} = \kappa  f^{-t} \kappa^{-1} \right\rangle.
\]
The map from $\tilde H \to P_{\ell,\ell}/ F^{\times} \rtimes_{-\tilde t} \mathbb Z/ 2 \mathbb Z$ that restricts to the epimorphism $H \to P_{m-\ell,\ell}/ F^{\times}$ defined above in the case $m \neq 2 \ell$, and which  sends $\epsilon_1 \mapsto \epsilon$ is a well defined epimorphism with kernel being $K$ again. Therefore $\tilde H/K$ and hence  PAut$(C^{\bA}(\ell,m))$ are isomorphic to  $P_{\ell,\ell}/ F^{\times} \rtimes_{-\tilde t} \mathbb Z/ 2 \mathbb Z$.
\end{proof}

\section{Code associated with the Schubert divisor of $\glm$}
We use the notations of the previous section. Also $\ell, m\in \bZ$ are kept fixed and it will be assumed that $1< \ell < m$. We recall that the Schubert divisor $\omg$ is the intersection  of $\glm$ with the hyperplane $\mH_0$ of $\bP(\wl)$  as defined in Section \ref{sec4}. We throughout assume $m \geq 3$, for otherwise $\omg$ is $\bP^0$ or $\varnothing$. We consider the projective system $\mP_{\omg} = \omg \subset \mH_0$. The fact that the second higher weight $d_2$ of the Grassmann code $C(\ell,m)$ is $q^{\ell(m-\ell) - 1}(1 + q)$ (see \cite{GPP,Nogin}) together with the fact that $|\glv| = q^{\ell(m-\ell)}+|\omg|$, implies that any codimension $2$ subspace of $\bP(\wl)$ intersects $\glv$ in at most $|\omg|  - q^{\ell(m-\ell) - 1}$ points. In particular the projective system $\mP_{\omg}  \subset \mH_0$ is nondegenerate. The Schubert code $C_{\omg}(\ell,m)$ is the linear code associated with this projective system. The study of these codes, and of more general Schubert codes, goes back to \cite{GL}. It may be noted that in the notation of \cite{GL}, $C_{\omg}(\ell,m) = C_{\alpha}(\ell,m)$, where $\alpha\in \mathcal I(\ell,m)$ is given by $\alpha_1 = m-\ell$ and $\alpha_j =  m-\ell+j$ for $2\le j\le \ell$.  In this section we determine the automorphism group of $C_{\omg}(\ell,m)$. 

\subsection{Automorphisms of the Schubert divisor of the Grassmannian}
In this subsection  the Grassmann variety $\Glm$ will be defined over an arbitrary field $\bK$.  We define:
\[  
{\rm Aut}(\Omega) = \{g \in \PgL (\mH_0) :  g(\Omega) = \Omega\}. 
\]
We need a few results before we can determine $\Aut(\omg)$. We begin with an analogue of Lemma \ref{max_lin} for the Schubert divisor $\Omega$. We continue to use the notation introduced in  Lemma \ref{max_lin} and in the preceding and succeeding paragraph. In particular, $\beta$ and $\gamma$ denote elements of $G_{\ell-1}(V)$ and $G_{\ell+1}(V)$, respectively. 
 
\begin{lemma} \label{max_lin_schubert} 
The maximal linear subspaces of $\Omega$ are 
\beqs
&\pi_{\beta}&  \emph{of dimension } \; m-\ell, \quad \emph{where}\;\,\dim(\beta \cap V_{m-\ell}) > 0, \\
&\pi^{\delta}& \emph{ of dimension } \; \ell, \quad \emph{where}\;\,\dim(\delta \cap V_{m-\ell}) > 1, \\
&\tilde \pi_{\beta}:= \pi_{\beta}^{\beta +V_{m-\ell}}& \emph{ of dimension}\; m-\ell-1, \quad \emph{where}\;\, \dim(\beta \cap V_{m-\ell}) =  0, \\
&\tilde \pi^{\delta}= \pi^{\delta}_{\delta \cap V_{m-\ell}}& \emph{ of dimension }\; \ell-1,\quad \emph{where}\;\,\dim(\delta \cap V_{m-\ell}) = 1.
\eeqs
Here $V_{m-\ell}$ is as defined in the beginning of Section \ref{sec4}.
\end{lemma}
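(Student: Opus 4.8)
The plan is to combine the classification of maximal linear subspaces of $\glv$ from Lemma~\ref{max_lin} with the defining description of $\Omega$ as the locus $\{\gamma : \dim(\gamma\cap V_{m-\ell})>0\}$, together with the fact that $\Omega = \glv\cap\mH_0$ is cut out on $\glv$ by a single hyperplane. First I would observe that a linear subspace $\Lambda\subset\bP(\wlv)$ is contained in $\Omega$ if and only if $\Lambda\subset\glv$ \emph{and} $\Lambda\subset\mH_0$. By Lemma~\ref{max_lin} every linear subspace of $\glv$ lies in some $\pi_\beta$ or some $\pi^\delta$, so every linear subspace of $\Omega$ lies in $\pi_\beta\cap\Omega$ or $\pi^\delta\cap\Omega$ for suitable $\beta\in\gkv$, $\delta\in\gmv$. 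Hence it suffices to analyze these two intersections and identify their maximal linear subspaces, then check that no such subspace is properly contained in a linear subspace of $\Omega$ of a different type.

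The key computation is therefore local to a single $\pi_\beta\cong\bP(V/\beta)$ (resp.\ $\pi^\delta\cong\bP(\delta^*)$). For $\pi_\beta$: if $\dim(\beta\cap V_{m-\ell})>0$ then every $\gamma\supset\beta$ already meets $V_{m-\ell}$, so $\pi_\beta\subset\Omega$ entirely, giving the first family (dimension $m-\ell$). If $\dim(\beta\cap V_{m-\ell})=0$, then $\gamma\supset\beta$ lies in $\Omega$ exactly when $\gamma$ meets $V_{m-\ell}$, i.e.\ when the line $\gamma/\beta\subset V/\beta$ meets the image of $V_{m-\ell}$ in $V/\beta$; since $\beta\cap V_{m-\ell}=0$ that image is an $(m-\ell)$-dimensional subspace of the $(m-\ell+1)$-dimensional space $V/\beta$, i.e.\ a hyperplane, so $\pi_\beta\cap\Omega$ is a hyperplane in $\pi_\beta$, namely $\pi_\beta^{\,\beta+V_{m-\ell}}=\tilde\pi_\beta$ of dimension $m-\ell-1$. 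Dually for $\pi^\delta\cong\bP(\delta^*)$: a point $\gamma\subset\delta$ lies in $\Omega$ iff $\gamma\cap V_{m-\ell}\neq0$, and $\gamma\cap(\delta\cap V_{m-\ell})\neq 0$; setting $r=\dim(\delta\cap V_{m-\ell})$, the locus of $\ell$-dimensional $\gamma\subset\delta$ meeting the fixed $r$-dimensional subspace $\delta\cap V_{m-\ell}$ of the $(\ell+1)$-dimensional space $\delta$ is all of $\pi^\delta$ when $r\ge 2$ (since $r+\ell>\ell+1$ forces nonzero intersection), giving the second family (dimension $\ell$); when $r=1$ it is the sub-Grassmannian of hyperplanes of $\delta$ containing that line, which is the linear subspace $\pi^\delta_{\delta\cap V_{m-\ell}}=\tilde\pi^\delta$ of dimension $\ell-1$; and when $r=0$ it is the empty set. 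I would phrase the $r=1$ and $r=0$ cases cleanly by passing to $\delta^*$ and noting that $\{\gamma\subset\delta : \gamma\cap L\ne 0\}$ for a line $L$ corresponds to the hyperplane $L^\perp\subset\delta^*$.

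After these four computations I would check maximality: the subspaces $\pi_\beta$ (first type) and $\pi^\delta$ (second type) are already maximal linear subspaces of $\glv$ itself, hence a fortiori maximal in $\Omega$. For $\tilde\pi_\beta$ and $\tilde\pi^\delta$ one must verify they are not contained in any larger linear subspace lying inside $\Omega$; since any linear subspace of $\glv$ containing $\tilde\pi_\beta$ is either a strictly larger $\pi_{\beta'}^{U}$-type subspace of $\pi_\beta$ (which would then be $\pi_\beta$ itself, not contained in $\Omega$ by the case analysis) or lives in some $\pi^\delta$ — and here I would use the intersection facts recorded after Lemma~\ref{max_lin} (that $\pi_\beta\cap\pi^\delta$ is a line or empty) to conclude that $\tilde\pi_\beta$, having dimension $m-\ell-1\ge 2$, cannot sit inside any $\pi^\delta$. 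The dual argument handles $\tilde\pi^\delta$ (using $\ell-1\ge 2$, which holds since $\ell>1$... here one should be slightly careful: when $\ell=2$ the space $\tilde\pi^\delta$ is a single point and the maximality claim needs the genuinely different statement that it is not contained in a $\pi_\beta\cap\Omega$, which again follows from the line-or-empty intersection fact). The main obstacle I anticipate is precisely this maximality bookkeeping — ruling out containments \emph{across} the two families $\pi_\beta$ and $\pi^\delta$ — rather than the intersection computations themselves, which are routine linear algebra; the low-dimensional edge cases $\ell=2$ or $m-\ell$ small will need to be checked explicitly so that the dimension inequalities used to defeat cross-containments actually hold, and I would make sure the hypothesis $1<\ell<m$ is invoked exactly where needed.
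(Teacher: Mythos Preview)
Your approach is essentially the paper's: reduce via Lemma~\ref{max_lin} to the intersections $\pi_\beta\cap\Omega$ and $\pi^\delta\cap\Omega$, then case-split on $\dim(\beta\cap V_{m-\ell})$ and $\dim(\delta\cap V_{m-\ell})$. The paper states these intersections in one line each without your quotient-space justification, and—notably—proves only the direction ``every maximal linear subspace of $\Omega$ is one of the listed types'': it takes a maximal $\pi$, places it inside some $\pi_\beta\cap\Omega$ or $\pi^\delta\cap\Omega$, identifies that intersection as one of the four listed spaces, and concludes $\pi$ equals it by maximality of $\pi$. The paper does \emph{not} attempt the converse cross-containment check that you flag as the main obstacle, so your maximality bookkeeping goes beyond what the paper actually proves. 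The edge-case worries you raise are genuine (for instance when $m=\ell+2$ one has $\tilde\pi_\beta=\pi_\beta^{\beta+V_{m-\ell}}$ sitting inside $\pi^{\beta+V_{m-\ell}}\subset\Omega$, so $\tilde\pi_\beta$ is not maximal), but they simply do not arise in the paper's one-directional argument.
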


\begin{proof} Let $\pi$ be a maximal linear subspace of $\omg$. Since $\pi$ is also a linear subspace of $\glv$, $\pi$ is contained in some $\pi_{\beta} \cap \omg$ or some $\pi^{\delta} \cap \omg$. Suppose $\pi \subset \pi_{\beta} \cap \omg$. If ${\rm dim}(\beta \cap V_{m-\ell}) \geq  1$,  then  $\pi_{\beta} \cap \omg = \pi_{\beta}$ and so $\pi = \pi_{\beta}$,  whereas if ${\rm dim}(\beta \cap V_{m-\ell}) =  0$,  then $\pi_{\beta} \cap \omg = \pi_{\beta}^{\beta +V_{m-\ell}}$ and 
so $\pi=\tilde \pi_{\beta}$. Similarly,  suppose   $\pi \subset \pi^{\delta} \cap \omg$ (where dim$(\delta \cap V_{m-\ell}) \geq 1$ necessarily). If dim$(\delta \cap V_{m-\ell}) \geq  2$, then $\pi^{\delta} \cap \omg = \pi^{\delta}$ and hence $\pi = \pi^{\delta}$,  whereas  if  dim$(\delta \cap V_{m-\ell}) = 1$ then 
$\pi^{\delta} \cap \omg = \pi^{\delta}_{\delta \cap V_{m-\ell}}$ and hence $\pi = \tilde \pi^{\delta}$. 
\end{proof}

As before $(\cap_{\gamma \in \tilde \pi_{\beta}} \gamma) = \beta$ and $(\cup_{\gamma \in \tilde \pi^{\delta}}) = \delta$ show that $\beta \neq \beta'$ implies  $\tilde \pi_{\beta} \neq \tilde \pi_{\beta'}$, and  $\delta \neq \delta'$ implies $\tilde \pi^{\delta} \neq \tilde \pi^{\delta'}$. Moreover in case  $m = 2 \ell$, no $\tilde \pi_{\beta}$ is a $\tilde \pi^{\delta}$, although they are both $(\ell-1)$-dimensional. We recall that  $\pi_{\beta} \cap \pi_{\beta'}$ is the point  $\beta+\beta'$ when  dim$(\beta+\beta') = \ell$, and is empty otherwise. Therefore, $\tilde \pi_{\beta} \cap \tilde \pi_{\beta'}$ is the point  $\beta+\beta'$ when $\beta+\beta' \in W_1 \subset \glv$, and is empty otherwise. Similarly, $\pi_{\delta} \cap \pi_{\delta'}$ is the point $\delta \cap \delta'$  if dim$(\delta \cap \delta') = \ell$, and is empty otherwise. Therefore,  $\tilde \pi_{\delta} \cap \tilde \pi_{\delta'}$ is the point $\delta \cap \delta'$  if $\delta \cap \delta' \in W_1 \in \glv$, and is empty otherwise. Finally, $\pi_{\beta} \cap \pi^{\delta}$ is the line $\pi_{\beta}^{\delta}$ if  $\beta \subset \delta$, and is empty otherwise. Therefore,  $\tilde \pi_{\beta} \cap \tilde \pi^{\delta}$  is the point $\beta \oplus (\delta \cap V_{m-\ell}) \in W_1$ when  $\beta \subset \delta$, and is empty otherwise.  We record this as a lemma for later reference:

\begin{lemma} \label{max_W1_cap}
Let  $ \delta, \delta' \in G_{\ell+1}(V)$ and $\beta, \beta' \in G_{\ell-1}(V)$. Then:
\begin{enumerate}
	\item[{\rm (i)}] 
$\tilde \pi_{\delta} \cap \tilde \pi_{\delta'}$ equals $\delta \cap \delta' \in W_1$  if dim$(\delta \cap \delta') = \ell$, and is empty otherwise. 
\item[{\rm (ii)}]
$\tilde \pi_{\beta} \cap \tilde \pi^{\delta}$  equals $\beta \oplus (\delta \cap V_{m-\ell}) \in W_1$ when  $\beta \subset \delta$, and is empty otherwise.
\item[{\rm (iii)}]
$\tilde \pi_{\beta} \cap \tilde \pi_{\beta'}$ equals  $\beta+\beta'$ when $\beta+\beta' \in W_1 \subset \glv$, and is empty otherwise. 
\end{enumerate}
\end{lemma}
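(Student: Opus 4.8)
The plan is to read off all three statements directly from the intersection formulas for the $\pi_\beta$'s and $\pi^\delta$'s that were recalled in Lemma~\ref{max_lin} and the paragraph following it, combined with the description of the $\tilde\pi_\beta$ and $\tilde\pi^\delta$ as sub-linear-spaces $\pi_\beta^{\beta+V_{m-\ell}}$ and $\pi^\delta_{\delta\cap V_{m-\ell}}$ from Lemma~\ref{max_lin_schubert}. The key observation is that $\tilde\pi_\beta \subset \pi_\beta$, $\tilde\pi_\delta\subset\pi_\delta$ and $\tilde\pi^\delta\subset\pi^\delta$, so any intersection of two tilde-spaces is contained in the intersection of the corresponding untilded spaces, which is either empty or a single point $\gamma$ (or a line, in the $\pi_\beta\cap\pi^\delta$ case). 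Hence in each case we only have to decide whether that candidate point $\gamma$ actually lies in $\omg$, i.e. whether $\dim(\gamma\cap V_{m-\ell})\ge 1$, equivalently whether $\gamma\in W_1\cup\cdots\cup W_\ell$; and since $\gamma$ is obtained from $\beta,\beta',\delta,\delta'$ each of which meets $V_{m-\ell}$ in a controlled dimension, the candidate point will in fact lie in $W_1$ precisely.

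For part (i): $\pi_\delta\cap\pi_{\delta'}$ is the point $\delta\cap\delta'$ when $\dim(\delta\cap\delta')=\ell$ and empty otherwise (this is the ``$\pi_\delta\cap\pi_{\delta'}$'' statement recalled after Lemma~\ref{max_lin}, where I use $\pi_\delta$ for the linear space of $\ell$-subspaces inside the $(\ell+1)$-space $\delta$, i.e.\ $\pi^\delta$ in that notation; here the lemma writes $\tilde\pi_\delta$ for $\tilde\pi^\delta$). Intersecting inside $\omg$ we get $\tilde\pi^\delta\cap\tilde\pi^{\delta'}\subset\{\delta\cap\delta'\}$, so it suffices to check, when $\dim(\delta\cap\delta')=\ell$, that the point $\gamma=\delta\cap\delta'$ lies in both $\tilde\pi^\delta$ and $\tilde\pi^{\delta'}$, i.e.\ that $\gamma\supset\delta\cap V_{m-\ell}$ and $\gamma\supset\delta'\cap V_{m-\ell}$. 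Since $\delta,\delta'$ each have $1$-dimensional intersection with $V_{m-\ell}$ (that is the defining condition for $\tilde\pi^\delta$ to be a maximal linear subspace of $\omg$), and $\gamma=\delta\cap\delta'$ has codimension one in each of $\delta,\delta'$, a dimension count forces $\delta\cap V_{m-\ell}=\delta'\cap V_{m-\ell}\subset\gamma$ (the common line is forced to sit inside $\gamma$, otherwise $\delta+\delta'$ would have too small an intersection with $V_{m-\ell}$ or the two lines would be distinct, contradicting $\gamma=\delta\cap\delta'$ being the unique maximal common subspace); hence $\gamma\in\tilde\pi^\delta\cap\tilde\pi^{\delta'}$, and in particular $\dim(\gamma\cap V_{m-\ell})=1$ so $\gamma\in W_1$. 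The cases (ii) and (iii) are entirely analogous: for (ii) one starts from $\pi_\beta\cap\pi^\delta=\pi_\beta^\delta$ (a line) when $\beta\subset\delta$ and empty otherwise, intersects with the tilde-conditions $\gamma\subset\beta+V_{m-\ell}$ and $\gamma\supset\delta\cap V_{m-\ell}$, and checks that the unique $\gamma$ with $\beta\subset\gamma\subset\delta$, $\gamma\subset\beta+V_{m-\ell}$ is exactly $\gamma=\beta\oplus(\delta\cap V_{m-\ell})$, which visibly lies in $W_1$; for (iii) one starts from $\pi_\beta\cap\pi_{\beta'}=\{\beta+\beta'\}$ when $\dim(\beta+\beta')=\ell$ and checks that $\gamma=\beta+\beta'$ lies in $\tilde\pi_\beta\cap\tilde\pi_{\beta'}$ iff $\gamma\subset\beta+V_{m-\ell}$ and $\gamma\subset\beta'+V_{m-\ell}$, which automatically gives $\gamma\in W_1$ (it cannot be in $W_0$ since $\gamma\subset\beta+V_{m-\ell}$ and $\beta$ itself already meets $V_{m-\ell}$ trivially yet $\gamma\supsetneq\beta$ must pick up something from $V_{m-\ell}$), and that this is the stated condition ``$\beta+\beta'\in W_1$''.

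The only mildly delicate point — and the part I expect to need the most care — is verifying in each case that when the candidate point $\gamma$ does lie in $\omg$ it actually lies in the \emph{intersection of the two tilde-spaces} and not merely in $\omg$; that is, checking the containments like $\gamma\supset\delta\cap V_{m-\ell}$ in (i) or $\gamma\subset\beta+V_{m-\ell}$ in (iii). Each of these is a short linear-algebra argument using that $\tilde\pi^\delta$ (resp.\ $\tilde\pi_\beta$) consists precisely of those $\ell$-subspaces of $\delta$ containing the line $\delta\cap V_{m-\ell}$ (resp.\ those $\ell$-subspaces between $\beta$ and $\beta+V_{m-\ell}$), together with the fact that the untilded intersection already pins $\gamma$ down uniquely, so one only has to confirm that unique $\gamma$ satisfies the extra constraint. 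I would organize the write-up as: recall $\tilde\pi\subset\pi$ in each case; invoke the untilded intersection formulas to get ``$\subset$ a point or empty''; then in the non-empty case identify the point explicitly and verify membership in both tilde-spaces by a dimension count, concluding $\gamma\in W_1$. This is essentially bookkeeping once the structure above is in place, so I would keep the proof to a few lines per item rather than belabor the dimension counts.
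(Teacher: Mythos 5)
Your plan — reduce to the already-recalled intersection formulas for the untilded $\pi_\beta$'s and $\pi^\delta$'s, then check which tilde-conditions the candidate point satisfies — is exactly the route the paper takes in the paragraph just before the lemma. Parts (ii) and (iii) come out correctly under this approach, and your verification of (ii) in particular is fine.

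However, your proof of part (i) contains a genuine error. You claim that when $\dim(\delta\cap\delta')=\ell$, ``a dimension count forces $\delta\cap V_{m-\ell}=\delta'\cap V_{m-\ell}\subset\gamma$,'' concluding $\gamma=\delta\cap\delta'$ lies in both $\tilde\pi^\delta$ and $\tilde\pi^{\delta'}$ and hence in $W_1$. This is false. Take $m=4$, $\ell=2$, so $V_{m-\ell}=\langle e_1,e_2\rangle$, and let $\delta=\langle e_1,e_3,e_4\rangle$, $\delta'=\langle e_2,e_3,e_4\rangle$. Then $\delta,\delta'\in W_1^+$, $\gamma=\delta\cap\delta'=\langle e_3,e_4\rangle$ has dimension $\ell=2$, but $\delta\cap V_{m-\ell}=\langle e_1\rangle$ and $\delta'\cap V_{m-\ell}=\langle e_2\rangle$ are distinct and neither is contained in $\gamma$; indeed $\gamma\in W_0$, and $\tilde\pi^\delta\cap\tilde\pi^{\delta'}=\varnothing$. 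The ``otherwise the two lines would be distinct, contradicting $\gamma=\delta\cap\delta'$ being the unique maximal common subspace'' step has no force — the two lines being distinct produces no contradiction at all. The correct condition for non-emptiness, as the paper's own prose immediately preceding the lemma gives it, is that $\delta\cap\delta'$ lie in $W_1$ (which in particular requires $\dim(\delta\cap\delta')=\ell$ but is strictly stronger); this parallels the condition ``$\beta+\beta'\in W_1$'' correctly stated in part (iii). The phrase ``if $\dim(\delta\cap\delta')=\ell$'' in the lemma's statement of (i) should be read, by analogy with (iii), as asserting that the intersection is the point $\delta\cap\delta'$ precisely when $\delta\cap\delta'\in W_1$; once you use that condition, the containment $\delta\cap V_{m-\ell}\subset\delta\cap\delta'$ does follow (a one-dimensional subspace of a one-dimensional space), and the proof is routine. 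As a secondary minor remark, in (iii) you verify that the tilde-conditions force $\gamma\in W_1$, but you do not explicitly supply the converse — that $\beta+\beta'\in W_1$ implies the containments $\beta+\beta'\subset\beta+V_{m-\ell}$ and $\beta+\beta'\subset\beta'+V_{m-\ell}$ — which is needed to match the stated condition exactly; this is a one-line dimension argument but should be said.
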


We recall the decomposition $\omg = W_1 \cup \dots \cup W_{\ell}$ defined in \eqref{eq:W_i_def}. The subset $W_1$ is just the Schubert cell associated with the Schubert variety $\omg$ (\cite{GH}). The next few results lead to the fact (see Proposition \ref{W_1_preserve})
that automorphisms of $\omg$ must preserve $W_1$. We define:
\beqs 
W_0^- &=& \{ \beta \in G_{\ell-1}(V)  :  {\rm dim}(\beta \cap V_{m-\ell}) =0 \}, \text{ and } \\
W_1^+ &=& \{ \delta \in G_{\ell+1}(V)  :   {\rm dim}(\delta \cap V_{m-\ell}) =1\}. 
\eeqs

\begin{lemma} \label{lin_W_1}
The maximal linear subspaces of $W_1$ are  
$\tilde \pi_{\beta}$ for $\beta \in W_0^-$ and $\tilde \pi^{\delta}$ for $\delta \in W_1^+$. 
%$\{\tilde \pi_{\beta} :  \beta \in W_0^-\}$ and $\{\tilde \pi^{\delta}  :  \delta \in W_1^+\}$.  
The  $\tilde \pi_{\beta}$ are $(m -\ell-1)$-dimensional, and the  $\tilde \pi^{\delta}$ are $(\ell-1)$-dimensional.
\end{lemma}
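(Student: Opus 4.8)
The plan is to reduce everything to Lemma~\ref{max_lin} plus a short case analysis. Since $W_1\subset\glv$, every linear subspace of $W_1$ is a linear subspace of $\glv$, hence by Lemma~\ref{max_lin} it sits inside some $\pi_\beta$ with $\beta\in\gkv$ or some $\pi^\delta$ with $\delta\in\gmv$. So it suffices to compute $\pi_\beta\cap W_1$ and $\pi^\delta\cap W_1$, determine their maximal linear subspaces, and then single out which of these are maximal \emph{inside $W_1$}. Throughout I work in the non-degenerate range $2\le\ell\le m-2$, so that the dimensions $m-\ell-1$ and $\ell-1$ appearing in the statement are both positive.

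First I would do the elementary computation. With $j=\dim(\beta\cap V_{m-\ell})$ and the identification $\pi_\beta\cong\bP(V/\beta)$, a point $\gamma=\beta\oplus\langle v\rangle$ of $\pi_\beta$ lies in $W_j$ when $v\notin\beta+V_{m-\ell}$ and in $W_{j+1}$ otherwise; hence $\pi_\beta\cap W_1\ne\varnothing$ only for $j\in\{0,1\}$, and then $\pi_\beta\cap W_1=\pi_\beta^{\beta+V_{m-\ell}}=\tilde\pi_\beta$ is a full linear subspace of dimension $m-\ell-1$ (so $\beta\in W_0^-$) if $j=0$, while if $j=1$ it is the complement in $\pi_\beta$ of the \emph{codimension-$2$} linear subspace $\pi_\beta^{\beta+V_{m-\ell}}$. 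Dually, with $d=\dim(\delta\cap V_{m-\ell})\ge 1$ and $\pi^\delta\cong\bP(\delta^*)$, a hyperplane $\gamma\subset\delta$ lies in $W_d$ when $\gamma\supseteq\delta\cap V_{m-\ell}$ and in $W_{d-1}$ otherwise; hence $\pi^\delta\cap W_1\ne\varnothing$ only for $d\in\{1,2\}$, and then $\pi^\delta\cap W_1=\pi^\delta_{\delta\cap V_{m-\ell}}=\tilde\pi^\delta$ is a full linear subspace of dimension $\ell-1$ (so $\delta\in W_1^+$) if $d=1$, while if $d=2$ it is the complement in $\pi^\delta$ of the codimension-$2$ linear subspace $\pi^\delta_{\delta\cap V_{m-\ell}}$. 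This already yields the asserted dimensions and shows $\tilde\pi_\beta,\tilde\pi^\delta\subseteq W_1$.

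The second step handles the two ``partial'' cases $j=1$ and $d=2$, using the elementary fact that a linear subspace of projective space disjoint from a linear subspace of codimension $2$ has dimension at most $1$; so in these cases the intersection with $W_1$ contains no plane, only lines and points. A point of $W_1$ trivially lies in some $\tilde\pi_\beta$ with $\beta\in W_0^-$ (take $\beta$ a hyperplane of $\gamma$ complementary to $\gamma\cap V_{m-\ell}$). A line $L$ of $\glv$ equals $\pi_{\beta'}^{\delta'}$ for the unique $\beta'=\bigcap_{\gamma\in L}\gamma$ and $\delta'=\sum_{\gamma\in L}\gamma$; if $L\subseteq W_1$ lies in a $\pi_\beta$ with $j=1$, then choosing a vector of $(\delta'\cap V_{m-\ell})\setminus\beta$ would give a $\gamma\in L$ with $\dim(\gamma\cap V_{m-\ell})\ge 2$ unless $\dim(\delta'\cap V_{m-\ell})=1$, so $\delta'\in W_1^+$ and $L\subseteq\pi^{\delta'}\cap W_1=\tilde\pi^{\delta'}$; symmetrically, a line $L\subseteq W_1$ lying in a $\pi^\delta$ with $d=2$ has $\dim(\beta'\cap V_{m-\ell})=0$, hence $\beta'\in W_0^-$ and $L\subseteq\tilde\pi_{\beta'}$. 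Combining with the first step, every linear subspace of $W_1$ is contained in some $\tilde\pi_\beta$ $(\beta\in W_0^-)$ or some $\tilde\pi^\delta$ $(\delta\in W_1^+)$.

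Finally I would prove that each $\tilde\pi_\beta$ with $\beta\in W_0^-$ and each $\tilde\pi^\delta$ with $\delta\in W_1^+$ is genuinely maximal. Suppose $\tilde\pi_\beta\subseteq\pi\subseteq W_1$ with $\pi$ linear; by Lemma~\ref{max_lin}, $\pi\subseteq\pi_{\beta'}$ or $\pi\subseteq\pi^{\delta'}$. In the first case $\beta'\subseteq\bigcap_{\gamma\in\tilde\pi_\beta}\gamma=\beta$ forces $\beta'=\beta$, whence $\pi\subseteq\pi_\beta\cap W_1=\tilde\pi_\beta$. In the second case $\delta'\supseteq\sum_{\gamma\in\tilde\pi_\beta}\gamma=\beta+V_{m-\ell}$, of dimension $m-1$; since $\dim\delta'=\ell+1$ this is impossible for $m-\ell\ge 3$, and for $m-\ell=2$ it forces $\delta'=\beta+V_{m-\ell}$ with $\dim(\delta'\cap V_{m-\ell})=2$, so the codimension-$2$ bound of the previous step gives $\dim\pi\le 1=\dim\tilde\pi_\beta$; in either subcase $\pi=\tilde\pi_\beta$. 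The argument for $\tilde\pi^\delta$ is the dual one, interchanging $\bigcap$ with $\sum$, $\pi_\beta$ with $\pi^\delta$, and $m-\ell$ with $\ell$ (the boundary subcase now being $\ell=2$). I expect this last bookkeeping — in particular keeping straight the boundary cases $m-\ell=2$ and $\ell=2$, in which one of the two families $\tilde\pi_\beta$, $\tilde\pi^\delta$ degenerates to lines — to be the only real obstacle; there is no single deep step, just a methodical run through the intersection patterns already implicit in Lemmas~\ref{max_lin} and \ref{max_lin_schubert}.
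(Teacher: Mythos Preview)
Your proof is correct and follows essentially the same route as the paper: reduce via Lemma~\ref{max_lin} to subspaces of some $\pi_\beta$ or $\pi^\delta$, then do the case split on $j=\dim(\beta\cap V_{m-\ell})$ and $d=\dim(\delta\cap V_{m-\ell})$; in the ``partial'' cases $j=1$, $d=2$ both arguments show the subspace is at most a line and then exhibit it inside the opposite family $\tilde\pi^{\delta'}$ or $\tilde\pi_{\beta'}$ (your ``complement of a codimension-$2$ subspace'' observation is exactly the paper's dimension count $\dim U\le\ell+1$ rephrased). The one genuine difference is the maximality direction: the paper dispatches it in a single line by observing that $\tilde\pi_\beta$ and $\tilde\pi^\delta$ already appear on the list of maximal linear subspaces of $\Omega\supset W_1$ in Lemma~\ref{max_lin_schubert}, whereas you redo the maximality check directly inside $W_1$ with a further case analysis (including the boundary subcases $m-\ell=2$ and $\ell=2$); both are valid, but the paper's shortcut via Lemma~\ref{max_lin_schubert} saves exactly the bookkeeping you flagged as the main obstacle.
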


\begin{proof}  
The stated subspaces are maximal by the previous lemma. Conversely, let $\pi$ be a maximal  linear subspace of $W_1$,  say of dimension $r$. It suffices to prove that $\pi$ is contained in a $\tilde \pi_{\beta}$ or a $\tilde \pi^{\delta}$. We can assume $r = {\rm dim}(\pi) >0$, for otherwise $\pi$ is a point of $W_1$ and every point is contained in some $\tilde \pi_{\beta}$ by the next lemma. By the previous lemma $\pi$ is contained in a $\pi_{\beta} \cap W_1$ or a $\pi^{\delta} \cap W_1$. 

Suppose $\pi \subset \pi_{\beta} \cap W_1$, and let $\nu_{\beta} := \dim (\beta \cap V_{m-\ell})$. If $\nu_{\beta} = 0$, then $\pi \subset \tilde \pi_{\beta}$, where as if $\nu_{\beta}>1$ then $\pi_{\beta} \cap W_1 = \varnothing$. In the remaining case  $\nu_{\beta}= 1$,  writing $\pi = \pi_{\beta}^U$ for some $U \in G_{\ell+r}(V)$ containing $\beta$,  we must have  $\dim (U \cap V_{m-\ell}) =1$  (for, otherwise $\pi \cap W_2 \neq \varnothing$),  and dim$(U) \leq \ell+1$ (for, otherwise $\dim (U \cap V_{m-\ell}) >1$). Hence dim$(U) = \ell+1$, and  we get $\pi \subset \tilde \pi^{\delta}$ for $\delta = U$. 

Similarly, in the case $\pi \subset \pi^{\delta}$, we consider $\nu_{\delta} := \dim (\delta \cap V_{m-\ell})$. Since $\delta$ is $(\ell+1)$-dimensional, we must have $\nu_{\delta} \geq 1$. If $\nu_{\delta} = 1$, then $\pi \subset \tilde \pi^{\delta}$, where as if $\nu_{\delta}>2$ then $\pi^{\delta} \cap W_1 = \varnothing$. In the remaining case  $\nu_{\delta} = 2$, we write  $\pi = \pi^{\delta}_U$  for some $U \in G_{\ell-r}(\delta)$. Now $\pi \cap W_2 = \varnothing$ forces $\dim (U) \geq \ell-1$ and $\dim (U \cap V_{m-\ell}) =0$. Since dim$(U) = \ell-r$ with $r>0$, we must have dim$(U) = \ell-1$, and we get  $\pi \subset \tilde \pi_{\beta}$ for $\beta = U$.  \end{proof}

\begin{lemma} \label{W_1_union}  The Schubert cell $W_1$ is the union of all the $\tilde \pi_{\beta}$'s, as well as the union of all the $\tilde \pi^{\delta}$'s. In other words, 
\[ 
\bigcup_{\beta \in W_0^-} \tilde \pi_{\beta} = W_1 = 
 \bigcup_{\delta \in W_1^+} \tilde \pi^{\delta}.   
 \]  
\end{lemma}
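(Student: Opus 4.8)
The plan is to prove the two set equalities by a direct membership argument, using the explicit description of the linear subspaces $\tilde\pi_\beta$ and $\tilde\pi^\delta$ together with the definition of $W_1$. Recall $\gamma\in W_1$ means $\dim(\gamma\cap V_{m-\ell})=1$; write $\gamma\cap V_{m-\ell}=\langle w\rangle$ for some $w\in V_{m-\ell}$, $w\neq 0$.

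First I would show $W_1\subseteq \bigcup_{\beta\in W_0^-}\tilde\pi_\beta$. Given $\gamma\in W_1$, the line $\langle w\rangle$ is the unique one-dimensional piece of $\gamma$ inside $V_{m-\ell}$; pick any hyperplane $\beta$ of $\gamma$ with $w\notin\beta$, so that $\gamma=\beta\oplus\langle w\rangle$ and $\beta\cap V_{m-\ell}\subseteq \beta\cap\gamma\cap V_{m-\ell}=\beta\cap\langle w\rangle=\{0\}$, i.e. $\beta\in W_0^-$. Then $\beta\subset\gamma\subset\beta+V_{m-\ell}$ (since $w\in V_{m-\ell}$), so $\gamma\in\pi_\beta^{\beta+V_{m-\ell}}=\tilde\pi_\beta$. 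The reverse inclusion $\bigcup_{\beta\in W_0^-}\tilde\pi_\beta\subseteq W_1$ was essentially recorded already: for $\beta\in W_0^-$ and $\gamma\in\tilde\pi_\beta$ we have $\beta\subset\gamma\subset\beta+V_{m-\ell}$, and since $\dim(\beta+V_{m-\ell})=(\ell-1)+(m-\ell)=m-1$ forces $\gamma$ to meet $V_{m-\ell}$ nontrivially (as $\dim\gamma+\dim V_{m-\ell}=\ell+(m-\ell)=m>m-1$), while $\dim(\gamma\cap V_{m-\ell})\le\dim(\beta\cap V_{m-\ell})+1=1$ because $\gamma/\beta$ is one-dimensional and $\beta\cap V_{m-\ell}=\{0\}$; hence $\dim(\gamma\cap V_{m-\ell})=1$, so $\gamma\in W_1$.

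Next I would do the dual equality $W_1=\bigcup_{\delta\in W_1^+}\tilde\pi^\delta$ by the mirror-image argument. For $\gamma\in W_1$ with $\gamma\cap V_{m-\ell}=\langle w\rangle$, choose any one-dimensional extension: pick $v\notin \gamma+V_{m-\ell}$ is not what is wanted; rather pick $v\in V_{m-\ell}$ with $v\notin\gamma$ and set $\delta=\gamma+\langle v\rangle$, so $\dim\delta=\ell+1$ and $\delta\cap V_{m-\ell}\supseteq\langle w,v\rangle$ has dimension $\ge 2$, which is too big. Instead one should take $v\in V_{m-\ell}$ carefully, or rather extend outside: the correct choice is $\delta=\gamma+\langle u\rangle$ for any $u\notin \gamma$ with the property that $\delta\cap V_{m-\ell}$ stays one-dimensional. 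Since $\gamma\cap V_{m-\ell}=\langle w\rangle$ and $\gamma\not\subseteq V_{m-\ell}$, we have $\gamma+V_{m-\ell}$ of dimension $\ell+(m-\ell)-1=m-1<m$, so we may pick $u\notin\gamma+V_{m-\ell}$; then $\delta=\gamma+\langle u\rangle$ satisfies $\delta\cap V_{m-\ell}=\gamma\cap V_{m-\ell}=\langle w\rangle$ (any element of $\delta\cap V_{m-\ell}$ is $x+cu$ with $x\in\gamma$, $cu\in V_{m-\ell}-\gamma$ forcing $c=0$), so $\delta\in W_1^+$, and $\delta\cap V_{m-\ell}=\langle w\rangle\subset\gamma\subset\delta$ gives $\gamma\in\pi^\delta_{\delta\cap V_{m-\ell}}=\tilde\pi^\delta$. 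The reverse inclusion is again immediate from the definition of $\tilde\pi^\delta$: $\gamma\in\tilde\pi^\delta$ means $\delta\cap V_{m-\ell}\subseteq\gamma\subseteq\delta$ with $\dim(\delta\cap V_{m-\ell})=1$, so $\dim(\gamma\cap V_{m-\ell})\ge 1$, and it cannot be $\ge 2$ since $\gamma\cap V_{m-\ell}\subseteq\delta\cap V_{m-\ell}$ which is one-dimensional; hence $\gamma\in W_1$.

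The only mildly delicate point — the part I would present most carefully — is the existence of the auxiliary subspace in each of the two "forward" inclusions: choosing $\beta$ with $\beta\cap V_{m-\ell}=\{0\}$ in the first case, and choosing $u\notin\gamma+V_{m-\ell}$ in the second, each of which is possible precisely because $\gamma$ meets $V_{m-\ell}$ in a single line (so $\gamma+V_{m-\ell}\ne V$, and the line $\gamma\cap V_{m-\ell}$ can be split off). Everything else is a dimension count using $1<\ell<m$, and the reverse inclusions are unwound directly from the definitions of $\tilde\pi_\beta=\pi_\beta^{\beta+V_{m-\ell}}$ and $\tilde\pi^\delta=\pi^\delta_{\delta\cap V_{m-\ell}}$ given in Lemma \ref{max_lin_schubert}.
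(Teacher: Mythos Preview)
Your proof is correct and follows essentially the same approach as the paper's: show each $\tilde\pi_\beta$ and $\tilde\pi^\delta$ lies in $W_1$, and conversely, given $\gamma\in W_1$, produce a suitable $\beta\in W_0^-$ and $\delta\in W_1^+$. The paper compresses all of this into three sentences (invoking the definitions and the previous lemma for the reverse inclusions), whereas you spell out the dimension counts explicitly; your choices of $\beta$ as a complement to $\langle w\rangle$ in $\gamma$ and of $\delta=\gamma+\langle u\rangle$ with $u\notin\gamma+V_{m-\ell}$ are exactly the existence statements the paper leaves implicit. The only suggestion is stylistic: remove the exploratory false start in the $\delta$-paragraph (``pick $v\notin\gamma+V_{m-\ell}$ is not what is wanted; rather\dots which is too big'') and go straight to the correct choice.
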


\begin{proof} 
By definition, the $\tilde \pi_{\beta}$'s and the $\tilde \pi^{\delta}$'s are contained in $W_1$. Conversely, given any  $\gamma \in W_1$, we have  $\dim (\gamma \cap V_{m-\ell}) = 1$. Therefore,  we can find $\beta \subset \gamma$ and $\delta \supset \gamma$ with $\beta \in W_0^-$ and $\delta \in W_1^+$. It then follows that $\gamma \in \tilde \pi_{\beta}$ and $\gamma \in \tilde \pi^{\delta}$.
\end{proof}

\begin{proposition} \label{W_1_preserve}
If $f \in $ \emph{Aut}$(\omg)$ then $f(W_1) = W_1$.
\end{proposition}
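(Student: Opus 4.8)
The plan is to characterise $W_1$ intrinsically inside $\omg$, using the classification of the maximal linear subspaces of $\omg$ in Lemma~\ref{max_lin_schubert} together with the covering statement of Lemma~\ref{W_1_union}, in such a way that the characterisation is visibly stable under $f$.

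First I would record the following preliminary observation. Since $f$ is induced by a semilinear automorphism of the ambient projective space $\mH_0$ which maps $\omg$ onto $\omg$, both $f$ and $f^{-1}$ carry linear subspaces of $\mH_0$ to linear subspaces of the same dimension and preserve inclusions. Consequently a linear subspace of $\mH_0$ contained in $\omg$ is maximal (as such) if and only if its image under $f$ is, so that $f$ induces a \emph{dimension-preserving permutation} of the set of all maximal linear subspaces of $\omg$.

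Next comes the dimension bookkeeping. By Lemma~\ref{max_lin_schubert} the maximal linear subspaces of $\omg$ are the subspaces $\pi_{\beta}$, $\pi^{\delta}$, $\tilde\pi_{\beta}$ and $\tilde\pi^{\delta}$, of dimensions $m-\ell$, $\ell$, $m-\ell-1$ and $\ell-1$ respectively; in particular every $\tilde\pi_{\beta}$ has dimension exactly $m-\ell-1$ and every $\tilde\pi^{\delta}$ has dimension exactly $\ell-1$. Put $d_0:=\min\{m-\ell,\ell\}-1$. Then $d_0=\min\{m-\ell-1,\ell-1\}$ is the least of the four numbers above, and $d_0$ is strictly smaller than both $m-\ell$ and $\ell$. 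Hence any maximal linear subspace of $\omg$ of dimension $d_0$ must be of ``truncated'' type: it is a $\tilde\pi_{\beta}$ with $\beta\in W_0^-$ or a $\tilde\pi^{\delta}$ with $\delta\in W_1^+$; moreover only one of these two families can occur when $m\neq 2\ell$ (the $\tilde\pi_{\beta}$'s if $m<2\ell$, the $\tilde\pi^{\delta}$'s if $m>2\ell$), whereas both occur when $m=2\ell$. In every case the union of all maximal linear subspaces of $\omg$ of dimension $d_0$ is $\bigcup_{\beta\in W_0^-}\tilde\pi_{\beta}$, or $\bigcup_{\delta\in W_1^+}\tilde\pi^{\delta}$, or the union of these two collections, which by Lemma~\ref{W_1_union} equals $W_1$ in all cases. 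Thus $W_1$ has been exhibited as the union of all maximal linear subspaces of $\omg$ of least dimension, and since $f$ permutes that collection we conclude $f(W_1)=W_1$. One should dispose separately of the degenerate case $\ell=m-1$, in which $V_{m-\ell}$ is a line and $\omg=W_1$, so that $f(W_1)=W_1$ holds by the very definition of $\Aut(\omg)$.

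The substantive content has already been isolated in the preceding lemmas — the classification of the maximal linear subspaces of $\omg$ and of $W_1$ (Lemmas~\ref{max_lin_schubert} and \ref{lin_W_1}) and the two-fold covering of $W_1$ by truncated maximal linear subspaces (Lemma~\ref{W_1_union}) — so the only delicate point remaining here is the dimension bookkeeping, in particular surviving the ``coincidence'' values $m=2\ell\pm1$, at which a full family ($\pi^{\delta}$, resp. $\pi_{\beta}$) has the same dimension as a truncated family ($\tilde\pi_{\beta}$, resp. $\tilde\pi^{\delta}$). This is precisely why the characterisation of $W_1$ should be phrased through the \emph{least} dimension $d_0$, rather than by attempting to separate truncated maximal linear subspaces from full ones by dimension alone.
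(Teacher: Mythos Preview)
Your proof is correct and follows essentially the same approach as the paper's: both isolate one (or both, when $m=2\ell$) of the truncated families $\tilde\pi_\beta$, $\tilde\pi^\delta$ by their dimension among the maximal linear subspaces of $\omg$ (Lemma~\ref{max_lin_schubert}) and then invoke Lemma~\ref{W_1_union}. The only difference is cosmetic: you use the uniform criterion $d_0=\min\{m-\ell-1,\ell-1\}$, whereas the paper organises the same dimension bookkeeping as an explicit case split according to whether $m=2\ell-1$, $m=2\ell+1$, or $m\neq 2\ell\pm 1$.
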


\noi
{\bf Remark:} The proposition is motivated by the special case when $\bK$ is algebraically closed. In this case it is known that $W_1$ is the smooth locus of $\omg$ (\cite[Theorem 5.3]{L-W}). Since $\PGL(\mH_0)$ acts smoothly on $\mH_0$, any $f \in $ $\Aut(\omg)$ must preserve $W_1$. 
\begin{proof} We observe that any $f \in$ $\Aut(\omg)$ preserves the set of  maximal linear subspaces of $\omg$ (if $\pi$ is maximal and $\pi' \supsetneq f(\pi)$ then $f^{-1}(\pi') \supsetneq \pi$). We consider three cases: $m  = 2 \ell-1, m = 2 \ell+1$ and $m \neq 2 \ell \pm 1$. If $m=2 \ell-1$, then it follows from Lemma \ref{max_lin_schubert}, that the only maximal linear subspaces of $\omg$ of dimension $\ell-2$ are the $\tilde \pi_{\beta}$'s. (Since $m \geq 3$, we have $\ell-2 \geq 0$, therefore $\tilde \pi_{\beta}$ has at least one point). Therefore the set of $\tilde \pi_{\beta}$'s is preserved, and hence $f(W_1) = W_1$ by Lemma \ref{W_1_union}.
If $m=2 \ell+1$ then, it follows from Lemma \ref{max_lin_schubert}, that the only maximal linear subspaces of $\omg$ of dimension $\ell-1$ are the $\tilde \pi^{\delta}$'s. Therefore the set of $\tilde \pi^{\delta}$'s is preserved, and hence $f(W_1) = W_1$ by Lemma \ref{W_1_union}. Finally  if $m \neq 2 \ell \pm 1$, then the only maximal linear subspaces of dimension $m -\ell -1$ are the $\tilde \pi_{\beta}$'s (and also the $\tilde \pi^{\delta}$'s, if $m = 2 \ell$). Thus $f$ preserves 
the set of $\tilde \pi_{\beta}$'s (or the $\tilde \pi_{\beta}$'s and $\tilde \pi^{\delta}$'s taken together if $m = 2 \ell$).
By Lemma \ref{W_1_union}, $f(W_1) = W_1$.  \end{proof}

For each $\gamma \in W_0$, the decomposition $V = V_{m-\ell} \oplus \gamma$ induces a natural inclusion $\hat \psi:\wedge^{1} V_{m-\ell} \otimes \wedge^{\ell-1} \gamma \hookrightarrow \wlv$ that takes a decomposable element $v \otimes \beta$ to  $v \wedge \beta$. Let $\psi:\bP(V_{m-\ell} \otimes \wedge^{\ell-1} \gamma) \hookrightarrow \bP(\wlv)$ denote the corresponding projective map.  We note that $\psi$ bijectively carries decomposable elements $v \otimes \beta$ to decomposable elements $v \wedge \beta \in W_1 \subset \omg$. Since $\omg \subset \mH_0$, we see im$(\psi)$ is an $\ell(m-\ell)-1$-dimensional linear subspace of  $\mH_0$.
The decomposable elements of $\bP(V_{m-\ell} \otimes \wedge^{\ell-1} \gamma)$ can be identified with $\bP(V_{m-\ell}) \times \bP(\wedge^{\ell-1} \gamma)$ by the Segre embedding. We define:
\beq 
\Delta_{\gamma}:= \psi(\bP(V_{m-\ell}) \times \bP(\wedge^{\ell-1} \gamma)) = {\rm im}(\psi) \cap \omg .
\eeq
Thus the subvariety  $\Delta_{\gamma} \subset \omg $ is an embedding of  $\bP^{m-\ell-1} \times \bP^{\ell-1}$ into $\omg$. 
For each $v \in \bP(V_{m-\ell})$, $\psi(v \otimes \wedge^{\ell-1} \gamma)$ is just $\tilde \pi^{\delta}$  where  $\delta \in \gmv$ is  represented by $v \wedge \gamma$.  For each $\beta \in \bP( \wedge^{\ell-1}\gamma) = G_{\ell-1}(\gamma)$, we have   
$\psi( \bP(V_{m-\ell}) \otimes  \beta)$ is just $\tilde \pi_{\beta}$. Therefore, 
$\Delta_{\gamma}$ can also be described as a union of  maximal linear subspaces of $W_1$:
\beq \label{eq:Delta_gamma} 
\Delta_{\gamma}= \bigcup_{\beta \subset \gamma} \tilde \pi_{\beta} = \bigcup_{\delta \supset \gamma} \tilde \pi^{\delta} \eeq

For $\gamma \in W_0$ and $\beta \neq  \beta' \in G_{\ell-1}(\gamma)$ and $\delta \neq \delta' \in G_{\ell+1}(V)$ containing $\gamma$, we have $\beta+\beta' = \delta \cap \delta'  = \gamma \notin W_1$. Therefore it follows by Lemma \ref{max_W1_cap} that  the unions    
appearing in \eqref{eq:Delta_gamma} are disjoint.

The next lemma and  proposition concern properties of $\Delta_{\gamma}$ which are needed in Theorem \ref{aut_omg}.
\begin{lemma} \label{Delta_lem} 
Let $\gamma, \gamma' \in W_0$.  Then: 
\begin{enumerate} 
\item[{\rm (i)}]  
Each line in $\glv$ through $\gamma$ intersects $\omg$ in a unique point. The set of all such intersection points is precisely $\Delta_{\gamma}$.  \item[{\rm (ii)}]  
$\Delta_{\gamma} = \Delta_{\gamma'}$  if and only if $\gamma= \gamma'$.
\end{enumerate}
\end{lemma}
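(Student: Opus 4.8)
The plan is to establish (i) by a direct analysis of the lines of $\glv$ through $\gamma$, and then to deduce (ii) by recovering the subspace $\gamma$ from the ruling structure of the Segre variety $\Delta_\gamma$.

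\textbf{Part (i).} Since $\gamma\in W_0$ we have $V=V_{m-\ell}\oplus\gamma$. By the description of lines following Lemma~\ref{max_lin}, a line of $\glv$ through $\gamma$ has the form $\pi_\beta^\delta$ with $\beta\subset\gamma\subset\delta$, $\dim\beta=\ell-1$, $\dim\delta=\ell+1$. First I would note that $\dim(\delta\cap V_{m-\ell})=1$: indeed $\gamma\cap V_{m-\ell}=0$ forces $\gamma+V_{m-\ell}=V$, hence $\delta+V_{m-\ell}=V$, and counting dimensions gives $\dim(\delta\cap V_{m-\ell})=(\ell+1)+(m-\ell)-m=1$. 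Writing $\delta\cap V_{m-\ell}=\langle w\rangle$ with $w\ne0$, one has $w\notin\beta$ (as $\beta\subseteq\gamma$), so $\gamma_0:=\beta\oplus\langle w\rangle$ is the only $\ell$-space with $\beta\subset\gamma_0\subset\delta$ and $w\in\gamma_0$. Since any $\gamma''\in\pi_\beta^\delta$ satisfies $\gamma''\cap V_{m-\ell}\subseteq\delta\cap V_{m-\ell}=\langle w\rangle$, the line $\pi_\beta^\delta$ meets $\omg$ exactly in $\gamma_0$, and $\gamma_0\in\tilde\pi_\beta=\pi_\beta^{\beta+V_{m-\ell}}\subseteq\Delta_\gamma$ by \eqref{eq:Delta_gamma} (note $\beta\in G_{\ell-1}(\gamma)$). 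This proves the first assertion of (i) and shows the intersection points all lie in $\Delta_\gamma$. Conversely, if $p\in\Delta_\gamma$ then by \eqref{eq:Delta_gamma} $p\in\tilde\pi_\beta$ for some $\beta\in G_{\ell-1}(\gamma)$, so $p$ represents $\beta\oplus\langle v\rangle$ with $0\ne v\in V_{m-\ell}$; then $v\notin\gamma$, the line $\pi_\beta^{\gamma+\langle v\rangle}$ passes through $\gamma$, and $p$ is its unique point on $\omg$. Hence the set of intersection points is precisely $\Delta_\gamma$.

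\textbf{Part (ii).} The ``if'' direction is immediate, as $\Delta_\gamma$ depends only on $\gamma$ and the fixed $V_{m-\ell}$. For the converse, suppose $\Delta_\gamma=\Delta_{\gamma'}$ with $\gamma,\gamma'\in W_0$, and fix $\beta\in G_{\ell-1}(\gamma)$. Since $\beta\subseteq\gamma$ gives $\beta\in W_0^-$, Lemma~\ref{lin_W_1} shows $\tilde\pi_\beta$ is a maximal linear subspace of $W_1$; as $\tilde\pi_\beta\subseteq\Delta_\gamma=\Delta_{\gamma'}\subseteq W_1$, it is a fortiori a maximal linear subspace of $\Delta_{\gamma'}$. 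Now $\Delta_{\gamma'}=\psi\bigl(\bP(V_{m-\ell})\times\bP(\wedge^{\ell-1}\gamma')\bigr)$ is a Segre variety, and its maximal linear subspaces are its two rulings; by the explicit form of $\psi$ recorded just before \eqref{eq:Delta_gamma} these are exactly the subspaces $\tilde\pi^{\delta'}$ with $\gamma'\subset\delta'$ and the subspaces $\tilde\pi_{\beta'}$ with $\beta'\in G_{\ell-1}(\gamma')$. Thus $\tilde\pi_\beta$ equals some $\tilde\pi^{\delta'}$ or some $\tilde\pi_{\beta'}$. The first case cannot occur: every member of $\tilde\pi^{\delta'}=\pi^{\delta'}_{\delta'\cap V_{m-\ell}}$ contains the nonzero subspace $\delta'\cap V_{m-\ell}$ of $V_{m-\ell}$, so $\tilde\pi_\beta=\tilde\pi^{\delta'}$ would give $\delta'\cap V_{m-\ell}\subseteq\bigcap_{\gamma''\in\tilde\pi_\beta}\gamma''=\beta\subseteq\gamma$, contradicting $\gamma\cap V_{m-\ell}=0$. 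Hence $\tilde\pi_\beta=\tilde\pi_{\beta'}$ for some $\beta'\in G_{\ell-1}(\gamma')$, and intersecting the members gives $\beta=\bigcap_{\gamma''\in\tilde\pi_{\beta'}}\gamma''=\beta'\subseteq\gamma'$. Since $\beta\in G_{\ell-1}(\gamma)$ was arbitrary and $\dim\gamma=\ell\ge2$, picking two distinct such $\beta$ (whose sum is $\gamma$) yields $\gamma\subseteq\gamma'$, hence $\gamma=\gamma'$.

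\textbf{Obstacles and caveats.} The dimension bookkeeping in (i), and the fact that the maximal linear subspaces of a Segre variety $\bP^a\times\bP^b$ with $a,b\ge1$ are its two rulings---valid over an arbitrary field, since a linear system of rank-$\le1$ tensors must have a common left factor or a common right factor---are routine. The step deserving care is the identification of the two rulings of $\Delta_{\gamma'}$ with the families $\{\tilde\pi_{\beta'}\}$ and $\{\tilde\pi^{\delta'}\}$, together with the passage from maximality in $W_1$ to maximality in $\Delta_{\gamma'}$, which lean on Lemma~\ref{lin_W_1} and on the description of $\psi$. Finally, the argument (and the statement of (ii)) presupposes $m-\ell\ge2$: in the boundary case $m=\ell+1$ the divisor $\omg$ is itself a linear subspace and $\Delta_\gamma=\omg$ for every $\gamma\in W_0$, so one must take $\ell<m-1$ in force here.
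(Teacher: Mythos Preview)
Your argument for (i) is essentially the paper's, written out more explicitly: both parametrize the lines through $\gamma$ as $\pi_\beta^\delta$ with $\beta\subset\gamma\subset\delta$, identify the unique intersection with $\omg$ as $\beta\oplus(\delta\cap V_{m-\ell})$, and appeal to \eqref{eq:Delta_gamma}.

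For (ii) you take a genuinely different route. The paper argues by direct element-chasing: given $\beta'\in G_{\ell-1}(\gamma')$, it uses the disjoint decomposition $\Delta_\gamma=\bigsqcup_{v\in\bP(V_{m-\ell})}\tilde\pi^{\gamma\wedge v}$ to see that each point $\beta'\wedge v$ of $\tilde\pi_{\beta'}\subset\Delta_\gamma$ must lie in $\tilde\pi^{\gamma\wedge v}$, whence $\beta'\subset\gamma\oplus\bK\hat v$ for every nonzero $\hat v\in V_{m-\ell}$; intersecting over two independent $\hat v$'s yields $\beta'\subset\gamma$. You instead invoke Lemma~\ref{lin_W_1} to recognize $\tilde\pi_\beta$ as a maximal linear subspace of $W_1$, hence of $\Delta_{\gamma'}\subset W_1$, and then use the classification of maximal linear subspaces of a Segre variety to match it with a ruling $\tilde\pi_{\beta'}$ of $\Delta_{\gamma'}$, giving $\beta=\beta'\subset\gamma'$. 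Your approach is more structural and makes nice use of Lemma~\ref{lin_W_1} (which the paper proves just before but does not actually cite in this proof); the paper's approach is more self-contained, needing only the fiber description of $\Delta_\gamma$ and no external fact about Segre rulings. Both proofs share the hidden hypothesis $m-\ell\ge 2$ that you correctly flag: the paper's step ``for two linearly independent vectors $\hat v,\hat w\in V_{m-\ell}$'' requires it just as your identification $\bigcap_{\gamma''\in\tilde\pi_\beta}\gamma''=\beta$ does.
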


\begin{proof}
(i)  A line $\pi_{\beta}^{\delta}$ in $\glv$  passes through   $\gamma \in W_0$ if and only if  $\beta \in W_0^-$ and $\delta \in W_0^+$ with $\beta \subset \gamma \subset \delta$. The intersection of  $\pi_{\beta}^{\delta} \cap \omg$  equals $\tilde \pi_{\beta} \cap  \tilde \pi^{\delta} = \beta \wedge v_{\delta}$ where $v_{\delta} \in \bP(V_{m-\ell})$ is the point $\bP(\delta \cap V_{m-\ell})$, as noted in Lemma \ref{max_W1_cap}. Therefore, the set of intersection points of lines through $\gamma$ with $\omg$ is 
\[
\bigcup_{\beta \subset \gamma} \tilde \pi_{\beta}
\] 
and this union is $\Delta_{\gamma}$, thanks to  \eqref{eq:Delta_gamma}.
\smallskip

(ii) Suppose $\Delta_{\gamma} = \Delta_{\gamma'}$. For any $\beta' \in G_{\ell-1}(\gamma')$, we have $\tilde \pi_{\beta'} \subset \Delta_{\gamma'}$, hence 
$\tilde \pi_{\beta'} \subset \Delta_{\gamma}$. Since   $\Delta_{\gamma}$ is the disjoint union of $\tilde \pi^{\gamma \wedge v}$ as $v$ runs over $\bP(V_{m-\ell})$, we must have  $\beta' \wedge v \in  \tilde \pi^{\gamma \wedge v}$, because $v$ is the unique element of $\bP(V_{m-\ell} \cap (\beta \wedge v))$ as well as 
$\bP(V_{m-\ell} \cap (\gamma \wedge v))$. Therefore, $\beta' \subset (\gamma \oplus \bK \hat v)$ for all $\hat v \in V_{m-\ell}\!\setminus\!\{0\}$.
In particular, for two linearly independent vectors $\hat v, \hat w \in V_{m-\ell}$, we see $\beta' \subset  (\gamma \oplus \bK \hat v) \cap (\gamma \oplus \bK \hat w)$ which is $\gamma$ (because $\gamma \cap V_{m-\ell} = \{0\}$). We have thus shown that every $(\ell-1)$-dimensional subspace of $\gamma'$ is also a subspace of $\gamma$, which implies  $\gamma = \gamma'$. Conversely, if $\gamma = \gamma'$, 
then clearly  $\Delta_{\gamma} = \Delta_{\gamma'}$.
\end{proof}

\begin{proposition} \label{Delta_prop}
Let  $X,Y$ and $Z$ be $\bK$-vector spaces of dimensions $m-\ell, \ell$ and  $(m-\ell)\ell$ respectively such that $\bP(Z)$ a linear subspace of  the hyperplane $\mH_0$. 
Let  $\Psi: \bP(X \otimes Y) \to \bP(Z)$  be a projective semilinear isomorphism carrying decomposable elements \emph{(}i.e., $\bP(X) \times \bP(Y)$, by Segre embedding\emph{)} to decomposable elements \emph{(}i.e.,  $\bP(Z) \cap \omg$ \emph{)}. Suppose that on $\bP(X) \times \bP(Y)$, $\Psi$ takes $ [x] \times 
\bP(Y)$  to  a $\tilde \pi^{\delta(x)}$ and $\bP(X) \times [y]$  to a $\tilde \pi_{\beta(y)}$, for each $[x] \in \bP(X)$ and $[y] \in \bP(Y)$. Then $\Psi(\bP(X) \times \bP(Y)) = \Delta_{\gamma}$ for some $\gamma \in W_0$.
\end{proposition}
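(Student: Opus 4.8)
The plan is to use the two rulings of the Segre variety $\bP(X)\times\bP(Y)$ --- namely the linear subspaces $[x]\times\bP(Y)$ ($[x]\in\bP(X)$) and $\bP(X)\times[y]$ ($[y]\in\bP(Y)$) --- to extract a single $\gamma\in W_0$ from the families $\{\beta(y)\}$ and $\{\delta(x)\}$, and then to identify $\bP(Z)$ with $\operatorname{im}(\psi)$ by a comparison of linear spans. First I would record two elementary facts. \emph{(i)} The assignments $[y]\mapsto\beta(y)$ and $[x]\mapsto\delta(x)$ are injective: the subsets $\bP(X)\times[y]$ of the Segre variety are pairwise disjoint and $\Psi$ is a bijection, so the $\tilde\pi_{\beta(y)}$ are pairwise distinct, and $\beta\mapsto\tilde\pi_{\beta}$ is injective (noted after Lemma~\ref{max_lin_schubert}); the argument for $\delta(x)$ is the same. \emph{(ii)} For all $[x],[y]$ one has $\beta(y)\subset\delta(x)$: the point $[x\otimes y]$ is the unique common point of $[x]\times\bP(Y)$ and $\bP(X)\times[y]$, so $\Psi([x\otimes y])\in\tilde\pi^{\delta(x)}\cap\tilde\pi_{\beta(y)}$; this intersection being nonempty, Lemma~\ref{max_W1_cap}(ii) forces $\beta(y)\subset\delta(x)$.

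Next I would put $\gamma:=\bigcap_{[x]\in\bP(X)}\delta(x)$. By (ii) every $\beta(y)\subset\gamma$, so $\gamma\supseteq\sum_{[y]}\beta(y)$; since the $\beta(y)$ are pairwise distinct $(\ell-1)$-dimensional subspaces and $\bP(Y)$ has at least two points (as $\ell\ge 2$), this sum, hence $\gamma$, has dimension $\ge\ell$. On the other hand $\gamma\subseteq\delta(x_0)$, which is $(\ell+1)$-dimensional, and if $\dim\gamma=\ell+1$ then $\gamma=\delta(x)$ for every $[x]$, contradicting (i) as soon as $\dim X=m-\ell\ge2$. Hence $\dim\gamma=\ell$ and $\gamma=\sum_{[y]}\beta(y)$. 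It remains to show $\gamma\in W_0$, i.e.\ $\gamma\cap V_{m-\ell}=0$. Since $\gamma\subseteq\delta(x)$ and $\dim(\delta(x)\cap V_{m-\ell})=1$, we have $\dim(\gamma\cap V_{m-\ell})\le1$; were it $1$, spanned by some $w\in V_{m-\ell}$, then $\delta(x)\cap V_{m-\ell}=\bK w$ for every $[x]$, so every point of $\tilde\pi^{\delta(x)}=\pi^{\delta(x)}_{\delta(x)\cap V_{m-\ell}}$ --- hence every point of $\Psi(\bP(X)\times\bP(Y))=\bigcup_{[x]}\tilde\pi^{\delta(x)}$ --- would be an $\ell$-dimensional space containing $w$; but inside a fixed $\tilde\pi_{\beta(y)}$ the only $\ell$-dimensional space containing $w$ is $\beta(y)\oplus\bK w$ (because $\beta(y)\cap V_{m-\ell}=0$), which would force $\tilde\pi_{\beta(y)}$ to be a single point, contrary to $\dim\tilde\pi_{\beta(y)}=m-\ell-1>0$. (This step uses $m-\ell\ge2$; the remaining case $m=\ell+1$ is degenerate --- $\bP(X)$ is a point, $\delta(x_0)=V$, and $\Psi(\bP(X)\times\bP(Y))=\tilde\pi^{V}=W_1=\Delta_{\gamma}$ for every $\gamma\in W_0$ --- so I would treat it first.)

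Finally I would conclude by comparing spans. Since $\beta(y)\subset\gamma$ for all $[y]$, each $\tilde\pi_{\beta(y)}$ is one of the constituents of $\Delta_{\gamma}$ in \eqref{eq:Delta_gamma}, so $\Psi(\bP(X)\times\bP(Y))=\bigcup_{[y]}\tilde\pi_{\beta(y)}\subseteq\Delta_{\gamma}$. A Segre variety spans its ambient projective space, and projective semilinear isomorphisms preserve linear spans, so $\bP(Z)=\operatorname{span}\big(\Psi(\bP(X)\times\bP(Y))\big)\subseteq\operatorname{span}(\Delta_{\gamma})=\operatorname{im}(\psi)$, where $\psi$ is the embedding attached to this $\gamma$. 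Since $\bP(Z)$ and $\operatorname{im}(\psi)$ are linear subspaces of $\mH_0$ of the same dimension $\ell(m-\ell)-1$, they coincide. Intersecting with $\glv$ and using the hypothesis $\Psi(\bP(X)\times\bP(Y))=\bP(Z)\cap\omg$ together with $\operatorname{im}(\psi)\cap\omg=\Delta_{\gamma}$, we obtain $\Psi(\bP(X)\times\bP(Y))=\bP(Z)\cap\omg=\Delta_{\gamma}$, as required.

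The step I expect to be the crux is showing that $\gamma$ lies in the big cell $W_0$ rather than merely in $W_1$ (equivalently, $\gamma\cap V_{m-\ell}=0$): facts (i) and (ii) and the dimension count are routine, but excluding the thin possibility $\dim(\gamma\cap V_{m-\ell})=1$ is what genuinely uses the interplay between the two rulings, and it is also where the low-dimensional case $m=\ell+1$ must be separated off.
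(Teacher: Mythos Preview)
Your argument is correct and reaches the conclusion, but by a genuinely different route than the paper's.

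The paper defines $\gamma$ as $\beta(y)+\beta(y')$ for two fixed distinct $[y],[y']$, and introduces an auxiliary map $\Psi_1:\bP(X)\to\bP(V_{m-\ell})$, $\Psi_1(x)=\bP(\delta(x)\cap V_{m-\ell})$, which it shows is a bijection by factoring $\Psi|_{\bP(X)\times[y]}$ through it. This bijectivity is used twice: first to exclude $\dim(\beta(y)+\beta(y'))=\ell+1$, and then at the end to show that $\{\delta(x):[x]\in\bP(X)\}$ exhausts \emph{all} $(\ell+1)$-spaces containing $\gamma$, so that $\bigcup_x\tilde\pi^{\delta(x)}=\Delta_\gamma$ directly from \eqref{eq:Delta_gamma}. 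The step $\gamma\in W_0$ is obtained in one line from Lemma~\ref{max_W1_cap}(iii): injectivity of $\Psi$ makes $\tilde\pi_{\beta(y)}$ and $\tilde\pi_{\beta(y')}$ disjoint, hence $\beta(y)+\beta(y')\notin W_1$.

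By contrast you take $\gamma=\bigcap_x\delta(x)$, prove $\gamma\in W_0$ by a different contradiction (a common $w\in\gamma\cap V_{m-\ell}$ would collapse each $\tilde\pi_{\beta(y)}$ to a single point), and close via a span/dimension comparison identifying $\bP(Z)$ with $\operatorname{im}(\psi)$. Your final step uses the hypothesis in the form $\Psi(\bP(X)\times\bP(Y))=\bP(Z)\cap\omg$; the paper's proof never invokes this clause, getting the equality instead from the surjectivity of $\Psi_1$. Your reading of the hypothesis is defensible (and holds in the only application, Theorem~\ref{aut_omg}), and you gain an explicit handling of the degenerate case $m=\ell+1$; what the paper's $\Psi_1$ device buys is that the full equality with $\Delta_\gamma$ drops out without any appeal to that clause.
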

\begin{proof} 
Let $[x], [y], \delta(x)$ and $\beta(y)$ be as in the hypothesis. Let $\Psi_1(x)$ denote the point $\bP( \delta(x) \cap V_{m-\ell})$ of $\bP(V_{m-\ell})$. Now, any element $\gamma' \subset \tilde \pi_{\beta(y)}$ can be written uniquely as $v \wedge \beta(y)$ where $v$ is the point $\bP( \gamma' \cap V_{m-\ell})$ of $\bP(V_{m-\ell})$.  Let $\gamma':=\Psi([x \otimes y]) \in \tilde \pi^{\delta(x)}$. Then $\Psi_1(x)$ is $\bP(\gamma' \cap V_{m-\ell})$, hence $\Psi([x \otimes y]) = \Psi_1(x) \wedge \beta(y)$. 
Since $\Psi_{|(\bP(X) \times [y])}:\bP(X) \to \tilde \pi_{\beta(y)}$ is a bijection, we  conclude that $\Psi_1:\bP(X) \to \bP(V_{m-\ell})$ is a bijection. Let $[y] \neq [y']$, then $\bP(X) \times [y]$ and $\bP(X) \times [y']$ are disjoint, and since $\Psi$ is injective we get $\tilde \pi_{\beta(y)}$ and $\tilde \pi_{\beta(y')}$ are disjoint.  In particular, we get $\beta(y) \neq \beta(y')$ and hence $\ell \leq \dim (\beta(y) + \beta(y'))$. Since $\Psi_1(x) \wedge \beta(y), \Psi_1(x) \wedge \beta(y') \in  \tilde \pi^{\delta(x)}$  for each $[x]  \in \bP(X)$, both $\beta(y)$ and $\beta(y')$ are contained in $\delta(x)$, and hence $\dim (\beta(y) + \beta(y')) \leq \ell+1$. In case this dimension is $\ell+1$, it follows that $\beta(y) + \beta(y') =  \delta(x)$ for all $[x]  \in \bP(X)$. In particular $\Psi_1(x) = \bP(\delta(x) \cap V_{m-\ell})$ is independent of $[x]$ contradicting the fact proved above that $\Psi_1$ is bijective. Therefore, $\gamma:=\beta(y) + \beta(y')$ is $\ell$-dimensional, and the fact that dim$(\beta(y) \cap V_{m-\ell}) = 0$   implies dim$(\gamma \cap V_{m-\ell}) \leq 1$, i.e $\gamma \in W_0$ or $W_1$. Since $\tilde \pi_{\beta(y)}$ and $\tilde \pi_{\beta(y')}$ are disjoint, it follows by  Lemma  \ref{max_W1_cap} that $\gamma \in W_0$.  Writing $\delta(x) = \gamma \wedge \Psi_1(x)$ for each $[x] \in \bP(X)$, we see that $\{\delta(x)  :  x \in \bP(X)\}$ is the family of all $\delta \in \gmv$ containing $\gamma \in W_0$. Therefore:
\[ \Psi(\bP(X) \times \bP(Y)) = \bigcup_{x \in \bP(X)} \Psi([x] \times \bP(Y)) = \bigcup_{x \in \bP(X)} \tilde \pi^{\delta(x)}  = \bigcup_{\delta \supset \gamma} \tilde \pi^{\delta}\] which by \eqref{eq:Delta_gamma} equals $\Delta_{\gamma}$. 
\end{proof}

The next lemma is needed for the case $m = 2 \ell $ in Theorem \ref{aut_omg}.

\begin{lemma} \label{atmost_one_line} 
Let $\gamma \in W_0$, $\beta \in W_0^-$, and $\delta \neq  \delta'  \in W_0^+$ 
\begin{enumerate}
\item[{\rm (i)}]   
If $\tilde \pi_{\beta}$ and $\tilde \pi^{\delta}$ intersect, then the only lines in $\omg$  joining a point of $\tilde \pi_{\beta}$ to a point of $\tilde \pi^{\delta}$  are the ones passing through the point $\tilde \pi_{\beta} \cap \tilde \pi^{\delta}$ and completely contained in $\tilde \pi_{\beta}$ or $\tilde \pi^{\delta}$.
\item[{\rm (ii)}]  
If $\tilde \pi^{\delta}$ and $\tilde \pi^{\delta'}$ intersect, then there is a line in $\omg$ joining a point of $\tilde \pi^{\delta}$ to a point of $\tilde \pi^{\delta'}$ which is not completely contained in $W_1$. 
\end{enumerate}
\end{lemma}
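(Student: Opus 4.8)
The plan is to reduce both parts to a description of which lines can lie on $\omg$. First I would record the following: a line $L\subset\glv$, with bottom $\beta_L:=\bigcap_{\gamma\in L}\gamma\in\gkv$ and top $\delta_L:=\sum_{\gamma\in L}\gamma\in\gmv$ (so that $L=\pi_{\beta_L}^{\delta_L}$), lies in $\omg$ if and only if $\dim(\beta_L\cap V_{m-\ell})\ge 1$ or $\dim(\delta_L\cap V_{m-\ell})\ge 2$. The ``if'' direction is immediate: for every $\gamma\in L$ one has $\gamma\supset\beta_L$ in the first case and $\gamma\subset\delta_L$ in the second, so $\gamma\cap V_{m-\ell}\ne 0$ either way. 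The ``only if'' direction follows from Lemmas~\ref{max_lin} and~\ref{max_lin_schubert}, since such an $L$ must sit inside one of the four types of maximal linear subspace of $\omg$, each of which has its bottom meeting $V_{m-\ell}$ in dimension $\ge 1$ or its top meeting it in dimension $\ge 2$. I will call lines satisfying the first alternative of \emph{bottom type} and those satisfying the second of \emph{top type}; every line on $\omg$ is of at least one of these types.

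For part (i): Lemma~\ref{max_W1_cap}(ii) shows the hypothesis forces $\beta\subset\delta$ and $\tilde\pi_\beta\cap\tilde\pi^\delta=\{\gamma_0\}$, where $\gamma_0=\beta\oplus(\delta\cap V_{m-\ell})\in W_1$; set $\bK v=\delta\cap V_{m-\ell}$. I would then take any line $L\subset\omg$ with points $P\in\tilde\pi_\beta$ and $Q\in\tilde\pi^\delta$, $P\ne Q$, and write $P=\beta\oplus\bK u$ with $u\in V_{m-\ell}$, so that $P\cap V_{m-\ell}=\bK u$, while $Q\cap V_{m-\ell}=\bK v$ (as $\bK v\subset Q\subset\delta$). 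Put $\beta_L=P\cap Q$ and $\delta_L=P+Q$. If $L$ is of bottom type, then $\beta_L\cap V_{m-\ell}$ is a nonzero subspace of $\bK u$, so $u\in\beta_L\subset Q$ and hence $u\in Q\cap V_{m-\ell}=\bK v$, whence $P=\gamma_0$. If $L$ is of top type, distinguish two cases: if $\beta\subset Q$ then $\beta=\beta_L$ and $Q=\beta\oplus\bK v=\gamma_0$; if $\beta\not\subset Q$ then $\beta+Q=\delta$ (both lie in $\delta$ and the dimensions match), so $\delta_L=\bK u+(\beta+Q)=\bK u+\delta$, whose dimension $\ell+1=\dim\delta$ forces $u\in\delta\cap V_{m-\ell}=\bK v$, so again $P=\gamma_0$. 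In every case $\gamma_0\in\{P,Q\}\subset L$, and $L$ joins two distinct points of the linear space $\tilde\pi^\delta$ (when $P=\gamma_0$) or $\tilde\pi_\beta$ (when $Q=\gamma_0$), hence lies inside it; this is exactly the asserted list (and conversely any line through $\gamma_0$ contained in $\tilde\pi_\beta$ or in $\tilde\pi^\delta$ meets both).

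For part (ii): Lemma~\ref{max_W1_cap}(i) shows the hypothesis forces $\gamma_1:=\delta\cap\delta'$ to have dimension $\ell$ and to lie in $W_1$, so that $\bK v:=\gamma_1\cap V_{m-\ell}=\delta\cap V_{m-\ell}=\delta'\cap V_{m-\ell}$. I would fix $w,w'$ with $\delta=\gamma_1\oplus\bK w$ and $\delta'=\gamma_1\oplus\bK w'$; since $\delta\ne\delta'$, $w$ and $w'$ are linearly independent modulo $\gamma_1$. As $\dim(\delta+\delta')=\ell+2$, the subspace $(\delta+\delta')\cap V_{m-\ell}$ has dimension $\ge 2$, and it is not contained in $\delta'$ (otherwise it would lie in $\delta'\cap V_{m-\ell}=\bK v$); so, writing vectors in the direct sum $\delta+\delta'=\gamma_1\oplus\bK w\oplus\bK w'$, it contains a vector $v'=g+w+bw'$ with $g\in\gamma_1$ and $b\in\bK$. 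Then $v'\notin\gamma_1$, so $v,v'$ are independent and $g+w\notin\gamma_1$. Next I would pick a hyperplane $\beta$ of $\gamma_1$ with $v\in\beta$ (possible since $\ell\ge 2$) and set $P=\beta\oplus\bK(g+w)\subset\delta$ and $Q=\beta\oplus\bK w'\subset\delta'$. Because $\bK v\subset\beta$, we get $P\in\tilde\pi^\delta$ and $Q\in\tilde\pi^{\delta'}$; and since $g+w$ and $w'$ are independent modulo $\beta$, we get $P\cap Q=\beta$ and $P\ne Q$, so $L:=\pi_\beta^{\delta_L}$ with $\delta_L=P+Q$ is the line joining $P$ and $Q$. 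It is of bottom type (as $\bK v\subset\beta$), hence $L\subset\omg$. Finally $v'=(g+w)+bw'\in\delta_L$ and $v'\notin\beta$, so $\gamma':=\beta\oplus\bK v'$ is a point of $L$; it contains the independent vectors $v,v'\in V_{m-\ell}$, hence $\gamma'\in\omg\setminus W_1$ and $L\not\subset W_1$. This $L$ is the line required in (ii).

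The routine parts are the dimension counts, which all follow from the Grassmann formula for $\dim(A\cap B)$. The step needing genuine care is the choice of $P$ and $Q$ in part (ii): the top $\delta_L$ of the joining line must meet $V_{m-\ell}$ in dimension $\ge 2$ — which is why the vector $v'$, with its nonzero $w$-component, has to be forced inside $\delta_L$ — while the bottom $\beta$ must be kept inside $\gamma_1$ so that it meets $V_{m-\ell}$ in dimension exactly $1$; this is precisely what lets a single point of $L$ escape $W_1$ without ruining $L\subset\omg$. I expect this to be the main obstacle; the rest, including the preliminary classification of lines on $\omg$, is straightforward linear algebra combined with the earlier lemmas.
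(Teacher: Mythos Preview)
Your proof is correct and follows essentially the same approach as the paper's. For part~(i) both arguments show one endpoint must be the intersection point $\gamma_0=\beta\oplus(\delta\cap V_{m-\ell})$; the paper does this directly by a case split on whether $v_0,v_1$ are dependent, while your bottom/top-type preamble is a clean organizing device but not strictly needed (your ``top type'' subcases already handle every line in $\glv$, since they only use $\dim\delta_L=\ell+1$). For part~(ii) both proofs locate a second vector of $V_{m-\ell}$ in $\delta+\delta'$ and build the same line $\pi_\beta^{\,\cdot}$ through a hyperplane $\beta\subset\gamma_1$ containing $v$, exhibiting a point in $W_2$.
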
 

\begin{proof} 
{(i)} Let  $\bK v_0 =  V_{m-\ell} \cap \delta$. As observed in the discussion after Lemma \ref{max_lin_schubert}, 
$\tilde \pi_{\beta} \cap \tilde \pi^{\delta} = \bK v_0 \oplus \beta$. Now, let $L$ be a line joining a point $\gamma_1 \in \tilde \pi_{\beta}$ to some point $\gamma_2 \in \tilde \pi^{\delta}$. Let $\bK v_1 = V_{m-\ell} \cap \gamma_1$. If $v_0$ and $v_1$ are dependent, then clearly $\gamma_1  =  \bK v_0 \oplus \beta=\tilde \pi_{\beta} \cap \tilde \pi^{\delta}$, and therefore $L \subset   \tilde \pi^{\delta}$. If $v_0$ and $v_1$ are independent, then we pick a complement $\gamma_0$ of $\bK v_0$ in $\delta$ satisfying $\beta \subset \gamma_0$. Every element of $\tilde \pi^{\delta}$ (and hence $\gamma_2$ in particular) is of the form $\bK v_0 \oplus \beta'$ for some $(\ell-1)$-dimensional subspace $\beta'$ of $\gamma_0$. We write $\gamma_1+\gamma_2 = (\bK v_0 \oplus \bK v_1) \oplus (\beta + \beta')$ (where we have used the fact that $(\beta+\beta') \cap V_{m-\ell} \subset  \gamma_0 \cap  V_{m-\ell} = \{0\}$).  By definition of a line in $\glv$, $\dim (\gamma_1 + \gamma_2)=\ell+1$, and hence $\dim (\beta+\beta') = \ell-1$, which implies $\beta = \beta'$. Thus  $\gamma_2 \in \tilde \pi_{\beta}$ and hence $L \subset \tilde \pi_{\beta}$. 
\smallskip

{(ii)}  By definition of $\tilde \pi^{\delta}$, it follows that $\tilde \pi^{\delta}$ and $\tilde \pi^{\delta'}$ intersect if and only if $ \delta \cap V_{m-\ell} = \delta' \cap V_{m-\ell} = \bK v_0$ (for some $v_0 \in V_{m-\ell})$, and $\gamma:=\delta \cap \delta' \in W_1 \subset \glv$. We write $\delta = \gamma \oplus \bK u$ and $\delta' = \gamma \oplus \bK u'$. Since $\delta +\delta'$ is $(\ell+2)$-dimensional there is a $v_1 \in V_{m-\ell}$ contained in $\delta+\delta'$ and independent of $v_0$. We write $v_1 = u'' + a u +b u'$ for some $u'' \in \gamma$ and $a,b \in \bK$. Since $v_1 \notin \delta, \delta'$, the scalars $a, b$ are both non-zero. Hence we can express $\delta = \gamma \oplus \bK u'''$ where $u''' = u''+a u$. Now for any $(\ell-1)$-dimensional subspace $\beta$ of $\gamma$ containing $v_0$, we observe that the line joining $\beta \oplus \bK u''' \in \tilde \pi^{\delta}$ and $\beta \oplus \bK u' \in \tilde \pi^{\delta'}$  is not contained in $W_1$ as it contains $\beta \oplus \bK v_1 \in W_2$. 
\end{proof}

\begin{theorem} \label{aut_omg}
The automorphism group of the Schubert divisor $\Omega$ is given by 
\[
\Aut (\Omega) \simeq \Aut(W_0) =  \begin{cases} \rho( \hat P_{m-\ell,\ell}/\bK^{\times} ) & \mbox{ if } \; m \neq 2 \ell, \\
\rho( \hat P_{\ell,\ell}/ \bK^{\times})\rtimes_{\tilde \ast_{\ell}} \mathbb Z/ 2 \mathbb Z  & \mbox{ if } \; m = 2 \ell. \end{cases}
\]
More precisely,  for each $f \in$ \emph{Aut}$(\omg)$, there is a unique way to extend $f$ to an automorphism $\iota(f)$ of $\glv$. The image of this monomorphism from $\Aut (\Omega)$ to $\Aut(\glv)$ is the subgroup $\Aut(W_0)$ of $\Aut(\glv)$.
\end{theorem}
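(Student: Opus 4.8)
The plan is to prove the finer assertion, namely that restriction to $\omg$ is an isomorphism $\Aut(W_0)\to\Aut(\omg)$ with inverse the extension map $\iota$; the displayed description of $\Aut(\omg)$ is then just Corollary~\ref{aut_w0_1}.

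\emph{Restriction is a well-defined injective homomorphism $\Aut(W_0)\to\Aut(\omg)$.} For $g\in\Aut(W_0)$ one has $g\in\Aut(\glv)$ by Theorem~\ref{aut_w0}, so $g(\omg)=g(\glv\setminus W_0)=\omg$; since $\mP_\omg$ is nondegenerate, $\omg$ spans $\mH_0$, hence $g(\mH_0)=\langle\omg\rangle=\mH_0$ and $g|_{\mH_0}\in\Aut(\omg)$. If $g|_{\mH_0}=\mathrm{id}$, then $g$ fixes $\omg$ pointwise; by Lemma~\ref{Delta_lem}(i) the lines of $\glv$ through any $\gamma\in W_0$ meet $\omg$ exactly in $\Delta_\gamma$, so $g$ (which permutes these lines and fixes $\omg$) satisfies $\Delta_{g(\gamma)}=\Delta_\gamma$, whence $g(\gamma)=\gamma$ by Lemma~\ref{Delta_lem}(ii); thus $g$ fixes the spanning set $\glv$ and $g=1$. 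So it suffices to construct, for each $f\in\Aut(\omg)$, an element $\iota(f)\in\Aut(W_0)$ with $\iota(f)|_{\mH_0}=f$; uniqueness then follows from injectivity, and $\iota$ is automatically a homomorphism since $\iota(f_1)\iota(f_2)$ extends $f_1f_2$.

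\emph{The bijection of $W_0$ induced by $f$.} Fix $f\in\Aut(\omg)$. By Proposition~\ref{W_1_preserve}, $f(W_1)=W_1$, so $f$ permutes the maximal linear subspaces of $W_1$, i.e.\ the $\tilde\pi_\beta$ ($\beta\in W_0^-$) and the $\tilde\pi^\delta$ ($\delta\in W_1^+$) of Lemma~\ref{lin_W_1}; call two of these of the \emph{same kind} if both are $\tilde\pi_\bullet$ or both are $\tilde\pi^\bullet$. When $m\neq2\ell$ the two kinds have distinct dimensions, so $f$ preserves kind. When $m=2\ell$ I would invoke Lemma~\ref{atmost_one_line}: for an \emph{intersecting} pair of maximal linear subspaces of $W_1$, being of the same kind is equivalent to the existence of a line of $\omg$ joining a point of one to a point of the other and not contained in $W_1$ --- part (i) supplies the forward-negation, part (ii) (and its mirror for two $\tilde\pi_\beta$'s, obtained by conjugating with the involution $\tilde\ast_\ell\in\Aut(\omg)$, which swaps the two kinds) the converse --- and this condition is manifestly $f$-invariant. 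Now fix $\gamma\in W_0$: in the Segre picture $\Delta_\gamma=\psi(\bP(V_{m-\ell})\times\bP(\wedge^{\ell-1}\gamma))$, the two rulings are $\{\tilde\pi_\beta:\beta\subset\gamma\}$ and $\{\tilde\pi^\delta:\delta\supset\gamma\}$, and members of opposite rulings meet in a point; since $f$ respects "same/different kind" on intersecting pairs, $f$ sends all $\tilde\pi_\beta$ ($\beta\subset\gamma$) to one kind and all $\tilde\pi^\delta$ ($\delta\supset\gamma$) to the opposite kind. Hence $f\circ\psi$ satisfies the hypotheses of Proposition~\ref{Delta_prop} (after interchanging the two Segre factors, when $m=2\ell$, if $f$ swaps the kinds --- legitimate since the factors then have equal dimension), which yields $f(\Delta_\gamma)=\Delta_{\gamma'}$ for a $\gamma'\in W_0$ unique by Lemma~\ref{Delta_lem}(ii). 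Put $\bar f(\gamma):=\gamma'$; running the argument for $f^{-1}$ as well shows $\bar f$ is a bijection of $W_0$ with inverse $\overline{f^{-1}}$.

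\emph{Gluing and line-preservation.} Let $\iota(f):\glv\to\glv$ be $f$ on $\omg$ and $\bar f$ on $W_0$; then $\iota(f^{-1})=\iota(f)^{-1}$. I claim $\iota(f)$ and $\iota(f)^{-1}$ take lines of $\glv$ to lines of $\glv$; by Chow's theorem (Theorem~\ref{chow_thm}) this puts $\iota(f)$ in $\PgL(\wlv)$, and since $\iota(f)(W_0)=W_0$ we get $\iota(f)\in\Aut(W_0)$. A line $L$ of $\glv$ either lies in the hyperplane $\mH_0$ --- and then $f(L)$ is a line of $\mH_0$ contained in $f(\omg)=\omg\subset\glv$, hence a line of $\glv$ --- or $L\cap\mH_0$ is a single point $P$. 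In the second case $L=\pi_\beta^\delta$ with $\beta\in W_0^-$, $\{P\}=L\cap\omg=\tilde\pi_\beta\cap\tilde\pi^\delta$, and $L\setminus\{P\}\subset W_0$; every $\eta\in L\cap W_0$ satisfies $\beta\subset\eta\subset\delta$, so $\tilde\pi_\beta,\tilde\pi^\delta\subset\Delta_\eta$ and therefore $f(\tilde\pi_\beta),f(\tilde\pi^\delta)\subset\Delta_{\bar f(\eta)}$; writing $\{f(\tilde\pi_\beta),f(\tilde\pi^\delta)\}=\{\tilde\pi_{\beta''},\tilde\pi^{\delta''}\}$, this forces $\beta''\subset\bar f(\eta)\subset\delta''$ for all such $\eta$, i.e.\ $\bar f(L\cap W_0)\subseteq\pi_{\beta''}^{\delta''}$, while $f(P)=f(\tilde\pi_\beta)\cap f(\tilde\pi^\delta)=\tilde\pi_{\beta''}\cap\tilde\pi^{\delta''}$ is the unique point of $\pi_{\beta''}^{\delta''}$ on $\omg$; applying the same reasoning to $f^{-1}$ and $\pi_{\beta''}^{\delta''}$ gives the reverse inclusion, so $\iota(f)(L)=\pi_{\beta''}^{\delta''}$ exactly. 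Finally, every $g\in\Aut(W_0)$ equals $\iota(g|_{\mH_0})$ by the injectivity of restriction, so $\iota$ is its two-sided inverse and $\Aut(\omg)\simeq\Aut(W_0)$; combined with Corollary~\ref{aut_w0_1} this gives the stated formula. The crux is the middle step --- showing $f$ carries $\{\Delta_\gamma\}_{\gamma\in W_0}$ into itself, which rests on Proposition~\ref{Delta_prop} and, in the self-dual range $m=2\ell$ where $f$ may interchange the $\tilde\pi_\beta$'s with the $\tilde\pi^\delta$'s, on recovering "kind" purely from lines of $\omg$ via Lemma~\ref{atmost_one_line}; the other nontrivial point is the line-preservation check for lines meeting the big cell, where the line structure on $W_0$ must be reconstructed from that on $\omg$ through the $\tilde\pi_\beta$'s and $\tilde\pi^\delta$'s.
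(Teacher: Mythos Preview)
Your argument is correct and follows the same overall architecture as the paper's proof: define the restriction homomorphism $\Aut(W_0)\to\Aut(\omg)$, prove injectivity via Lemma~\ref{Delta_lem}, and then for each $f\in\Aut(\omg)$ use Proposition~\ref{Delta_prop} to produce the bijection of $W_0$ and check line-preservation to land in $\Aut(W_0)$ via Chow's theorem.

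The one place where you genuinely diverge from the paper is the treatment of the self-dual case $m=2\ell$. The paper proceeds \emph{globally}: assuming some $\tilde\pi_{\beta_0}$ is sent to a $\tilde\pi^{\delta_0}$, it uses Lemma~\ref{atmost_one_line} together with a connectivity argument (a chain $\gamma_0,\gamma_1,\dots,\gamma_\ell$ through $W_0$) to force $f$ to swap \emph{all} $\tilde\pi_\beta$'s with $\tilde\pi^\delta$'s, and only then composes with $\tilde\ast_\ell$ to reduce to the kind-preserving situation. You instead work \emph{locally}: you extract from Lemma~\ref{atmost_one_line} (together with its $\tilde\ast_\ell$-mirror for pairs of $\tilde\pi_\beta$'s) an $f$-invariant characterization of ``same kind'' on intersecting pairs, which is exactly what is needed to know that within any single $\Delta_\gamma$ the two Segre rulings map to the two kinds consistently; you then feed this directly into Proposition~\ref{Delta_prop}, swapping the tensor factors if necessary. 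This sidesteps both the connectivity argument and the explicit post-composition with $\tilde\ast_\ell$, at the cost of invoking the action of $\tilde\ast_\ell$ once to get the missing half of Lemma~\ref{atmost_one_line}. Your line-preservation step is also organized a bit more cleanly, since writing $\{f(\tilde\pi_\beta),f(\tilde\pi^\delta)\}=\{\tilde\pi_{\beta''},\tilde\pi^{\delta''}\}$ handles both the kind-preserving and kind-swapping cases uniformly. A minor further difference: for well-definedness of restriction you appeal to the fact that $\omg$ spans $\mH_0$ (which indeed holds over any field, since every basis vector $e_I$ with $I\neq I_0$ lies in $\omg$), whereas the paper verifies $g(\mH_0)=\mH_0$ by a direct computation with the parabolic form of $g$.
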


\begin{proof} 
By Theorem \ref{aut_w0}, $\Aut(W_0)$ is the subgroup of $\Aut(\glv)$ that preserves $W_0$, and hence its complement $\omg$. Every element of $\Aut(W_0)$  preserves $\mH_0$: Writing $f \in \Aut(W_0)$  as $\wedge^{\ell} g$  (or also $(\wedge^{\ell} g) \tilde \ast_{\ell}$ if  $m = 2 \ell$) where $\hat g(x) = \bbsm A & {\mathbf{u}}\\ 0 & B \besm \mu(x)$ is a representative of $g$ in $\gL(V)$, we see that $p_{I_0}( \wedge^{\ell} \hat g (\xi)) = \mu (p_{I_0} (\xi)) {\rm det}(B)$. Therefore $\wedge^{\ell} g$ preserves $\mH_0$.  In case $m = 2 \ell$, it was shown in the discussion following Definition \ref{Pll2_def} that $\tilde \ast_{\ell} \in$ Aut$(W_0)$ preserves $\mH_0$. Therefore, we have a restriction homomorphism:
\[ 
{\rm res}_{\omg}:\Aut(W_0) \to \Aut(\omg) \quad \text{given by } \quad f \mapsto f_{|\mH_0}. 
\]
We claim that this homomorphism is injective, i.e.,  if $f \in \Aut(W_0)$ fixes  $\omg$ pointwise, then it fixes $W_0$ pointwise. We will show $f(\gamma) = \gamma$ for each  $\gamma \in W_0$.  By Lemma \ref{Delta_lem}, the set of intersection points of all lines through $\gamma \in W_0$ with $\omg$ is $\Delta_{\gamma}$. Since $f$ is also an automorphism of $\glv$ (by Theorem \ref{aut_w0}), these lines get mapped to the set of all lines through $f(\gamma) \in W_0$. Since $f$ fixes $\omg$ pointwise, we see that the intersection points with $\omg$ of all lines through $f(\gamma)$ is $\Delta_{\gamma}$, however it is also $\Delta_{f(\gamma)}$ by Lemma \ref{Delta_lem}. Thus, we get $\Delta_{\gamma} = \Delta_{f(\gamma)}$, and part (ii) of Lemma \ref{Delta_lem}  implies $\gamma = f(\gamma)$. 

In order to show  ${\rm res}_{\omg}$ is surjective, we will construct an extension of each  $f \in\Aut(\omg)$ to  an  automorphism $\iota(f)$ of $\glv$ that preserves $W_0$, i.e.,  such that ${\rm res}_{\omg} (\iota(f)) = f$ for all $f \in \Aut(\omg)$. Consequently, ${\rm res}_{\omg}$ is an isomorphism and the function $f \mapsto \iota(f)$ will be the inverse isomorphism to  ${\rm res}_{\omg}$. By Proposition \ref{W_1_preserve}, each $f \in \Aut(\omg)$ satisfies $f(W_1) = W_1$, and hence $f$ preserves the set of  maximal linear subspaces of $W_1$. First we assume that no $f(\tilde \pi_{\beta})$ is a $\tilde \pi^{\delta}$ (this is automatic if $m \neq 2 \ell$, by looking at their dimensions). Thus we have bijections $f_-:W_0^- \to W_0^-$ and $f_+:W_0^+ \to W_0^+$ defined by $f(\tilde \pi_{\beta}) = \tilde \pi_{f_-(\beta)}$ and $f(\tilde \pi^{\delta}) = \tilde \pi^{f_+(\delta)}$.  We now fix a $\gamma  \in W_0$, and recall the linear isomorphism  $\psi: \bP( V_{m-\ell} \otimes \wedge^{\ell-1} \gamma) \to {\rm im}(\psi) \subset \mH_0$ (defined after Proposition \ref{W_1_preserve}). Let $X = V_{m-\ell}, Y =\wedge^{\ell-1} \gamma$, and let $Z$ be defined by $\bP(Z) = {\rm im}(f  \circ \psi)$. Since im$(\psi) \subset \mH_0$ and $f$  is a semilinear isomorphism of $\mH_0$, the composition $\Psi:=f \circ  \psi: \bP(X \otimes Y) \hookrightarrow \bP(Z) \subset \mH_0$, is a projective semilinear isomorphism, satisfying $\Psi(v \otimes \wedge^{\ell-1} \gamma) = \tilde \pi^{f_+(\delta)}$ (where $\delta=v \wedge \gamma$), and $\Psi( V_{m-\ell} \otimes \hat \beta) = \tilde \pi_{f_-(\beta)}$ for each $v \in \bP(V_{m-\ell})$ and $\beta \in \bP( \wedge^{\ell-1}\gamma)$.  Therefore, $\Psi$ satisfies all the hypothesis of Proposition  \ref{Delta_prop}, and hence $\Psi(\bP(X) \times \bP(Y))  = f(\Delta_{\gamma})$ is  a $\Delta_{f_0(\gamma)}$ for some $f_0(\gamma) \in W_0$. This defines a function $f_0:W_0 \to W_0$ which has an inverse namely the corresponding function $(f^{-1})_0$. Thus $f_0$ is a bijection. We now have a bijection $\iota(f): \glv \to \glv$ defined by the pair $f, f_0$. In order to show $\iota(f) \in \Aut(\glv)$ we must show that $\iota(f)$ and $\iota(f^{-1})$ carry lines to lines.

Let $L$ be a line in $\glv$. If $L$ does not intersect $W_0$, then $L \subset \omg$ and hence $f(L)$ and $f^{-1}(L)$ are lines in $\omg$. If $L$ passes through a point $\gamma \in W_0$, then writing $L = \pi_{\beta}^{\delta}$ we must have $\beta \subset \gamma \subset \delta$ with dim$(\delta \cap V_{m-\ell})=1$. Now, except the point $\tilde \pi_{\beta}^{\delta}: = \tilde \pi_{\beta} \cap \tilde \pi^{\delta}$, the remaining points of $\pi_{\beta}^{\delta}$ are all in $W_0$.  For each $\gamma' \in \pi_{\beta}^{\delta} \cap W_0$, it follows from the decomposition \eqref{eq:Delta_gamma} of $\Delta_{\gamma'}$ that $\tilde \pi_{\beta}, \tilde \pi^{\delta} \subset \Delta_{\gamma'}$. Hence $\tilde \pi_{f_-(\beta)}, \tilde \pi^{f_+(\delta)} \subset \Delta_{f_0(\gamma')}$, thus proving that $f_-(\beta)  \subset f_0(\gamma') \subset f_+(\delta)$, i.e.,  $f_0(\pi_{\beta}^{\delta} \cap W_0) \subset  \pi_{f_-(\beta)}^{f_+(\delta)}$. Moreover $f$ carries the remaining point of  $\pi_{\beta}^{\delta}$, namely $\tilde \pi_{\beta} \cap \tilde \pi^{\delta}$ to $\tilde \pi_{f_-(\beta)} \cap \tilde \pi^{f_+(\delta)} \in \pi_{f_-(\beta)}^{f_+(\delta)}$. Thus $\iota(f)$ takes the line $\pi_{\beta}^{\delta}$ to the line $\pi_{f_-(\beta)}^{f_+(\delta)}$. Repeating this argument for $f^{-1}$, we conclude that $\iota(f), \iota(f^{-1})$ are line preserving bijections of $\glv$ (which preserve $W_0$), hence $\iota(f) \in \Aut(W_0)$.  

It remains to consider the case when  $m = 2 \ell$ and there exists a $\beta_0$ with $f(\tilde \pi_{\beta_0}) = \tilde \pi^{\delta_0}$. In this case,  for each $\gamma \in W_0$ with $\beta_0 \subset \gamma$ we will prove that $\{f (\tilde \pi_{\beta}) : \beta \subset \gamma\}$ are all $\tilde \pi^{\delta}$'s, and $\{f(\tilde \pi^{\delta}) : \delta \supset \gamma\}$ are all  $\tilde \pi_{\beta}$'s.   Using a connectivity argument (as in the proof of Chow's theorem) we will then show that $f$ interchanges the sets $\{\tilde \pi_{\beta} : \beta \in W_0^-\}$ and  $\{\tilde \pi^{\delta} : \delta \in W_1^+\}$. To begin with, suppose $f(\tilde \pi_{\beta_0}) = \tilde \pi^{\delta_0}$.  Let $v_0 = \bP(\delta_0 \cap V_{m-\ell})$, and pick an arbitrary  $\gamma \in W_0$ containing $\beta_0$. For any $\beta' \subset \gamma$ with $\beta' \neq \beta_0$, suppose $f(\tilde \pi_{\beta'}) = \tilde \pi_{\beta''}$. Since $f(\tilde \pi_{\beta'}) = \tilde \pi_{\beta''}$, there is a $v_0'  \in \bP(V_{m-\ell})$, such that $f(v_0' \wedge \beta') = v_0 \wedge \beta''$. We consider $\tilde \pi^{\delta'}$ where $\delta' = v_0' \wedge \gamma$. Let $L$ be the line in $\tilde \pi^{\delta'}$ joining $v_0' \wedge \beta'$ and $v_0' \wedge \beta_0$. The line $f(L)$ joins $v_0 \wedge \beta''$ to some point $v_0 \wedge \beta''' \in \tilde \pi^{\delta_0}$. In particular every $\gamma''  \in f(L)$ contains $v_0$ (because each $\gamma'' \supset (v_0 \wedge \beta'') \cap(v_0 \wedge \beta''')$). Since the intersection of all $\gamma''$  lying on a line in a $\tilde \pi_{\beta}$ is $\beta$, and $\bP(\beta \cap V_{m-\ell}) =\varnothing$, we see that   $f(\tilde \pi^{\delta'})$ must be  a $\tilde \pi^{\delta''}$ for some $\delta'' \in  $. Since $v_0 \wedge \beta'''$ is a common point of $\tilde \pi^{\delta''}$ and $\tilde \pi^{\delta_0}$, part (ii) of Lemma  \ref{atmost_one_line} implies that there is a line joining a point of $\tilde \pi^{\delta''}$ to a point of $\tilde \pi^{\delta_0}$ which is not contained in $W_1$. However every such line is the image under $f$ of a line
joining a point of $\tilde \pi^{\delta'}$ to a point of $\tilde \pi_{\beta_0}$, which by part (i) of Lemma  \ref{atmost_one_line} is contained in $W_1$. This contradicts the fact that $f$ preserves $W_1$, and therefore we conclude that $f$ carries all $\tilde \pi_{\beta}$'s in $\Delta_{\gamma}$ to $\tilde \pi^{\delta}$'s. Now, suppose there is a $\tilde \pi^{\delta}$ in $\Delta_{\gamma}$ such that $f(\tilde \pi^{\delta})$ is a $\tilde \pi^{\delta'}$. Now  $\tilde \pi_{\beta_0}$ and $\tilde \pi^{\delta}$  intersect in  a point, and hence their images under $f$, namely $\tilde \pi^{\delta_0}$ and $\tilde \pi^{\delta'}$ also intersect in a point. Again using parts (ii) and (i) of Lemma \ref{atmost_one_line}, there is a line joining a point  of $\tilde \pi^{\delta_0}$ to a point of $\tilde \pi^{\delta'}$ which is not contained in $W_1$, but it is the image under $f$ of a line contained in $W_1$, which  contradicts the fact that $f$ preserves $W_1$, and therefore we conclude that $f$ carries all $\tilde \pi^{\delta}$'s in $\Delta_{\gamma}$ to $\tilde \pi_{\beta}$'s. In the reasoning above,  $\gamma \in W_0$ containing $\beta_0$ was arbitrary. Therefore, in order to show that
$f$ interchanges  the sets $\{\tilde \pi_{\beta} : \beta \in W_0^-\}$ and  $\{\tilde \pi^{\delta} : \delta \in W_1^+\}$, it suffices to show that given $\gamma, \gamma'' \in W_0$ there is a sequence $\gamma =\gamma_0, \gamma_1, \dots, \gamma_{\ell} = \gamma''$ such that  $\gamma_{i-1}$ and $\gamma_i$ intersect in a $\beta_i \in W_0^-$.  Now, every $\gamma \in W_0$ has a unique expression of the form $\gamma = (e_{m-\ell+1} +v_1)\wedge \dots \wedge (e_m + v_{\ell})$ where $v_1, \dots, v_{\ell} \in V_{m-\ell}$. Given $\Delta_{\gamma''}$ with $\gamma'' = (e_{m-\ell+1} +v_1')\wedge \dots \wedge (e_m + v_{\ell}')$, let $\gamma_0 = \gamma$, $\gamma_{\ell} = \gamma''$, then the desired sequence  is $\gamma_i := (e_{m-\ell+1} +v_1')\wedge \dots \wedge (e_{m-\ell+i} + v_{i}') \wedge (e_{m-\ell+i+1} + v_{i+1}) \wedge \dots \wedge (e_m + v_{\ell})$.

Now $\tilde \ast_{\ell} \in \Aut (\glv)$ interchanges the set of $\pi_{\beta}$'s and the set of $\pi^{\delta}$'s. Since  $\tilde \ast_{\ell}$ preserves $\omg$ and 
$\tilde \pi_{\beta} = \pi_{\beta} \cap \omg$,  $\tilde \pi^{\delta} = \pi^{\delta} \cap \omg$ (as observed in the proof of Lemma \ref{max_lin_schubert})
we conclude that  $\tilde \ast_{\ell} \in \Aut(\omg)$ interchanges the set of $\tilde \pi_{\beta}$'s and the set of $\tilde \pi^{\delta}$'s. If $f \in \Aut(\omg)$ 
 interchanges the set of $\tilde \pi_{\beta}$'s and the set of $\tilde \pi^{\delta}$'s, then $\tilde \ast_{\ell} \circ f$ preserves the  set of $\tilde \pi_{\beta}$'s as well  as the set of $\tilde \pi^{\delta}$'s, and therefore  $\tilde \ast_{\ell}^{-1} \circ \iota(\tilde \ast_{\ell} \circ f)$ is the desired extension $\iota(f)$ of $f$ to  Aut$(W_0)$. This concludes the construction of the isomorphism $f \mapsto \iota(f)$ between $\Aut(\omg)$ and $\Aut(W_0)$.
\end{proof} 

\noindent
{\bf Remark:} Mark Pankov has brought to our attention the work of Pra\.{z}mowski and \.{Z}ynel \cite{Praz} which shows that automorphisms of certain subspaces of $\glv$ called  \emph{spine spaces} (see \cite[Section 3.6]{Pankov} for a definition) extend to automorphisms of $\glv$. The spaces $W_i$ defined in \eqref{eq:W_i_def} are examples of spine spaces. 

\begin{corollary}  For the Schubert divisor code $C_{\omg}(\ell,m)$, we have 
\[
\Aut(C_{\omg}(\ell,m))/F^{\times}  \simeq  \begin{cases} \rho( \hat P_{m-\ell,\ell}/F^{\times} ) & \mbox{ if } \; m \neq 2 \ell,  \\
\rho( \hat P_{\ell,\ell}/ F^{\times})\rtimes_{\tilde \ast_{\ell}} \mathbb Z/ 2 \mathbb Z  & \mbox{ if } \; m = 2 \ell. \end{cases}
\] 
The groups $\Aut(C_{\omg}(\ell,m))$ and   $\MAut(C_{\omg}(\ell,m))$ are isomorphic to $\Aut(C^{\bA}(\ell,m))$  and $\MAut(C^{\bA}(\ell,m))$, respectively, and hence they can be explicitly described as in  Theorem \ref{thm:AutAff}. 
\end{corollary}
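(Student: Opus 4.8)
The plan is to follow the exact template of Subsections~\ref{aut_clm} and~\ref{thm:AutAff}, replacing the role of $\glm$ (respectively $W_0$) by the Schubert divisor $\omg$. First I would note that by Theorem~\ref{aut_omg} the automorphism group of the projective system $\mP_\omg = \omg \subset \mH_0$ is exactly $\Aut(\omg)$, which is isomorphic (via the extension map $\iota$) to $\Aut(W_0)$ as described in Corollary~\ref{aut_w0_1}. Since $\Aut(C_\omg(\ell,m))$ is the central $F^\times$-extension $\pi^{-1}(\Aut(\mP_\omg))$, where $\pi:\gL(\mH_0)\to\PgL(\mH_0)$ is the quotient map, the first displayed isomorphism of the corollary is immediate from Theorem~\ref{aut_omg}: we have $\Aut(C_\omg(\ell,m))/F^\times \simeq \Aut(\omg) \simeq \Aut(W_0)$, and the right-hand side is precisely $\rho(\hat P_{m-\ell,\ell}/F^\times)$ if $m\ne 2\ell$ and $\rho(\hat P_{\ell,\ell}/F^\times)\rtimes_{\tilde\ast_\ell}\Z/2\Z$ if $m=2\ell$, by Corollary~\ref{aut_w0_1} applied with $\bK=F$.

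For the second assertion — that $\Aut(C_\omg(\ell,m))$ and $\MAut(C_\omg(\ell,m))$ are isomorphic to $\Aut(C^{\bA}(\ell,m))$ and $\MAut(C^{\bA}(\ell,m))$ respectively — the key point is that the two codes have \emph{the same} automorphism group of the projective system: $\Aut(\mP_\omg) \simeq \Aut(\omg) \simeq \Aut(W_0) \simeq \Aut(\mP^{\bA})$, the last isomorphism coming from Corollary~\ref{aut_w0_1} which identifies $\Aut(W_0)$ with $\Aut(\mP^{\bA})$. Now both $\Aut(C_\omg(\ell,m))$ and $\Aut(C^{\bA}(\ell,m))$ are obtained from their common group-of-the-projective-system by the \emph{same} recipe: adjoin the scalar matrices $F^\times$ and the field automorphisms $\Aut(F)$ (acting entrywise), together with $\tilde\ast_\ell$ when $m=2\ell$. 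Concretely, one repeats verbatim the construction preceding~\eqref{eq:mGAK}: let $\mG^\Omega$ be the subgroup of $\GL(\mH_0)$ generated by $\hat\rho(P_{m-\ell,\ell})$ and $F^\times$; the restriction $\varrho_1$ of $\varrho$ to $(P_{m-\ell,\ell}/\mu_\lambda)\times F^\times$ has image $\mG^\Omega$ and kernel $K$, so $\mG^\Omega \simeq ((P_{m-\ell,\ell}/\mu_\lambda)\times F^\times)/K \simeq \mG^A$. The identical argument with $\tilde\ast_\ell$ in the case $m=2\ell$ (using that $\tilde\ast_\ell$ preserves $\mH_0$, as noted after Definition~\ref{Pll2_def}) then yields $\MAut(C_\omg(\ell,m))\simeq\MAut(C^{\bA}(\ell,m))$, and adjoining $\Aut(F)$ gives $\Aut(C_\omg(\ell,m))\simeq\Aut(C^{\bA}(\ell,m))$, with the explicit description supplied by Theorem~\ref{thm:AutAff}.

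The one subtlety I would take care to address is that the isomorphism $\Aut(\omg)\simeq\Aut(W_0)$ of Theorem~\ref{aut_omg} is given by the \emph{restriction} map $\mathrm{res}_\omg$ (whose inverse is $\iota$), and one should check that this isomorphism is compatible with the lift to the $\gL$-level — i.e.\ that an element of $\gL(\wlv)$ representing $f\in\Aut(W_0)$ restricts, after projecting away the coordinate $p_{I_0}$ (or rather, after observing that such an element preserves $\mH_0$ and hence acts on it), to an element of $\gL(\mH_0)$ representing $\mathrm{res}_\omg(f)\in\Aut(\omg)$. This is essentially the computation $p_{I_0}(\wedge^\ell \hat g(\xi)) = \mu(p_{I_0}(\xi))\,\det(B)$ already carried out in the proof of Theorem~\ref{aut_omg}, showing that $\wedge^\ell g$ (and $\tilde\ast_\ell$) preserve $\mH_0$, so that the generators of $\Aut(C^{\bA}(\ell,m))$ inside $\gL(\wlv)$ all preserve $\mH_0$ and restrict faithfully. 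I do not expect any genuine obstacle here — this is the main (and only nontrivial) bookkeeping step, and it is routine given Theorem~\ref{aut_omg}; the rest is a mechanical transcription of the derivation of Theorem~\ref{thm:AutAff}.
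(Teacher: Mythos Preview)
Your proposal is correct and follows essentially the same route as the paper. The paper packages the second assertion slightly differently: rather than computing $\Aut(C_\omg(\ell,m))$ and $\Aut(C^{\bA}(\ell,m))$ separately and observing they have the same abstract description $((P_{m-\ell,\ell}/\mu_\lambda)\times F^\times)/K$ (etc.), it directly builds the restriction homomorphism $\jmath:\Aut(C^{\bA}(\ell,m))\to\Aut(C_\omg(\ell,m))$, $\hat f\mapsto \hat f|_{\hat\mH_0}$, and checks it is bijective. Surjectivity is the scalar-adjustment you implicitly use, and injectivity is exactly the ``restrict faithfully'' step you flag in your third paragraph (if $\hat f|_{\hat\mH_0}=\mathrm{id}$ then its image in $\Aut(\omg)$ is trivial, hence $\iota$ of it is trivial in $\Aut(W_0)$, hence $\hat f$ is a scalar, and that scalar must be $1$). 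So the one point you identify as needing care is indeed the crux, and your diagnosis that it is routine given Theorem~\ref{aut_omg} is accurate.
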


\begin{proof}
Since $\Aut(C_{\omg}(\ell,m))/F^{\times}$ is just $\Aut(\omg)$, the first assertion of the corollary follows from Theorem \ref{aut_omg}.  Let $\hat \mH_0$ be the hyperplane in $\wlv$ given by the vanishing of the  Pl\"{u}cker coordinate $p_{I_0}$ where $I_0:=(m-\ell+1, m-\ell+2, \dots,m)$. Let $p_2: \gL(\hat \mH_0) \to \PgL (\mH_0)$ and $p_1: \gL(\wlv) \to \PgL (\wlv)$ be the projection homomorphisms. The group $\Aut(C_{\omg}(\ell,m))$ is isomorphic to $p_2^{-1}(\Aut(\omg))$, whereas the group $\Aut(C^{\bA}(\ell,m))$ is isomorphic to $p_1^{-1}(\Aut(W_0))$. Given any $f \in \Aut(\omg)$, let $\hat f \in p_1^{-1}(\iota(f))$. Since $\iota(f)$ preserves $\mH_0$, we get a homomorphism  $\jmath: \Aut(C^{\bA}(\ell,m))  \to \Aut (C_{\omg}(\ell,m))$ obtained by restricting each $\hat f$ to $\hat \mH_0$. Now any  $g \in p_2^{-1}(f)$ is of the form $c \, \jmath(\hat f)$ for some non-zero scalar $c$. Since $c \, \jmath(\hat f) =  \jmath(c \, \hat f)$, it follows that $\jmath$ is surjective. Suppose $\hat f \in $ ker$(\jmath)$, then $f = p_2 \circ \jmath(\hat f)$ is the identity of $\Aut(\omg)$, and hence $\iota(f)$ is the identity of $\Aut(W_0)$, whence $\hat f$ is $c$ times the identity of $\gL(\wlv)$. Therefore, $\jmath( \hat f)$ is $c$ times the  identity of $\gL(\hat \mH_0)$ which implies $c = 1$, i.e., $\jmath$ is injective. Thus the homomorphism $\jmath$ is an isomorphism between $\Aut(C_{\omg}(\ell,m))$ and  $\Aut(C^{\bA}(\ell,m))$. Moreover, since the isomorphism $\jmath$   
simply maps elements of $\Aut(C^{\bA}(\ell,m))$ to their restrictions to the hyperplane $\hat \mH_0$, it follows that $\MAut(C_{\omg}(\ell,m))$ is isomorphic to $\MAut(C^{\bA}(\ell,m))$ as well.  \end{proof}

We remark that $\Aut(C(\ell,m))$  is transitive in the sense that given any pair $e_i \neq e_j$ of basic vectors of $F^n$, there is an element of $\Aut(C(\ell,m))$which sends $e_i$ to a scalar multiple of $e_j$. In contrast, $\Aut(C_{\omg}(\ell,m))$ is not transitive since it preserves the basic vectors representing $W_i$, for each $i = 1, \dots ,\ell$. As remarked before, in the case when the field $F$ is replaced by an algebraically closed field, this lack of transitivity is reflected in the fact that $\omg$ is not a smooth variety, and the automorphisms must preserve the smooth locus $W_1 \subset \omg$.

\section*{Acknowledgement}
We are grateful to Mark Pankov for his comments on an initial version of this article and for bringing \cite{Pankov}, \cite{Wan} and \cite{Praz} to our attention. 
Krishna Kaipa would like to thank the Department of Mathematics at IIT Bombay where this work was carried~out.
\bibliographystyle{plain}
\bibliography{refs_SRG}
\nocite{}
\end{document}